\providecommand{\U}[1]{\protect\rule{.1in}{.1in}}
\newtheorem{theorem}{Theorem}
\newtheorem{definition}[theorem]{Definition}
\newtheorem{example}[theorem]{Example}
\newtheorem{lemma}[theorem]{Lemma}
\newtheorem{proposition}[theorem]{Proposition}
\newenvironment{proof}[1][Proof]{\noindent\textbf{#1.} }{\ \rule{0.5em}{0.5em}}
\newcommand{\hide}[1]{}
\newcommand{\cupdot}{\mathbin{\mathaccent\cdot\cup}}
\newcommand{\bigcupdot}{\bigcup\mkern-12.5mu\cdot\mkern6mu}
\newcommand{\avg}{\rm{avg}}
\newcommand{\maj}{\rm{maj}}
\newcommand{\val}{\rm{val}}
\begin{document}

		\title{Weighted PCL over product valuation monoids}
		\author{Vagia Karyoti$^{a}$ and Paulina Paraponiari$^{b,}$\thanks{\protect\includegraphics[height=0.4cm]{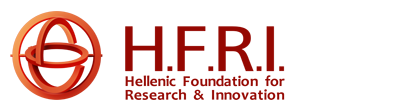}The research work was supported by the Hellenic Foundation for Research and Innovation (HFRI) under the HFRI PhD Fellowship grant (Fellowship Number: 1200).}	 \\Department of Mathematics\\Aristotle University of Thessaloniki\\54124 Thessaloniki, Greece\\ $^{a}$vagiakaryo@math.auth.gr, $^{b}$parapavl@math.auth.gr }
		
		\date{ }
		\maketitle

		\begin{abstract}
				We introduce a weighted propositional configuration logic over a product valuation monoid. Our logic is intended to serve as a specification language for software architectures with quantitative features such as the average of all interactions' costs of the architecture and the maximum cost among all costs occurring most frequently within a specific number of components in an architecture. We provide formulas of our logic which describe well-known architectures equipped with quantitative characteristics. Moreover, we prove an efficient construction of a full normal form which leads to decidability of equivalence of formulas in this logic. 
			\newline \newline
		\noindent 	\textbf{Keywords:} Software architectures, configuration logics, product valuation monoids, weighted configuration logics, quantitative features
		\end{abstract}

	\section{Introduction}
	
		Architectures are a critical issue in design and development of  complex software systems since they characterize coordination principles among the components of a system. Whenever the construction of a software system is based on a ``good'' architecture, then the system satisfies most of its functional and quality requirements. Well-defined architectures require a formal treatment in order to efficiently characterize their properties. A recent work towards this direction is \cite{Ma:Co}, where the authors introduced propositional configuration logic (PCL for short) which was proved sufficient enough to describe architectures: the meaning of every PCL formula is a configuration set, which intuitively represents permissible component connections, and every architecture can be represented by a configuration set on the collection of its components. Furthermore, the authors of \cite{Ma:Co} studied the relation among architectures and architecture styles, i.e., architectures with the same types of components and topologies.

	PCL is a specification logic of software architectures which is able to describe their qualitative features. However, several practical applications require also quantitative characteristics of architectures such as the cost of the interactions among the components of an architecture, the time needed, or the probability of the implementation of a concrete interaction.   For instance, several IoT and cloud applications, which are based on Publish/Subscribe architecture, require quantitative features \cite{Ol:AP,Pa:Pu,Ya:Pr}. Moreover, considering a set of components and an architecture style, there may occur several architectures where each of them has a specific amount of some resource (e.g. memory or energy consumption). In such a setting,  the most suitable architecture must be chosen, depending on the available resources or the performance. Generally, quantitative properties are essential for performance related properties and for resource-constrained systems.
	
	The authors in \cite{Pa:On} introduced and investigated a weighted PCL (wPCL for short) over a commutative semiring $(K, \oplus, \otimes ,0,1)$ which serves as a specification language for the study of software architectures with quantitative features such as the maximum cost of an architecture or the maximum priority of a component. Nevertheless, operations like average for response time or power consumption cannot be described within the algebraic structure of semirings. Such operations are important for practical applications and have been investigated for weighted automata in \cite{Quant_Lang,expres_henz,comp_avera}. In \cite{Dr:Reg,Dr:Av} the authors provided valuation monoids as a general algebraic framework, which describe several operations that cannot fit in the structure of semirings. More recently, in \cite{proba_nested} nested weighted automata have been considered under probabilistic semantics for expressing properties such as ``the long-run average resource consumption is below a threshold". Also, the authors in \cite{comp_avera} presented algorithms which are designed specifically for computing the average response time on graphs, game graphs, and Markov chains. 
	
	However, the aforementioned works have not been developed  for the setting of systems' architectures and therefore cannot express characteristics such as the average cost of an architecture or the maximum most frequent cost/priority that occurs in an architecture. In this paper, we tackle this problem by extending the work of \cite{Pa:On}. Specifically, we introduce and investigate a weighted PCL over product valuation monoids (w$_{\text{pvm}}$PCL for short) which is proved sufficient to serve as a specification language for software architectures with important quantitative features that are not covered in \cite{Pa:On}.
	
	The contributions of our work are the following. We introduce the syntax and semantics of w$_{\text{pvm}}$PCL. The semantics of w$_{\text{pvm}}$PCL formulas are polynomials with values in the product valuation monoid. Then, in our main result, we prove that for every w$_{\text{pvm}}$PCL formula we can effectively construct an equivalent one in full normal form, which is unique up to the equivalence relation. The second main result is the decidability of equivalence of w$_{\text{pvm}}$PCL formulas. Lastly, we describe in a strict logical way several well-known software architectures with quantitative characteristics.

	\section{Preliminaries}
	
	In this section, we recall valuation monoids and product valuation monoids \cite{Dr:Av}. A \emph{valuation monoid} $(D, \oplus, {\val}, 0)$ consists of a commutative monoid $(D, \oplus, 0)$ and a valuation function ${\val}: D^+ \rightarrow D$, where $D^+$ denotes the set of nonempty finite words over $D$,  with ${\val}(d) = d$ for all $d\in D$ and ${\val}(d_1, \dots, d_n) = 0$ whenever $d_i=0$ for some $i\in \{1,\dots ,n\}.$ 
	
	$(D, \oplus, {\val}, \otimes, 0, 1)$ is a \emph{product valuation monoid}, or \emph{pv-monoid} for short if $(D, \oplus, {\val}, 0)$ is a valuation monoid, $\otimes: D^2 \rightarrow D$ is a binary operation, $1\in D$ with ${\val}(1)_{1\leq i \leq n} = 1$ for all $n\geq 1$ and $0\otimes d = d \otimes 0 = 0$, $1\otimes d = d\otimes 1=d $ for all $d\in D$. The pv-monoid is denoted simply by $D$ if the operations and the constant elements are understood. A pv-monoid $D$ is \emph{left-$\oplus$-distributive} if $d\otimes (d_1 \oplus d_2) = (d\otimes d_1) \oplus (d\otimes d_2)$ for any $d,d_1, d_2 \in D.$ \emph{Right-$\oplus$-distributivity} is defined analogously. If a pv-monoid $D$ is both left- and right-$\oplus$-distributive, then it is \emph{$\oplus$-distributive}. If $\otimes$ is associative, then $D$ is called \emph{associative}. \hide{A pv-monoid D is called \emph{left-multiplicative} if $d\otimes {\val}(d_1, d_2, \dots, d_n) ={\val}(d\otimes d_1, d_2, \dots, d_n)$ for every $n\geq 1 $ and $d,d_i \in D $ with $i\in \{1, \dots, n\}.$}We call $D$ \emph{left-{\val}-distributive} if for all $n\geq 1$ and $d, d_i\in D$ with $i\in \{1, \dots, n\}$, it holds that $d\otimes {\val}(d_1, \dots, d_n) = {\val}(d\otimes d_1, \dots, d\otimes d_n).$ \hide{Moreover, $D$ is \emph{conditionally commutative} if for any $ n\geq1$ and any two sequences $(d_1,...,d_n) , (d_1^\prime,...,d_n^\prime)$ from $ D $ with $d_i \otimes d_j^\prime = d_j^\prime \otimes d_i $ for all $ 1\leq j <i \leq n $, we have 
	 ${\val}(d_1,...,d_n) \otimes {\val}(d_1^\prime,...,d_n^\prime)={\val}(d_1 \otimes d_1^\prime, ...,d_n \otimes d_n^\prime)$.}Moreover, the pv-monoid $D$ is called (additively) idempotent if $d\oplus d =d$ for every $d\in D$. 
 
 In the following we recall some pv-monoids from \cite{Dr:Av}. The algebraic structures $(\mathbb{R} \cup\{-\infty\}, \max, {\avg},$ $ +, -\infty, 0)$ and $(\mathbb{R}\cup\{+\infty\},$ $ \min, {\avg}, +, +\infty, 0)$ with ${\avg}(d_1, \dots, d_n) = \frac{1}{n} \sum_{i=1}^n d_i$ are pv-monoids. More precisely, they are $\oplus$-distributive and left-${\val}$-distributive pv-monoids. Also, the structure $(\mathbb{R}\cup \{ -\infty,+\infty \},$ $ \min, {\maj}, \max, +\infty, -\infty)$, where ${\maj}(d_1, \dots, d_n)$ is the greatest value among all values that occur most frequently among $d_1, \dots, d_n$, is a $\oplus$-distributive pv-monoid but not left-${\val}$-distributive. Both ${\avg}$ and ${\maj}$ are symmetric functions, i.e., the value of the function given $n$ arguments is the same no matter the order of the arguments. Moreover, the pv-monoids mentioned before are idempotent.
 \begin{quote}
 \emph{Throughout the paper $(D, \oplus, {\val}, \otimes, 0,1)$ will denote an idempotent pv-monoid where ${\val}$ is symmetric. }
 \end{quote}

	Let $Q$ be a set. A \emph{formal series} (or simply \emph{series}) \emph{over}
	$Q$ \emph{and} $D$ is a mapping $s:Q\rightarrow D$. The \emph{support of} $s$ is the set $\mathrm{supp}(s)=\{q \in Q \mid s(q) \neq 0  \}$. A series with finite support is called also a \emph{polynomial}. We denote by $D\left\langle  Q
	\right\rangle $ the class of all polynomials over $Q$ and $D$.

\section{Weighted propositional interaction logic}
  In this section, we introduce the weighted propositional interaction logic over pv-monoids. Firstly, we recall from \cite{Ma:Co} the propositional interaction logic.
  
  Let $P$ be a nonempty finite set of \emph{ports}. We let $I(P)=\mathcal{P}(P) \backslash \{ \emptyset \}$, where $\mathcal{P}(P)$ denotes the power set of $P$. Every set $\alpha \in I(P)$ is called an \emph{interaction.} The syntax of \textit{propositional interaction logic} (PIL for short) formulas over $P$ is given by the grammar 
  \[ \phi :: = true \mid p \mid \overline{\phi} \mid \phi \vee \phi \]
  where $p\in P$. As usual, we set $\overline{\overline{\phi}} = \phi$ for every PIL formula $\phi$ and $false=\overline{true}.$ Hence, the conjunction of two PIL formulas $\phi, \phi^\prime$ is defined by $\phi \wedge \phi^\prime = \overline{\left(\overline{\phi} \vee \overline{\phi^\prime}\right)}$. A PIL formula of the form $p_1\wedge \dots \wedge p_n$ with $n>0$, and $p_i\in P$ or $p_i=\overline{p_i^\prime}$ with $p_i^\prime \in P$ for every $1\leq i \leq n$, is called a \textit{monomial}. For simplicity we denote a monomial $p_1\wedge \dots \wedge p_n$ by $p_1\dots p_n.$ Monomials of the form $\bigwedge_{p\in P_+} p \wedge \bigwedge_{p\in P_{-}} \overline{p}$ with $P_+ \cup P_{-} = P$ and $P_+\cap P_{-} = \emptyset$ are called \textit{full monomials.}

  Let $\phi$ be a PIL formula and $\alpha$ an interaction. We define the satisfaction relation $\alpha \models_i \phi$ by induction on the structure of $\phi$ as follows:
  
  \begin{tabular}{l l}
  	- $\alpha \models_i true,$ & \hspace*{2cm}- $\alpha \models_i \overline{\phi} $ \hspace*{0.4cm}iff \hspace*{0.1cm} $\alpha \not \models_i \phi$, \\ - $\alpha \models_i p $\hspace*{0.4cm} iff\hspace*{0.1cm} $p\in \alpha,$ & \hspace*{2cm}- $\alpha \models_i \phi_1 \vee \phi_2 $\hspace*{0.3cm} iff \hspace*{0.1cm}$\alpha \models_i \phi_1$ or $\alpha \models_i \phi_2$.
  \end{tabular}
\medskip
  
 For every $\alpha \in I(P)$ it holds $\alpha \not \models_i false.$ Moreover, for every interaction $\alpha\in I(P)$ we define its characteristic monomial $m_{\alpha} = \bigwedge_{p\in \alpha} p \wedge \bigwedge_{p\not \in \alpha} \overline{p}.$ A characteristic monomial $m_{\alpha}$ is actually a full monomial that formalises the interaction $\alpha$. Then, for every $\alpha^\prime \in I(P)$ we trivially get $\alpha^\prime \models_i m_{\alpha}$ iff $\alpha^\prime = \alpha.$
 
 \begin{quote}
 	\emph{Throughout the paper $P$ will denote a nonempty finite set of ports.} 
 \end{quote}
 
  \begin{definition}
  	Let $D$ be a pv-monoid. Then, the syntax of formulas of weighted \emph{PIL} (\emph{w$_{\text{pvm}}$PIL} for short) over $P$ and $D$ is given by the grammar
  	\[ \varphi ::= d \mid \phi \mid \varphi \oplus \varphi  \mid \varphi \otimes \varphi   \]
  	where $d\in D$ and $\phi$ denotes a \emph{PIL} formula over P.
  \end{definition}
  
  We denote by $PIL(D,P)$ the set of all w$_{\text{pvm}}$PIL formulas over $P$ and $D$. Next, we present the semantics of formulas $\varphi \in PIL(D,P)$ as polynomials $\left\Vert \varphi \right\Vert \in  D
  \left\langle I(P) \right\rangle $. For the semantics of PIL formulas $\phi$ over $P$ we use the satisfaction relation as defined above. Hence, the semantics of PIL formulas $\phi$ gets only the values $0$ and $1$.
  
 \begin{definition}
 	Let $\varphi \in PIL(D,P)$. The semantics of $\varphi$ is a polynomial  $\left\Vert \varphi\right\Vert \in D
 	\left\langle I(P) \right\rangle $. For every $\alpha \in I(P)$ the value $\left\Vert \varphi \right\Vert (\alpha)$ is defined inductively on the structure of $\varphi$ as follows:
 	
 	\begin{tabular}{l l}
 		- $\left\Vert d\right\Vert (\alpha) = d,$ & \hspace*{1cm}- $\left\Vert \varphi_1 \oplus \varphi_2 \right\Vert (\alpha) = \left\Vert \varphi_1\right\Vert (\alpha) \oplus \left\Vert \varphi_2\right\Vert (\alpha) ,$ \\ - $\left\Vert \phi \right\Vert (\alpha)  =  \left\{ \begin{array}{l l}
 		1 & \text{if } \alpha\models_i \phi \\
 		0 & \text{otherwise}
 		\end{array}  \right. ,  $ & \hspace*{1cm}- $\left\Vert \varphi_1 \otimes \varphi_2 \right\Vert (\alpha) = \left\Vert \varphi_1\right\Vert (\alpha) \otimes \left\Vert \varphi_2\right\Vert (\alpha) $.
 	\end{tabular}

 \end{definition}

\section{Weighted propositional configuration logic}
In this section, we introduce and investigate the weighted propositional configuration logic over pv-monoids. But first, we recall the propositional configuration logic (PCL for short) from \cite{Ma:Co}. The syntax of PCL formulas over $P$ is given by the grammar
\[ f::= true \mid \phi \mid \neg f \mid f\sqcup f \mid f+f \] 
where $\phi$ denotes a PIL formula over $P.$ The operators $\neg$, $\sqcup$, and $+$ are called \textit{complementation}, \textit{union}, and \textit{coalescing}, respectively. The \textit{intersection} $\sqcap$ and \textit{implication} $\implies$ operators are defined, respectively, as follows:

\begin{tabular}{l l}
	-  $f_1 \sqcap f_2 := \neg  \left( \neg f_1 \sqcup \neg f_2 \right) $, & \hspace*{1cm}-  $f_1 \implies f_2 := \neg f_1 \sqcup f_2 $.
\end{tabular}
\medskip

We let $C(P)= \mathcal{P}(I(P)) \backslash \{ \emptyset \} $. For every PCL formula $f$ and $\gamma \in C(P)$ the satisfaction relation $\gamma\models f$ is defined inductively on the structure of $f$ as follows:

\begin{tabular}{l l}
	-  $\gamma \models true,$ & \\ 
	-  $ \gamma \models \phi$ & iff $\alpha\models_i \phi$ for every $\alpha \in \gamma,$ \\ 
	-  $\gamma \models \neg f$ & iff $\gamma \not \models f,$ \\ 
	-  $\gamma \models f_1 \sqcup f_2$ & iff $\gamma\models f_1$ or $\gamma\models f_2$, \\
	- $\gamma \models f_1 + f_2$ & iff  there exist $\gamma_1, \gamma_2 \in C(P)$ such that $\gamma=\gamma_1 \cup \gamma_2, $ and $\gamma_1 \models f_1$ and $\gamma_2 \models f_2$.
\end{tabular}

\medskip

\noindent We define the \textit{closure} $\sim f$ of every PCL formula $f$ by
\begin{itemize}
	\item[-] $\sim f := f+true.$ 
\end{itemize}

\noindent Two PCL formulas $f, f^\prime$ are called \textit{equivalent}, and we denote it by $f\equiv f^\prime $, whenever $\gamma \models f$ iff $\gamma \models f^\prime$ for every $\gamma \in C(P)$. We refer the reader to \cite{Ma:Co} and \cite{Pa:On} for properties of PCL formulas.

Next, we introduce our weighted PCL over pv-monoids.

	\begin{definition}
		Let $D$ be a pv-monoid. Then, the syntax of formulas of the weighted \emph{PCL} (\emph{w$_{\text{pvm}}$PCL} for short) over P and D is given by the grammar 
		\[
		\zeta ::= d\mid f \mid \zeta \oplus \zeta \mid \zeta \otimes \zeta \mid \zeta \uplus \zeta \mid \ast \zeta 
		\]
		where $d\in D$, $f$ denotes a \emph{PCL} formula over $P$, and $\uplus$ denotes the coalescing operator among \emph{w$_{\text{pvm}}$PCL} formulas. The operator $\ast$ is called valuation operator. 
	\end{definition}

 We denote by \emph{PCL(D,P)} the set of all w$_{\text{pvm}}$PCL formulas over \emph{P} and \emph{D}. We present the semantics of formulas $\zeta \in PCL(D,P)$ as polynomials $ \left\Vert \zeta \right\Vert \in D\left\langle C(P)\right\rangle$. For the semantics of PCL formulas we use the satisfaction relation as defined previously. 

\begin{definition}
	Let $\zeta \in PCL(D,P)$. The semantics of $\zeta$ is a polynomial $\left\Vert \zeta \right\Vert \in D\left\langle C(P)\right\rangle$ where for every $\gamma \in C(P)$ the value $\left\Vert \zeta \right\Vert (\gamma) $ is defined inductively as follows:
	\begin{itemize}
		\item[-] $\left\Vert d \right\Vert (\gamma) = d$,
		\item[-] $\left\Vert f\right\Vert (\gamma)=\left\{
		\begin{array}
		[c]{rl}%
		1 & \textnormal{ if }\gamma\models f\\
		0 & \textnormal{ otherwise}%
		\end{array}
		,\right.  $
		\item[-] $\left\Vert \zeta_1 \oplus \zeta_2\right\Vert (\gamma) = \left\Vert \zeta_1\right\Vert (\gamma) \oplus \left\Vert \zeta_2\right\Vert(\gamma)  $,
		\item[-] $\left\Vert \zeta_1 \otimes \zeta_2\right\Vert (\gamma) = \left\Vert \zeta_1\right\Vert (\gamma) \otimes \left\Vert \zeta_2\right\Vert(\gamma)  $,
		\item[-] $\left\Vert \zeta_1 \uplus \zeta_2\right\Vert(\gamma) = \bigoplus_{\gamma_1\cupdot \gamma_2 = \gamma} \left( \left\Vert\zeta_1\right\Vert(\gamma_1) \otimes \left\Vert\zeta_2\right\Vert(\gamma_2) \right)$, 
		\item[-] $  \left\Vert \ast \zeta \right\Vert (\gamma) = \bigoplus_{n>0} \bigoplus_{\bigcupdot_{i=1}^{n}\gamma_i =\gamma   }{\val} \left( \left\Vert\zeta \right\Vert (\gamma_1) , \dots, \left\Vert\zeta \right\Vert (\gamma_n)  \right)   $
		 \end{itemize}
	 where $\cupdot$ denotes the pairwise disjoint union of the sets $\gamma_1, \dots, \gamma_n$ for every $n>0.$ \hide{Let $n>0$ and the sets $\gamma_i, \gamma_i^\prime \in C(P)$ where $i\in \{1, \dots, n\}$ such that $\bigcupdot_{i=1}^n \gamma_i = \bigcupdot_{i=1}^n \gamma_i^\prime = \gamma$. Then, since ${\val}$ is a symmetric function and $D$ is idempotent, there exists $i\in \{1, \dots, n\}$ such that $\gamma_i \not = \gamma_j^\prime$ for every $j\in \{1, \dots, n\}$.}
\end{definition}

It is important to note here that since the semantics of every w$_{\text{pvm}}$PCL formula is defined on $C(P)$, the sets $\gamma_1$ and $\gamma_2$ in $\left\Vert \zeta_1 \uplus \zeta_2\right\Vert (\gamma) $ and the sets $\gamma_1, \dots, \gamma_n$ in $\left\Vert \ast \zeta \right\Vert (\gamma)$  are nonempty. Trivially in $\left\Vert \ast \zeta \right\Vert(\gamma) $, the maximum value of $n$ is $|\gamma|$, i.e., the cardinality of $\gamma$. Hence, $\left\Vert \ast \zeta \right\Vert (\gamma) = \bigoplus_{n\in \{1, \dots, |\gamma| \}} \bigoplus_{\bigcupdot_{i=1}^{n}\gamma_i =\gamma   }{\val} \left( \left\Vert\zeta \right\Vert (\gamma_1) , \dots, \left\Vert\zeta \right\Vert (\gamma_n)  \right)   $. Moreover, in $\left\Vert\ast \zeta \right\Vert (\gamma)$, let the sets $\gamma_i \in C(P)$ where $i\in \{1, \dots,n\}$ and $\bigcupdot_{i=1}^n \gamma_i = \gamma$. Consider $(i_1, \dots, i_n)$ be a permutation of $(1, \dots,n)$. Then ${\val} (\left\Vert\zeta \right\Vert(\gamma_1), \dots, \left\Vert\zeta \right\Vert(\gamma_n) ) = {\val} (\left\Vert\zeta \right\Vert(\gamma_{i_1}), \dots, \left\Vert\zeta \right\Vert(\gamma_{i_n}) ) $. Hence, ${\val} (\left\Vert\zeta \right\Vert(\gamma_1), \dots, \left\Vert\zeta \right\Vert(\gamma_n) ) \oplus $ \newline ${\val} (\left\Vert\zeta \right\Vert(\gamma_{i_1}), \dots, \left\Vert\zeta \right\Vert(\gamma_{i_n}) )   = $ $ {\val} (\left\Vert\zeta \right\Vert(\gamma_1), $ $\dots, \left\Vert\zeta \right\Vert(\gamma_n) )$ by the idempotency of $D$. Therefore, for every analysis of $\gamma=\bigcupdot_{i=1}^n \gamma_i$, the value ${\val} \left( \left\Vert \zeta \right\Vert(\gamma_1), \dots, \left\Vert \zeta \right\Vert(\gamma_n) \right)$ in $\left\Vert \ast \zeta \right\Vert(\gamma) $ is computed only once. 

Two w$_{\text{pvm}}$PCL formulas $\zeta_1, \zeta_2$ are called equivalent, and we write $\zeta_1 \equiv \zeta_2$, whenever $\left\Vert \zeta_1 \right\Vert (\gamma)= \left\Vert \zeta_2 \right\Vert(\gamma) $ for every $\gamma\in C(P)$. The \textit{closure} $\sim \zeta$ of every w$_{\text{pvm}}$PCL formula $\zeta \in PCL(D,P)$ is determined by:
\begin{itemize}
	\item[-] $ \sim \zeta := \zeta \oplus (\zeta \uplus 1).$
\end{itemize}

\begin{lemma}\label{clos_form} Let $\zeta \in PCL(D,P)$. Then 
	$$\left\Vert {\sim}\zeta\right\Vert (\gamma)=\bigoplus
	\nolimits_{\gamma^{\prime}\subseteq\gamma}\left\Vert \zeta\right\Vert
	(\gamma^{\prime})$$
	for every $\gamma \in C(P)$.
\end{lemma}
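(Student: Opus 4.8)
The plan is to unfold the definition of the closure and then simplify the coalescing term using the unit $1$. Since $\sim\zeta := \zeta \oplus (\zeta \uplus 1)$, the additive clause of the semantics immediately gives $\left\Vert \sim\zeta \right\Vert(\gamma) = \left\Vert \zeta \right\Vert(\gamma) \oplus \left\Vert \zeta \uplus 1 \right\Vert(\gamma)$ for every $\gamma \in C(P)$, so the whole argument reduces to evaluating the second summand.

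For that summand I would apply the semantics of $\uplus$ to write $\left\Vert \zeta \uplus 1 \right\Vert(\gamma) = \bigoplus_{\gamma_1 \cupdot \gamma_2 = \gamma}\left( \left\Vert \zeta \right\Vert(\gamma_1) \otimes \left\Vert 1 \right\Vert(\gamma_2)\right)$. Because $\left\Vert 1 \right\Vert(\gamma_2) = 1$ and $d \otimes 1 = d$ for every $d \in D$, each summand collapses to $\left\Vert \zeta \right\Vert(\gamma_1)$; thus the term depends only on the first block of the decomposition.

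The key combinatorial step is to re-index this sum. As emphasised in the discussion preceding the lemma, the blocks $\gamma_1, \gamma_2$ range over nonempty sets with $\gamma_1 \cap \gamma_2 = \emptyset$ and $\gamma_1 \cup \gamma_2 = \gamma$; the assignment $(\gamma_1, \gamma_2) \mapsto \gamma_1$ is therefore a bijection between such ordered decompositions and the nonempty proper subsets $\gamma_1 \subsetneq \gamma$, the second block being forced to equal $\gamma \setminus \gamma_1$ (which is nonempty exactly when $\gamma_1 \neq \gamma$). This yields $\left\Vert \zeta \uplus 1 \right\Vert(\gamma) = \bigoplus_{\emptyset \neq \gamma' \subsetneq \gamma} \left\Vert \zeta \right\Vert(\gamma')$.

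Finally I would fold the remaining summand back in. The term $\left\Vert \zeta \right\Vert(\gamma)$ accounts precisely for the one subset missing from the previous sum, namely $\gamma' = \gamma$, so combining the two extends the range from proper subsets to all nonempty subsets of $\gamma$. Since every element of $C(P)$ is by definition nonempty, $\{\gamma' \in C(P) : \gamma' \subseteq \gamma\}$ is exactly the collection of nonempty subsets of $\gamma$, and we obtain $\left\Vert \sim\zeta \right\Vert(\gamma) = \bigoplus_{\gamma' \subseteq \gamma} \left\Vert \zeta \right\Vert(\gamma')$, as claimed. I expect the only delicate point to be this bookkeeping: one must respect the nonemptiness constraints imposed by working inside $C(P)$ and check that each nonempty subset of $\gamma$ is counted exactly once across the two summands; in particular no appeal to the idempotency of $\oplus$ is needed here.
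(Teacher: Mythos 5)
Your proposal is correct and follows essentially the same route as the paper's proof: unfold $\sim\zeta = \zeta \oplus (\zeta \uplus 1)$, collapse $\left\Vert \zeta \uplus 1\right\Vert(\gamma)$ to $\bigoplus_{\gamma' \varsubsetneq \gamma}\left\Vert \zeta\right\Vert(\gamma')$ using the unit law and the fact that the second block is determined by the first, and absorb the remaining summand as the case $\gamma' = \gamma$. Your extra bookkeeping about nonemptiness and the observation that idempotency of $\oplus$ is not needed are both accurate refinements of the paper's terser argument.
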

\begin{proof}
	We compute
	\begin{align*}
	\left\Vert{\sim}\zeta\right\Vert (\gamma) &= \left\Vert \zeta \oplus (\zeta\uplus 1)\right\Vert (\gamma) 
	=\left\Vert \zeta  \right\Vert(\gamma) \oplus \left\Vert\zeta\uplus 1 \right\Vert (\gamma) \\
	& = \left\Vert \zeta  \right\Vert (\gamma)\oplus \left( \bigoplus\nolimits_{\gamma=\gamma'\cupdot\gamma''}\left(  \left\Vert
	\zeta\right\Vert (\gamma')\otimes\left\Vert 1\right\Vert
	(\gamma'')\right)  \right) \\
	& = \left\Vert \zeta  \right\Vert (\gamma)\oplus  \bigoplus\nolimits_{\gamma' \varsubsetneq \gamma} \left\Vert
	\zeta\right\Vert (\gamma') \\
	&= \bigoplus\nolimits_{\gamma' \subseteq \gamma} \left\Vert
	\zeta\right\Vert (\gamma')
	\end{align*}
	for every $\gamma \in C(P)$, where the fourth equality holds since $\gamma'$ and $\gamma''$ are disjoint.
\end{proof}

\medskip

Next, we present several properties of our w$_{\text{pvm}}$PCL formulas.

\begin{proposition}
	Let $\zeta, \zeta_1, \zeta_2, \zeta_3 \in PCL(D,P)$ and $d\in D$. Then  
	\begin{enumerate}[(i)]
		\item $\zeta\uplus 0 \equiv 0 \equiv 0\uplus \zeta. $  
		
	\end{enumerate}
	
\noindent If $\otimes$ is commutative, then  

\begin{itemize}
	\item[(ii)] $\zeta_1 \uplus \zeta_2 \equiv \zeta_2 \uplus \zeta_1.$ 
\end{itemize}

\noindent If $D$ is associative and $\oplus$-distributive, then

\begin{itemize}
	\item[(iii)] $ (\zeta_1 \uplus \zeta_2) \uplus \zeta_3 \equiv \zeta_1 \uplus (\zeta_2 \uplus \zeta_3) .$
\end{itemize}

\noindent If $D$ is left-$\oplus$-distributive, then 
\begin{itemize}
	\item[(iv)] $\zeta\otimes(\zeta_1\oplus\zeta_2) \equiv (\zeta\otimes\zeta_1)\oplus(\zeta\otimes\zeta_2).$
\end{itemize} 

\noindent If $D$ is right-$\oplus$-distributive, then 
\begin{itemize}
	\item[(v)] $(\zeta_1\oplus\zeta_2) \otimes \zeta \equiv (\zeta_1\otimes\zeta)\oplus(\zeta_2\otimes\zeta).$
\end{itemize}

\end{proposition}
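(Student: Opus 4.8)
The plan is to verify each of the five equivalences pointwise: I fix an arbitrary $\gamma \in C(P)$, unfold the semantics $\left\Vert\cdot\right\Vert$ of both sides according to the inductive definition, and reduce the claimed identity of polynomials to a corresponding algebraic identity holding in the pv-monoid $D$ at the point $\gamma$. Under this scheme (i), (ii), (iv), and (v) are short, and the real work is concentrated in the associativity law (iii).

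For (i), I would compute $\left\Vert \zeta \uplus 0\right\Vert(\gamma) = \bigoplus_{\gamma_1 \cupdot \gamma_2 = \gamma}(\left\Vert\zeta\right\Vert(\gamma_1) \otimes \left\Vert 0\right\Vert(\gamma_2))$ and note that $\left\Vert 0\right\Vert(\gamma_2)=0$ for every $\gamma_2$, so that $d\otimes 0 = 0$ makes each summand equal to $0$; since $0$ is the neutral element of $\oplus$, the whole sum collapses to $0 = \left\Vert 0\right\Vert(\gamma)$, and the mirror identity $0\uplus\zeta\equiv 0$ follows from $0\otimes d=0$. For (ii), the family of ordered decompositions $\gamma=\gamma_1\cupdot\gamma_2$ coincides, after swapping the two blocks, with the family $\gamma=\gamma_2\cupdot\gamma_1$, so commutativity of $\otimes$ lets me match the summands of $\left\Vert\zeta_1\uplus\zeta_2\right\Vert(\gamma)$ and $\left\Vert\zeta_2\uplus\zeta_1\right\Vert(\gamma)$ term by term. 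Parts (iv) and (v) are the most immediate, since $\uplus$ does not occur: I only need $\left\Vert\zeta\otimes(\zeta_1\oplus\zeta_2)\right\Vert(\gamma)=\left\Vert\zeta\right\Vert(\gamma)\otimes(\left\Vert\zeta_1\right\Vert(\gamma)\oplus\left\Vert\zeta_2\right\Vert(\gamma))$ and then apply left-$\oplus$-distributivity of $D$ directly at $\gamma$, with (v) being identical via right-$\oplus$-distributivity.

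The main obstacle is (iii). I would expand the left-hand side $\left\Vert(\zeta_1\uplus\zeta_2)\uplus\zeta_3\right\Vert(\gamma)$ as a sum over decompositions $\gamma=\delta\cupdot\gamma_3$ of the product $\left\Vert\zeta_1\uplus\zeta_2\right\Vert(\delta)\otimes\left\Vert\zeta_3\right\Vert(\gamma_3)$, and then substitute $\left\Vert\zeta_1\uplus\zeta_2\right\Vert(\delta)=\bigoplus_{\gamma_1\cupdot\gamma_2=\delta}(\left\Vert\zeta_1\right\Vert(\gamma_1)\otimes\left\Vert\zeta_2\right\Vert(\gamma_2))$. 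Using right-$\oplus$-distributivity to push the factor $\left\Vert\zeta_3\right\Vert(\gamma_3)$ inside this inner sum, and associativity of $\otimes$ to drop parentheses, the two nested sums merge into a single sum $\bigoplus_{\gamma_1\cupdot\gamma_2\cupdot\gamma_3=\gamma}(\left\Vert\zeta_1\right\Vert(\gamma_1)\otimes\left\Vert\zeta_2\right\Vert(\gamma_2)\otimes\left\Vert\zeta_3\right\Vert(\gamma_3))$ indexed by ordered decompositions of $\gamma$ into three nonempty blocks. Expanding the right-hand side analogously — now using left-$\oplus$-distributivity to push $\left\Vert\zeta_1\right\Vert(\gamma_1)$ into the inner sum $\left\Vert\zeta_2\uplus\zeta_3\right\Vert(\epsilon)$ — yields the very same triple sum.

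The point that must be justified carefully in (iii) is that the two-stage groupings $(\gamma_1\cupdot\gamma_2)\cupdot\gamma_3$ and $\gamma_1\cupdot(\gamma_2\cupdot\gamma_3)$ range over exactly the same family of ordered triples of pairwise disjoint nonempty subsets whose union is $\gamma$. Because every block is required to lie in $C(P)$ and is therefore nonempty, no spurious empty-block decompositions can appear on either side, so the two flattened sums are literally identical rather than merely equal up to rearrangement. This is precisely why $\oplus$-distributivity on both sides, together with associativity of $\otimes$, is exactly the hypothesis needed, and it is the only step where the combinatorial bookkeeping of the decompositions is nontrivial.
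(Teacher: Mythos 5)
Your proposal is correct and follows essentially the same route as the paper: pointwise unfolding of the semantics at an arbitrary $\gamma \in C(P)$, with (i), (ii), (iv), (v) reducing immediately to the corresponding algebraic identities in $D$, and (iii) handled by merging the two nested sums into a single sum over ordered triples of pairwise disjoint nonempty blocks via $\oplus$-distributivity and associativity of $\otimes$. Your explicit remark that both groupings in (iii) enumerate exactly the same family of triple decompositions (with no empty blocks, since every block lies in $C(P)$) is a point the paper leaves implicit, but it does not change the argument.
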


\begin{proof}
	For every $\gamma \in C(P)$ we have 
\begin{enumerate}[(i)]
	\item \begin{align*}
	\left\Vert \zeta \uplus 0 \right\Vert (\gamma) & = \bigoplus_{\gamma_1 \cupdot \gamma_2 = \gamma }  \left( \left\Vert \zeta \right\Vert (\gamma_1) \otimes \left\Vert 0 \right\Vert (\gamma_2) \right) \\ & = 0 \\ & = \left\Vert 0 \uplus \zeta  \right\Vert (\gamma).
	\end{align*}

	\item 
	\begin{align*}
	\left\Vert \zeta_1 \uplus \zeta_2 \right\Vert (\gamma) & = \bigoplus_{\gamma_1 \cupdot \gamma_2 = \gamma} (\left\Vert \zeta_1\right\Vert (\gamma_1) \otimes \left\Vert \zeta_2 \right\Vert(\gamma_2)) \\ & = \bigoplus_{ \gamma_1 \cupdot \gamma_2 = \gamma} (\left\Vert \zeta_2 \right\Vert(\gamma_2) \otimes \left\Vert \zeta_1\right\Vert (\gamma_1) ) \\ & = \left\Vert \zeta_2 \uplus \zeta_1 \right\Vert(\gamma) ,
	\end{align*}
	where the second equality holds since $\otimes$ is commutative.
	\item \begin{align*}
	\left\Vert (\zeta_1 \uplus \zeta_2) \uplus \zeta_3 \right\Vert (\gamma) & = \bigoplus_{ \gamma^\prime \cupdot \gamma_3 = \gamma } \left(  \left\Vert \zeta_1 \uplus \zeta_2 \right\Vert(\gamma^\prime) \otimes \left\Vert \zeta_3 \right\Vert(\gamma_3)  \right)  \\ & = \bigoplus_{ \gamma^\prime \cupdot \gamma_3 = \gamma } \left( \left(  \bigoplus_{\gamma_1 \cupdot \gamma_2 = \gamma^\prime  } (\left\Vert \zeta_1\right\Vert(\gamma_1) \otimes \left\Vert \zeta_2 \right\Vert (\gamma_2) )  \right) \otimes \left\Vert \zeta_3 \right\Vert(\gamma_3)  \right)  \\ & = \bigoplus_{ \gamma_1 \cupdot \gamma_2  \cupdot \gamma_3 = \gamma } \left(  (\left\Vert \zeta_1\right\Vert(\gamma_1) \otimes \left\Vert \zeta_2 \right\Vert (\gamma_2) )   \otimes \left\Vert \zeta_3 \right\Vert(\gamma_3)  \right)  \\ & = \bigoplus_{ \gamma_1 \cupdot \gamma_2  \cupdot \gamma_3 = \gamma } \left(  \left\Vert \zeta_1\right\Vert(\gamma_1) \otimes \left( \left\Vert \zeta_2 \right\Vert (\gamma_2)    \otimes \left\Vert \zeta_3  \right\Vert(\gamma_3) \right) \right) \\ & = \bigoplus_{ \gamma_1 \cupdot \gamma^\prime =\gamma  } \left\Vert \zeta_1 \right\Vert(\gamma_1) \otimes \left(  \bigoplus_{ \gamma_2 \cupdot \gamma_3 = \gamma^\prime }  \left(  \left\Vert \zeta_2 \right\Vert (\gamma_2) \otimes \left\Vert \zeta_3 \right\Vert(\gamma_3) \right) \right)  \\ & = \bigoplus_{ \gamma_1 \cupdot \gamma^\prime =\gamma  } \left\Vert \zeta_1 \right\Vert(\gamma_1) \otimes  \left\Vert \zeta_2 \uplus \zeta_3 \right\Vert (\gamma^\prime)  \\ & = \left\Vert \zeta_1 \uplus (\zeta_2 \uplus \zeta_3) \right\Vert(\gamma)
	\end{align*}

	\noindent where the third and fifth equality hold since $D$ is $\oplus$-distributive and the fourth one since $D$ is associative.

	\item \begin{align*}
	\left\Vert \zeta \otimes (\zeta_1 \oplus \zeta_2) \right\Vert (\gamma) & = \left\Vert \zeta \right\Vert (\gamma) \otimes \left\Vert \zeta_1 \oplus \zeta_2 \right\Vert (\gamma) \\ & =  \left\Vert \zeta \right\Vert (\gamma) \otimes \left( \left\Vert \zeta_1 \right\Vert(\gamma) \oplus \left\Vert  \zeta_2 \right\Vert (\gamma)  \right) \\ & = \left( \left\Vert \zeta \right\Vert (\gamma) \otimes \left\Vert \zeta_1 \right\Vert(\gamma) \right) \oplus \left( \left\Vert \zeta \right\Vert (\gamma) \otimes \left\Vert \zeta_2 \right\Vert(\gamma) \right) \\ & = \left\Vert\zeta \otimes \zeta_1 \right\Vert(\gamma) \oplus \left\Vert\zeta \otimes \zeta_2 \right\Vert(\gamma) \\ & = \left\Vert \left( \zeta \otimes \zeta_1 \right) \oplus \left( \zeta \otimes \zeta_2 \right) \right\Vert(\gamma), \end{align*}
	where the third equality holds since $D$ is left-$\oplus$-distributive. 
\item The proof is similar to the one of $(iv)$.\end{enumerate} \end{proof}

\medskip
\begin{proposition}
	Let $\zeta \in PCL(D,P)$ with $\zeta=d \in D$. If $D$ left-${\val}$-distributive, then \[ *\zeta  \equiv d.\]  
\end{proposition}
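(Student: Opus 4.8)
The plan is to unfold the definition of $\left\Vert *d\right\Vert(\gamma)$ and reduce everything to the single identity $\val(d,\dots,d)=d$ (for an arbitrary number of copies of $d$), after which idempotency of $D$ finishes the argument. First I would fix an arbitrary $\gamma\in C(P)$ and observe that since $\left\Vert d\right\Vert(\gamma_i)=d$ for every $\gamma_i\subseteq\gamma$, the semantics of the valuation operator specialises to
\[
\left\Vert *d\right\Vert(\gamma)=\bigoplus_{n>0}\ \bigoplus_{\bigcupdot_{i=1}^{n}\gamma_i=\gamma}\val(\underbrace{d,\dots,d}_{n}).
\]

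The crux of the proof is the claim that $\val(d,\dots,d)=d$ for every number $n\geq 1$ of arguments. To establish it I would invoke left-$\val$-distributivity with all arguments equal to $1$, which yields $d\otimes\val(1,\dots,1)=\val(d\otimes 1,\dots,d\otimes 1)$. The right-hand side equals $\val(d,\dots,d)$ because $d\otimes 1=d$, while the left-hand side equals $d\otimes 1=d$ because $\val(1,\dots,1)=1$ by the defining property of the unit $1$ in a pv-monoid. Hence $\val(d,\dots,d)=d$, independently of $n$.

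Substituting this identity back, every inner summand equals $d$, so
\[
\left\Vert *d\right\Vert(\gamma)=\bigoplus_{n>0}\ \bigoplus_{\bigcupdot_{i=1}^{n}\gamma_i=\gamma}d.
\]
This sum is nonempty, since for instance $n=1$ with $\gamma_1=\gamma$ is an admissible decomposition, so idempotency of $D$ collapses it to $d=\left\Vert d\right\Vert(\gamma)$, which is exactly what is required.

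The only genuinely non-routine step is recognising how left-$\val$-distributivity should be applied: the trick is to feed it the constant sequence $1,\dots,1$ and combine this with the axiom $\val(1,\dots,1)=1$, so that both the neutrality of $1$ for $\otimes$ and the special behaviour of $\val$ on the unit come into play simultaneously. Everything else is unfolding the definition of the valuation operator and a single appeal to idempotency.
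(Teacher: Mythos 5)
Your proposal is correct and follows essentially the same route as the paper: both unfold the semantics of $\ast$, reduce the key step to showing ${\val}(d,\dots,d)=d$ via left-${\val}$-distributivity applied to the constant sequence $1,\dots,1$ together with ${\val}(1,\dots,1)=1$ and $d\otimes 1=d$, and then collapse the remaining finite $\oplus$-sum of copies of $d$ by idempotency. The only cosmetic difference is that you apply idempotency once to the whole double sum, whereas the paper collapses the decompositions for each fixed $n$ first and then sums over $n$.
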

\begin{proof}
	For every $\gamma \in C(P)$ we have 
	\begin{align*}
	\left\Vert *\zeta \right\Vert (\gamma)  & = \bigoplus_{n>0} \  \bigoplus_{\gamma_1 \cupdot ... \cupdot \gamma_n=\gamma} {\val}(\left\Vert \zeta \right\Vert (\gamma_1), ... , \left\Vert \zeta \right\Vert (\gamma_n)).
	\end{align*}
	
	\noindent Let $\gamma=\{ a_1, \dots, a_s \}$ where $s\in \mathbb{N}$. Then, we get the following 
	\begin{align*}
	\left\Vert *\zeta \right\Vert (\gamma)  & = \underset{n\in \{ 1, \dots, s\}}{\bigoplus} \  \underset{\gamma_1 \cupdot ... \cupdot \gamma_n=\gamma}{\bigoplus} {\val}(\left\Vert \zeta \right\Vert (\gamma_1), ... , \left\Vert \zeta \right\Vert (\gamma_n))\\  & ={\val}(d)   \oplus {\val}(d,d) \oplus ...\oplus {\val}(\overbrace{d,...,d}^{ s \text{ times} }) \\ & =( d\otimes {\val}(1)) \oplus (d\otimes {\val}(1,1)) \oplus ... \oplus (d\otimes {\val}(1,...,1)) \\ & = (d\otimes 1) \oplus (d\otimes 1) \oplus ... \oplus (d\otimes 1) \\ & = d\oplus ...\oplus d \\ & = d
	\end{align*}
	where the second and the last equalities hold since $D$ is idempotent and the third one since $D$ is left-${\val}$-distributive. 
\end{proof}

\begin{definition}\label{val_distri_addi}
Let $(D, \oplus, {\val}, 0)$ be a valuation monoid. The valuation function ${\val}$ is called left-$\oplus$-preservative whenever the following holds:
	\[ {\val}(d_1 \oplus d_2 , d ) = {\val}(d_1,d) \oplus   {\val}(d_2,d) \]
	for any $d, d_1, d_2\in D.$ Analogously, ${\val}$ is called right-$\oplus$-preservative if
	\[  {\val}(d, d_1 \oplus d_2) = {\val}(d, d_1) \oplus {\val}(d, d_2) \]
	for any $d, d_1, d_2\in D.$ If ${\val}$ is both left- and right-$\oplus$-preservative, then it is called $\oplus$-preservative.
\end{definition}

By a straightforward calculation we can show the next proposition. 

\begin{proposition}
	Let $D$ be a valuation monoid. If ${\val}$ is $\oplus$-preservative, then  
	\[ {\val} \left( \bigoplus_{ i\in I } d_i, \bigoplus_{j\in J}d_j^\prime  \right) = \bigoplus_{i\in I, j\in J} {\val}\left(d_i, d_j^\prime\right)  \]
	where $I, J$ are finite index sets and $d_i, d_j^\prime \in D$ for every $i\in I$ and $j\in J$.
\end{proposition}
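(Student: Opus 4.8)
The plan is to reduce the statement to the two one-step equations of Definition~\ref{val_distri_addi} by a double induction, first on the cardinality of $I$ and afterwards on that of $J$. Since $(D, \oplus, 0)$ is a commutative monoid, throughout the argument I may reorder and regroup summands freely, and it is exactly this associativity and commutativity of $\oplus$ that allows a single index to be peeled off a finite sum so that a one-step preservativity law can be applied.

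First I would fix the right argument, writing $d' = \bigoplus_{j\in J} d_j'$, and prove by induction on $|I|$ that
\[
{\val}\left(\bigoplus_{i\in I} d_i,\, d'\right) = \bigoplus_{i\in I} {\val}(d_i, d').
\]
The base case $|I| = 1$ is immediate. For the inductive step I choose an index $i_0 \in I$, split $\bigoplus_{i\in I} d_i = d_{i_0} \oplus \bigoplus_{i\in I\setminus\{i_0\}} d_i$, apply left-$\oplus$-preservativity once to detach the summand $d_{i_0}$, and then invoke the induction hypothesis on the smaller index set $I \setminus \{i_0\}$. Symmetrically, for each fixed $i \in I$ I would induct on $|J|$ using right-$\oplus$-preservativity to obtain
\[
{\val}(d_i, d') = {\val}\left(d_i,\, \bigoplus_{j\in J} d_j'\right) = \bigoplus_{j\in J} {\val}(d_i, d_j').
\]
Substituting the second display into the first and using associativity and commutativity of $\oplus$ to collapse the nested sum $\bigoplus_{i\in I}\bigoplus_{j\in J}$ into a single sum indexed by the product $I \times J$ then yields the asserted equality.

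There is essentially no obstacle here; this is the promised straightforward calculation. The only point demanding a little care is the bookkeeping in the two inductions: one must verify that detaching a single index and invoking the corresponding one-step preservativity law is legitimate at each stage, which is guaranteed precisely because $\oplus$ is associative and commutative. (If one wishes to allow $I$ or $J$ to be empty, the degenerate case is covered directly by the valuation-monoid axiom ${\val}(\dots,0,\dots)=0$, matching the empty sum on the right; otherwise one simply takes $I$ and $J$ to be nonempty.)
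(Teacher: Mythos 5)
Your double induction—peeling off one index at a time with left- and then right-$\oplus$-preservativity and regrouping via associativity and commutativity of $\oplus$—is correct and is precisely the ``straightforward calculation'' the paper invokes without writing it out. No discrepancies to report.
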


Considering Definition \ref{val_distri_addi} and the pv-monoids $(\mathbb{R}\cup\{-\infty\}, \max, {\avg}, +, -\infty, 0)$ and $(\mathbb{R}  $  $ \cup $ $\{+\infty\}, $ $\min,$ $ {\avg}, +, +\infty, 0)$, ${\avg}$ is $\oplus$-preservative in both cases. \hide{Consider the pv-monoid $\left( \mathbb{R}\cup \{ -\infty \}, \max, disc_{\lambda}, \right. $ $ \left. + -\infty, 0  \right)$, with $\lambda \in \mathbb{R}, $ $\lambda<1$ and $disc_{\lambda}(d_0, \dots, d_n) = \sum_{i=0}^{n}\lambda^i d_i $. Then $disc_{\lambda}$ is $\oplus$-preservative.}

\hide{ Let for example the pv-monoid $(\mathbb{R}\cup\{-\infty\}, \max, {\avg}, +, -\infty, 0)$ and the values $d, d_1, d_2 \in \mathbb{R}\cup \{ -\infty \}$, then:
\begin{align*}
{\avg}\left( \max\{ d_1, d_2 \} , d\right) & = \frac{\max\{ d_1, d_2 \} + d}{2} = \max\left\{ \frac{d_1+d}{2}, \frac{d_2+d}{2} \right\} \\ & = \max \left\{ {\avg}(d_1, d), {\avg}(d_2, d) \right\}.
\end{align*}
Analogously, we prove that ${\avg}$ is right-$\oplus$-preservative and conclude that ${\avg}$ is $\oplus$-preservative. }

\begin{proposition}
	Let $\zeta \in PCL(D,P)$. If $D$ is $\oplus$-preservative, then \[\sim (*\zeta) \equiv * (\sim \zeta).\]
\end{proposition}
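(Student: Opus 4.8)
The plan is to unfold both sides into $\oplus$-sums of valuations of the form ${\val}(\left\Vert\zeta\right\Vert(\delta_1),\dots,\left\Vert\zeta\right\Vert(\delta_n))$ indexed by tuples of pairwise disjoint nonempty subsets of $\gamma$, and then to show that the two sums range over exactly the same family of such valuations, so that idempotency of $D$ forces them to agree. Throughout I fix $\gamma\in C(P)$.

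First I would compute the left-hand side. Applying Lemma \ref{clos_form} to $*\zeta$ gives $\left\Vert{\sim}(*\zeta)\right\Vert(\gamma)=\bigoplus_{\gamma'\subseteq\gamma}\left\Vert *\zeta\right\Vert(\gamma')$, and unfolding the valuation operator turns this into
\[
\left\Vert{\sim}(*\zeta)\right\Vert(\gamma)=\bigoplus_{\emptyset\neq\gamma'\subseteq\gamma}\ \bigoplus_{n>0}\ \bigoplus_{\delta_1\cupdot\dots\cupdot\delta_n=\gamma'}{\val}\big(\left\Vert\zeta\right\Vert(\delta_1),\dots,\left\Vert\zeta\right\Vert(\delta_n)\big).
\]
Since every tuple $(\delta_1,\dots,\delta_n)$ of pairwise disjoint nonempty subsets of $\gamma$ arises for exactly one $\gamma'$, namely $\gamma'=\bigcupdot_{i=1}^{n}\delta_i$, the left-hand side is precisely the $\oplus$-sum of ${\val}(\left\Vert\zeta\right\Vert(\delta_1),\dots,\left\Vert\zeta\right\Vert(\delta_n))$ over all such tuples.

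Next I would compute the right-hand side. Unfolding $*$ and then substituting $\left\Vert{\sim}\zeta\right\Vert(\gamma_i)=\bigoplus_{\emptyset\neq\delta_i\subseteq\gamma_i}\left\Vert\zeta\right\Vert(\delta_i)$ (again Lemma \ref{clos_form}) into each argument of ${\val}$ yields
\[
\left\Vert *({\sim}\zeta)\right\Vert(\gamma)=\bigoplus_{n>0}\ \bigoplus_{\gamma_1\cupdot\dots\cupdot\gamma_n=\gamma}{\val}\Big(\bigoplus_{\delta_1\subseteq\gamma_1}\left\Vert\zeta\right\Vert(\delta_1),\dots,\bigoplus_{\delta_n\subseteq\gamma_n}\left\Vert\zeta\right\Vert(\delta_n)\Big).
\]
Here I would use $\oplus$-preservativity to pull every inner $\oplus$ out of ${\val}$, so that the right-hand side becomes the $\oplus$-sum of ${\val}(\left\Vert\zeta\right\Vert(\delta_1),\dots,\left\Vert\zeta\right\Vert(\delta_n))$ over all partitions $\gamma=\gamma_1\cupdot\dots\cupdot\gamma_n$ together with all choices of nonempty $\delta_i\subseteq\gamma_i$. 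On the right, each such choice produces a tuple $(\delta_1,\dots,\delta_n)$ of pairwise disjoint nonempty subsets of $\gamma$, so every right-hand term already occurs on the left. Conversely, any such tuple occurs on the right: distribute the leftover elements $\gamma\setminus\bigcupdot_{i=1}^{n}\delta_i$ arbitrarily among the blocks (e.g. adjoin them all to the first block) to extend $(\delta_1,\dots,\delta_n)$ to a partition of $\gamma$ with $\delta_i\subseteq\gamma_i$, all blocks staying nonempty. Thus both sides are $\oplus$-sums over the same family of valuations — the right-hand side merely repeating some of them, once per admissible distribution of the leftover — and idempotency of $D$ collapses the repetitions, giving the claimed equality.

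The main obstacle I anticipate is the step that pulls the sums out of ${\val}$: $\oplus$-preservativity (Definition \ref{val_distri_addi}) and its consequence are phrased for ${\val}$ of two arguments, whereas here ${\val}$ carries $n$ arguments, each a finite $\oplus$-sum. I would therefore first establish the $n$-ary coordinatewise version, ${\val}(\dots,d\oplus d',\dots)={\val}(\dots,d,\dots)\oplus{\val}(\dots,d',\dots)$ in every argument position, and iterate it to obtain ${\val}(\bigoplus_{j}d^{(1)}_j,\dots,\bigoplus_j d^{(n)}_j)=\bigoplus_{j_1,\dots,j_n}{\val}(d^{(1)}_{j_1},\dots,d^{(n)}_{j_n})$, a generalization that holds for the intended valuation functions such as ${\avg}$. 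The remaining care is purely bookkeeping: keeping every $\delta_i$ and $\gamma_i$ nonempty so that all sets lie in $C(P)$, and invoking symmetry of ${\val}$ together with idempotency (as in the remark following the semantics) so that the ordering of the blocks is immaterial.
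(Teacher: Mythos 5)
Your proposal is correct and follows essentially the same route as the paper: both sides are unfolded via Lemma \ref{clos_form}, the inner sums are pulled out of ${\val}$ by $\oplus$-preservativity, and the resulting sum is reindexed over tuples of pairwise disjoint nonempty subsets of $\gamma$, with idempotency and symmetry of ${\val}$ absorbing the repetitions. You are in fact slightly more careful than the paper in flagging that the $n$-ary coordinatewise form of $\oplus$-preservativity has to be made explicit, since Definition \ref{val_distri_addi} only states the two-argument case.
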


\begin{proof}
	
	Let $\gamma \in C(P)$. Then 
	\begin{align*}
	\left\Vert *(\sim \zeta) \right\Vert (\gamma)  & = \underset{n>0}{\bigoplus} \  \underset{\gamma_1 \cupdot ... \cupdot \gamma_n=\gamma}{\bigoplus} {\val}(\left\Vert \sim \zeta \right\Vert (\gamma_1), ... , \left\Vert \sim \zeta \right\Vert (\gamma_n)) \\ & =  \underset{n>0}{\bigoplus} \  \underset{\gamma_1 \cupdot ... \cupdot \gamma_n=\gamma}{\bigoplus} {\val}\left( \underset{\gamma_1^\prime\subseteq \gamma_1}{\bigoplus} \left\Vert  \zeta \right\Vert (\gamma_1^\prime), ... \ , \underset{\gamma_n^\prime\subseteq \gamma_n}{\bigoplus} \left\Vert  \zeta \right\Vert (\gamma_n^\prime)\right) \\ & = \underset{n>0}{\bigoplus} \  \underset{\gamma_1 \cupdot ... \cupdot \gamma_n=\gamma}{\bigoplus}\  \underset{\gamma_1^\prime\subseteq \gamma_1}{\bigoplus} ... \ \underset{\gamma_n^\prime\subseteq \gamma_n}{\bigoplus}{\val}(  \left\Vert  \zeta \right\Vert (\gamma_1^\prime), ... \ , \left\Vert  \zeta \right\Vert (\gamma_n^\prime)) \\ & = \underset{n>0}{\bigoplus} \  \underset{\gamma_1 \cupdot ... \cupdot \gamma_n \subseteq \gamma}{\bigoplus}\ {\val}(  \left\Vert  \zeta \right\Vert (\gamma_1), ... \ , \left\Vert  \zeta \right\Vert (\gamma_n)) \\ & = \underset{n>0}{\bigoplus} \  \underset{\gamma^\prime \subseteq \gamma}{\bigoplus} \  \underset{\gamma_1 \cupdot ... \cupdot \gamma_n = \gamma^\prime}{\bigoplus}\ {\val}(  \left\Vert  \zeta \right\Vert (\gamma_1), ... \ , \left\Vert  \zeta \right\Vert (\gamma_n)) \\ & = \underset{\gamma^\prime \subseteq \gamma}{\bigoplus} \ \underset{n>0}{\bigoplus} \   \  \underset{\gamma_1 \cupdot ... \cupdot \gamma_n = \gamma^\prime}{\bigoplus}\ {\val}(  \left\Vert  \zeta \right\Vert (\gamma_1), ... \ , \left\Vert  \zeta \right\Vert (\gamma_n)) \\ & = \underset{\gamma^\prime \subseteq \gamma}{\bigoplus} \ \left\Vert *\zeta \right\Vert (\gamma^\prime) \\ & = \left\Vert \sim (*\zeta )\right\Vert (\gamma) 
	\end{align*}
	where the third equality holds since $D$ is $\oplus$-preservative and the next equalities due to the commutativity of $\oplus.$ 
\end{proof}

\begin{proposition}\label{uplus_over_oplus_right}
	Let $\zeta, \zeta_1, \zeta_2 \in PCL(D,P)$. If $D$ is left-$\oplus$-distributive, then

		\[\zeta \uplus(\zeta_1 \oplus\zeta_2) \equiv (\zeta \uplus \zeta_1) \oplus (\zeta \uplus \zeta_2). \]

\end{proposition}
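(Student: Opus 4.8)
The plan is to argue pointwise: I fix an arbitrary $\gamma \in C(P)$ and show that $\left\Vert \zeta \uplus(\zeta_1 \oplus\zeta_2)\right\Vert(\gamma)$ and $\left\Vert (\zeta \uplus \zeta_1) \oplus (\zeta \uplus \zeta_2)\right\Vert(\gamma)$ coincide. First I would unfold the semantics of the coalescing operator,
\[
\left\Vert \zeta \uplus(\zeta_1 \oplus\zeta_2)\right\Vert(\gamma) = \bigoplus_{\gamma_1 \cupdot \gamma_2 = \gamma}\left(\left\Vert \zeta\right\Vert(\gamma_1) \otimes \left\Vert \zeta_1 \oplus \zeta_2\right\Vert(\gamma_2)\right),
\]
and then replace $\left\Vert \zeta_1 \oplus \zeta_2\right\Vert(\gamma_2)$ by $\left\Vert \zeta_1\right\Vert(\gamma_2) \oplus \left\Vert \zeta_2\right\Vert(\gamma_2)$ using the semantics of $\oplus$.

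The key step is to apply left-$\oplus$-distributivity of $D$ inside each summand, rewriting $\left\Vert \zeta\right\Vert(\gamma_1) \otimes \left(\left\Vert \zeta_1\right\Vert(\gamma_2) \oplus \left\Vert \zeta_2\right\Vert(\gamma_2)\right)$ as $\left(\left\Vert \zeta\right\Vert(\gamma_1) \otimes \left\Vert \zeta_1\right\Vert(\gamma_2)\right) \oplus \left(\left\Vert \zeta\right\Vert(\gamma_1) \otimes \left\Vert \zeta_2\right\Vert(\gamma_2)\right)$. After this I would use that $(D,\oplus,0)$ is a commutative monoid to split the single $\bigoplus$ over all disjoint decompositions $\gamma = \gamma_1 \cupdot \gamma_2$ into two separate sums, one collecting the $\zeta_1$-terms and one the $\zeta_2$-terms.

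Finally, I would recognize each of these two sums as the semantics of a coalescing formula, namely $\bigoplus_{\gamma_1 \cupdot \gamma_2 = \gamma}\left(\left\Vert \zeta\right\Vert(\gamma_1) \otimes \left\Vert \zeta_i\right\Vert(\gamma_2)\right) = \left\Vert \zeta \uplus \zeta_i\right\Vert(\gamma)$ for $i \in \{1,2\}$, and then fold them back together via the semantics of $\oplus$ to obtain $\left\Vert (\zeta \uplus \zeta_1) \oplus (\zeta \uplus \zeta_2)\right\Vert(\gamma)$. Since $\gamma$ is arbitrary, the equivalence follows.

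I do not expect any serious obstacle here; the only point that deserves care is the regrouping step, where one must observe that both resulting sums range over exactly the same index set of disjoint decompositions $\gamma_1 \cupdot \gamma_2 = \gamma$, so that commutativity and associativity of $\oplus$ justify splitting the combined sum without any reindexing. The argument is a close analogue of part $(iv)$ of the preceding proposition, the difference being that left-$\oplus$-distributivity of $\otimes$ over $\oplus$ is now invoked under the coalescing sum rather than at a single $\gamma$.
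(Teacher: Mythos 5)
Your proposal is correct and follows essentially the same route as the paper's proof: unfold the coalescing sum, expand $\left\Vert \zeta_1 \oplus \zeta_2\right\Vert$ pointwise, apply left-$\oplus$-distributivity inside each summand, and split the sum over decompositions $\gamma_1 \cupdot \gamma_2 = \gamma$ using commutativity of $\oplus$ before folding back into the two coalescing terms. No gaps.
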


\begin{proof}
	For every $\gamma \in C(P)$ we have
	\begin{align*}
	\left \Vert \zeta \uplus(\zeta_1 \oplus\zeta_2) \right \Vert (\gamma)  & = \bigoplus_{\gamma^\prime \cupdot \gamma^{\prime \prime} = \gamma } \left(  \left\Vert \zeta \right\Vert(\gamma^\prime) \otimes \left\Vert \zeta_1 \oplus \zeta_2 \right\Vert (\gamma^{\prime \prime}) \right) \\ & = \bigoplus_{\gamma^\prime \cupdot \gamma^{\prime \prime} = \gamma } \left(  \left\Vert \zeta \right\Vert(\gamma^\prime) \otimes  \left(  \left\Vert \zeta_1 \right\Vert(\gamma^{\prime \prime})  \oplus \left\Vert \zeta_2 \right\Vert(\gamma^{\prime \prime})   \right)   \right) \\ & = \bigoplus_{\gamma^\prime \cupdot \gamma^{\prime \prime} = \gamma }\left(  \left\Vert \zeta \right\Vert (\gamma^\prime ) \otimes \left\Vert \zeta_1 \right\Vert(\gamma^{\prime \prime } )    \right) \oplus \left(  \left\Vert \zeta \right\Vert (\gamma^\prime ) \otimes \left\Vert \zeta_2 \right\Vert(\gamma^{\prime \prime } ) \right) \\ & = \bigoplus_{\gamma^\prime \cupdot \gamma^{\prime \prime} = \gamma }\left(  \left\Vert \zeta \right\Vert (\gamma^\prime ) \otimes \left\Vert \zeta_1 \right\Vert(\gamma^{\prime \prime } )    \right) \oplus \bigoplus_{\gamma^\prime \cupdot \gamma^{\prime \prime} = \gamma }\left(  \left\Vert \zeta \right\Vert (\gamma^\prime ) \otimes \left\Vert \zeta_2 \right\Vert(\gamma^{\prime \prime } )    \right) \\ & = \left\Vert \zeta \uplus \zeta_1 \right\Vert(\gamma) \oplus \left\Vert \zeta \uplus \zeta_2 \right\Vert(\gamma) \\ & = \left\Vert (\zeta \uplus \zeta_1) \oplus ( \zeta \uplus \zeta_2) \right\Vert(\gamma)
	\end{align*}
	where the third equality holds since $D$ is left-$\oplus$-distributive and the fourth one since $\oplus$ is commutative. 
	\end{proof}

\medskip

\begin{proposition}\label{uplus_over_oplus_left}
	Let $\zeta, \zeta_1, \zeta_2 \in PCL(D,P)$. If $D$ is right-$\oplus$-distributive, then
	
	\[(\zeta_1 \oplus\zeta_2)  \uplus \zeta \equiv (\zeta_1 \uplus \zeta) \oplus (\zeta_2 \uplus \zeta). \]
	
\end{proposition}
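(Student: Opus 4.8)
The plan is to mirror, step for step, the computation carried out in the proof of Proposition \ref{uplus_over_oplus_right}, replacing the single appeal to left-$\oplus$-distributivity with right-$\oplus$-distributivity. I would fix an arbitrary $\gamma \in C(P)$ and unfold the semantics of the coalescing operator to write $\left\Vert (\zeta_1 \oplus \zeta_2) \uplus \zeta \right\Vert (\gamma)$ as the sum $\bigoplus_{\gamma' \cupdot \gamma'' = \gamma} \left( \left\Vert \zeta_1 \oplus \zeta_2 \right\Vert (\gamma') \otimes \left\Vert \zeta \right\Vert (\gamma'') \right)$, ranging over all decompositions of $\gamma$ into two disjoint nonempty subsets.

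Next I would expand $\left\Vert \zeta_1 \oplus \zeta_2 \right\Vert (\gamma') = \left\Vert \zeta_1 \right\Vert (\gamma') \oplus \left\Vert \zeta_2 \right\Vert (\gamma')$ by the semantics of $\oplus$, so that each summand becomes $\left( \left\Vert \zeta_1 \right\Vert (\gamma') \oplus \left\Vert \zeta_2 \right\Vert (\gamma') \right) \otimes \left\Vert \zeta \right\Vert (\gamma'')$. This is the one point where the hypothesis enters: right-$\oplus$-distributivity lets me split this product into $\left( \left\Vert \zeta_1 \right\Vert (\gamma') \otimes \left\Vert \zeta \right\Vert (\gamma'') \right) \oplus \left( \left\Vert \zeta_2 \right\Vert (\gamma') \otimes \left\Vert \zeta \right\Vert (\gamma'') \right)$. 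Since here $\zeta$ sits on the right of $\otimes$, it is precisely right- rather than left-distributivity that is required; this is the only substantive difference from Proposition \ref{uplus_over_oplus_right}.

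Finally, using the commutativity and associativity of $\oplus$, I would separate the single indexed sum into two sums over the same index set $\{ \gamma' \cupdot \gamma'' = \gamma \}$, recognize them as $\left\Vert \zeta_1 \uplus \zeta \right\Vert (\gamma)$ and $\left\Vert \zeta_2 \uplus \zeta \right\Vert (\gamma)$ respectively, and fold them back together via the semantics of $\oplus$ to obtain $\left\Vert (\zeta_1 \uplus \zeta) \oplus (\zeta_2 \uplus \zeta) \right\Vert (\gamma)$. As $\gamma$ was arbitrary, this yields the claimed equivalence. I do not anticipate any genuine obstacle: the argument is entirely symmetric to the preceding proposition, and the only matters demanding attention are invoking right-distributivity in accordance with the position of $\zeta$ and justifying the regrouping of the double sum by the commutativity of $\oplus$.
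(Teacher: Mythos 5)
Your proposal is correct and follows exactly the argument the paper intends: the paper's proof of this proposition is literally ``similar to the one of Proposition \ref{uplus_over_oplus_right},'' and you carry out precisely that mirror-image computation, correctly identifying that right-$\oplus$-distributivity is what is needed because $\left\Vert \zeta \right\Vert(\gamma'')$ sits on the right of $\otimes$. Nothing is missing.
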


\begin{proof}
	The proof is similar to the one of Proposition \ref{uplus_over_oplus_right}.
\end{proof}

\medskip

Next, we show a special case when $\otimes $ distributes over $\uplus$. In general $\otimes$ does not distribute over $\uplus$. For example, let $P=\{ p,q \}$ and the w$_{\text{pvm}}$PCL formulas $\zeta, \zeta_1, \zeta_2$, where $\zeta=2$ and $\zeta_1=\zeta_2 = 1$. If we consider the set $\gamma= \{ \{p\}, \{q\} \}$ and the pv-monoid $(\mathbb{R}\cup\{-\infty\}, \max, {\avg}, +, -\infty, 0)$, then it is easy to show that $\left\Vert \zeta \otimes (\zeta_1 \uplus \zeta_2) \right\Vert(\gamma)  \not =  \left\Vert (\zeta \otimes \zeta_1) \uplus (\zeta \otimes \zeta_2) \right\Vert(\gamma)$. Hence, $\zeta \otimes(\zeta_1 \uplus \zeta_2) \not \equiv (\zeta \otimes \zeta_1) \uplus (\zeta \otimes \zeta_2).$ However, this is not the case when $\zeta$ is a PIL formula and $D$ is left-$\oplus$-distributive.
\begin{proposition}
	\label{conj_over_coal} Let $\phi$ be a \emph{PIL} formula over $P$ and $\zeta_{1},\zeta
	_{2}\ \in PCL(D,P)$. If $D$ is left-$\oplus$-distributive, then
	\[
	\phi\otimes(\zeta_{1}\uplus\zeta_{2})\equiv(\phi\otimes\zeta_{1})\uplus
	(\phi\otimes\zeta_{2}).
	\]
	
\end{proposition}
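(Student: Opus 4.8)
The plan is to verify the identity pointwise: fix $\gamma \in C(P)$ and compute both $\left\Vert \phi\otimes(\zeta_1\uplus\zeta_2) \right\Vert(\gamma)$ and $\left\Vert (\phi\otimes\zeta_1)\uplus(\phi\otimes\zeta_2) \right\Vert(\gamma)$ directly from the semantics, then show they agree term-by-term over the common index set of disjoint decompositions $\gamma=\gamma_1\cupdot\gamma_2$. The engine of the argument is a multiplicativity property of PIL formulas viewed as PCL formulas: since $\gamma\models\phi$ holds exactly when $\alpha\models_i\phi$ for every $\alpha\in\gamma$, for any decomposition $\gamma=\gamma_1\cupdot\gamma_2$ with $\gamma_1,\gamma_2\in C(P)$ we have $\gamma\models\phi$ iff both $\gamma_1\models\phi$ and $\gamma_2\models\phi$. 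Translating this into weights via the $\{0,1\}$-valued semantics of PCL formulas together with the laws $0\otimes d=0$ and $1\otimes d=d$, one obtains the key identity $\left\Vert \phi \right\Vert(\gamma)=\left\Vert \phi \right\Vert(\gamma_1)\otimes\left\Vert \phi \right\Vert(\gamma_2)$.

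First I would expand the left-hand side. By the semantics of $\otimes$ and $\uplus$,
\[
\left\Vert \phi\otimes(\zeta_1\uplus\zeta_2) \right\Vert(\gamma) = \left\Vert \phi \right\Vert(\gamma)\otimes\bigoplus_{\gamma_1\cupdot\gamma_2=\gamma}\left( \left\Vert \zeta_1 \right\Vert(\gamma_1)\otimes\left\Vert \zeta_2 \right\Vert(\gamma_2) \right),
\]
and then I push the factor $\left\Vert \phi \right\Vert(\gamma)$ inside this finite sum using left-$\oplus$-distributivity, so that the whole expression becomes a single $\bigoplus$ of summands $\left\Vert \phi \right\Vert(\gamma)\otimes\bigl(\left\Vert \zeta_1 \right\Vert(\gamma_1)\otimes\left\Vert \zeta_2 \right\Vert(\gamma_2)\bigr)$. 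On the right-hand side, the semantics of $\uplus$ and $\otimes$ give
\[
\left\Vert (\phi\otimes\zeta_1)\uplus(\phi\otimes\zeta_2) \right\Vert(\gamma) = \bigoplus_{\gamma_1\cupdot\gamma_2=\gamma}\left( \left( \left\Vert \phi \right\Vert(\gamma_1)\otimes\left\Vert \zeta_1 \right\Vert(\gamma_1) \right)\otimes\left( \left\Vert \phi \right\Vert(\gamma_2)\otimes\left\Vert \zeta_2 \right\Vert(\gamma_2) \right) \right).
\]
Both sides are now indexed by the same set of decompositions $\gamma=\gamma_1\cupdot\gamma_2$, so it suffices to match the summands.

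The heart of the matter is to show, for each fixed decomposition, that
\[
\left\Vert \phi \right\Vert(\gamma)\otimes\left( \left\Vert \zeta_1 \right\Vert(\gamma_1)\otimes\left\Vert \zeta_2 \right\Vert(\gamma_2) \right) = \left( \left\Vert \phi \right\Vert(\gamma_1)\otimes\left\Vert \zeta_1 \right\Vert(\gamma_1) \right)\otimes\left( \left\Vert \phi \right\Vert(\gamma_2)\otimes\left\Vert \zeta_2 \right\Vert(\gamma_2) \right).
\]
Here I substitute the multiplicativity identity $\left\Vert \phi \right\Vert(\gamma)=\left\Vert \phi \right\Vert(\gamma_1)\otimes\left\Vert \phi \right\Vert(\gamma_2)$ and argue by cases on the values $\left\Vert \phi \right\Vert(\gamma_1),\left\Vert \phi \right\Vert(\gamma_2)\in\{0,1\}$. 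If both equal $1$, both sides reduce to $\left\Vert \zeta_1 \right\Vert(\gamma_1)\otimes\left\Vert \zeta_2 \right\Vert(\gamma_2)$ by the unit law; if either equals $0$, both sides collapse to $0$ by the annihilation law. This case analysis is exactly what lets me rearrange the products without invoking associativity or commutativity of $\otimes$.

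The main obstacle, and the reason the hypothesis restricts $\phi$ to be a PIL formula rather than an arbitrary w$_{\text{pvm}}$PCL formula (cf. the counterexample with $\zeta=2$ preceding the statement), is precisely that $\otimes$ is not assumed associative, so one cannot simply reassociate $(\left\Vert \phi \right\Vert(\gamma_1)\otimes\left\Vert \phi \right\Vert(\gamma_2))\otimes(\left\Vert \zeta_1 \right\Vert(\gamma_1)\otimes\left\Vert \zeta_2 \right\Vert(\gamma_2))$ into the interleaved product on the right. The $\{0,1\}$-valuedness of $\left\Vert \phi \right\Vert$ is the feature that rescues the rearrangement. I would therefore take care to establish the multiplicativity identity first, noting that it depends on $\gamma_1,\gamma_2$ being nonempty, so that the satisfaction $\gamma\models\phi$ genuinely splits across the disjoint union.
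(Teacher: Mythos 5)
Your proof is correct and follows essentially the same route as the paper's: expand both sides, use left-$\oplus$-distributivity to push $\left\Vert \phi \right\Vert(\gamma)$ into the sum over decompositions, and exploit the $\{0,1\}$-valuedness of $\left\Vert \phi \right\Vert$ together with the fact that satisfaction of a PIL formula by a configuration splits across disjoint unions. The only cosmetic difference is that the paper organizes the case analysis globally on $\left\Vert \phi \right\Vert(\gamma)\in\{0,1\}$ whereas you argue summand-by-summand, and you make explicit the identity $\left\Vert \phi \right\Vert(\gamma)=\left\Vert \phi \right\Vert(\gamma_1)\otimes\left\Vert \phi \right\Vert(\gamma_2)$ and the reason associativity of $\otimes$ is not needed, both of which the paper uses implicitly.
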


\begin{proof}
	For every $\gamma\in C(P)$ we have%
	\begin{align*}
	\left\Vert \phi\otimes(\zeta_{1}\uplus\zeta_{2})\right\Vert (\gamma)  &
	=\left\Vert \phi\right\Vert (\gamma)\otimes\left\Vert \zeta_{1}\uplus\zeta
	_{2}\right\Vert (\gamma)\\
	&  =\left\Vert \phi\right\Vert (\gamma)\otimes\left(  \underset{\gamma
		=\gamma_{1}\cupdot \gamma_{2}}{\bigoplus}\left(  \left\Vert \zeta_{1}\right\Vert
	(\gamma_{1})\otimes\left\Vert \zeta_{2}\right\Vert (\gamma_{2})\right)
	\right) \\
	&  =\underset{\gamma=\gamma_{1}\cupdot \gamma_{2}}{\bigoplus}\left(  \left\Vert
	\phi\right\Vert (\gamma)\otimes(\left\Vert \zeta_{1}\right\Vert (\gamma
	_{1})\otimes\left\Vert \zeta_{2}\right\Vert (\gamma_{2}))\right)  .
	\end{align*}
	We distinguish two cases.
	
	\begin{itemize}
		\item $\Vert\phi\Vert(\gamma)=1$. Then by definition, $\gamma\models\phi$ or $\alpha\models_i \phi$ for every $\alpha \in \gamma$. Hence, $\gamma^{\prime}\models\phi$ for every $\gamma^\prime \subseteq \gamma$, and subsequently $\Vert\phi
		\Vert(\gamma^{\prime})=1$\ for every $\gamma^{\prime}\subseteq\gamma$.
		Therefore, we get
		\begin{align*}
		&  \underset{\gamma=\gamma_{1}\cupdot \gamma_{2}}{\bigoplus}\left(  \left\Vert
		\phi\right\Vert (\gamma)\otimes(\left\Vert \zeta_{1}\right\Vert (\gamma
		_{1})\otimes\left\Vert \zeta_{2}\right\Vert (\gamma_{2}))\right)  \text{ }\\
		&  =\underset{\gamma=\gamma_{1}\cupdot \gamma_{2}}{\bigoplus}\left(  \left\Vert
		\phi\right\Vert (\gamma_{1})\otimes\left\Vert \zeta_{1}\right\Vert (\gamma
		_{1})\otimes\Vert\phi\Vert(\gamma_{2})\otimes\left\Vert \zeta_{2}\right\Vert
		(\gamma_{2})\right) \\
		&  =\underset{\gamma=\gamma_{1}\cupdot \gamma_{2}}{\bigoplus}\left(  \Vert
		\phi\otimes\zeta_{1}\Vert(\gamma_{1})\otimes\Vert\phi\otimes\zeta_{2}%
		\Vert(\gamma_{2})\right) \\
		&  =\Vert(\phi\otimes\zeta_{1})\uplus(\phi\otimes\zeta_{2})\Vert(\gamma).
		\end{align*}

		\item $\Vert\phi\Vert(\gamma)=0$. Hence $\gamma\not \models \phi$, i.e., there
		is an $a\in\gamma$ such that $a\not \models _{i}\phi$. This in turn implies
		that $\gamma^{\prime}\not \models \phi$ for every $\gamma^{\prime}%
		\subseteq\gamma$ with $a\in \gamma^{\prime}$. Therefore, we get
		\[
		\underset{\gamma=\gamma_{1}\cupdot \gamma_{2}}{\bigoplus}\left(  \left\Vert
		\phi\right\Vert (\gamma)\otimes(\left\Vert \zeta_{1}\right\Vert (\gamma
		_{1})\otimes\left\Vert \zeta_{2}\right\Vert (\gamma_{2}))\right)  =0
		\]
		and%
		\[
		\underset{\gamma=\gamma_{1}\cupdot \gamma_{2}}{\bigoplus}\left(  \left\Vert
		\phi\right\Vert (\gamma_{1})\otimes\left\Vert \zeta_{1}\right\Vert (\gamma
		_{1})\otimes\Vert\phi\Vert(\gamma_{2})\otimes\left\Vert \zeta_{2}\right\Vert
		(\gamma_{2})\right)  =0,
		\]
		i.e.,
		\[
		\left\Vert \phi\otimes(\zeta_{1}\uplus\zeta_{2})\right\Vert (\gamma
		)=0=\Vert(\phi\otimes\zeta_{1})\uplus(\phi\otimes\zeta_{2})\Vert(\gamma)
		\]
		and this concludes our proof. 
	\end{itemize}
\end{proof}

\section{Full normal form for w$_{\text{pvm}}$PCL formulas }

In this section, we show that for every w$_{\text{pvm}}$PCL formula $\zeta \in PCL(D,P)$, where $D$ is a pv-monoid satisfying specific properties, we can effectively construct an equivalent formula of a special form which is called \textit{full normal form}. For this, we will use corresponding results from \cite{Ma:Co} and \cite{Pa:On}. More precisely, for every PCL formula $f$ over $P$ we can effectively construct a unique equivalent PCL formula of the form $true$\footnote{Following \cite{Pa:Wei} we consider $true$ as a full normal form.} or $\bigsqcup_{i\in I} \sum_{j\in J_{i}} m_{i,j}$ (cf. Theorem 4.43 in \cite{Ma:Co}), and for every weighted PCL formula $\zeta$ over $P$ and a commutative semiring $(K, \oplus, \otimes, 0,1)$ we can construct a unique equivalent weighted PCL formula of the form $k$ or $\bigoplus_{i\in I} \left( k_i \otimes \sum_{j\in J_i} m_{i,j}  \right)$ (cf. Theorem 1 in \cite{Pa:On} and Theorem 25 in \cite{Pa:Wei}). The index sets $I$ and $J_i$, for every $i\in I$, are finite, $k$ and $k_i \in K$ and $m_{i,j}$'s are full monomials over $P.$  We show that we can also effectively build a unique full normal form for every w$_{\text{pvm}}$PCL formula over $P$ and a pv-monoid $D$ satisfying specific properties shown below. Uniqueness is up to the equivalence relation. Lastly, we show that the equivalence problem of w$_{\text{pvm}}$PCL formulas is decidable. 

\begin{definition}
	A {\normalfont w$_{\text{pvm}}$PCL} formula $\zeta \in$ $PCL(D,P)$ is said to be in full normal form if either
	\begin{enumerate}[1.]
		\item $\zeta = d$, with $d\in D$, or 
		\item 	there are finite index sets I and $J_i$ for every $i \in I,d_i \in D,$ and full monomials $m_{i,j}$ for every $i \in I$ and $ j \in J_i $ such that $\zeta=\bigoplus_{i\in I}\left(d_i \otimes \sum_{j\in J_i} m_{i,j}\right).$
	\end{enumerate}
\end{definition}

Following \cite{Pa:Wei}, for every full normal form we can construct an equivalent one satisfying the subsequent statements:
\begin{enumerate}[(i)]
	\item $j \not = j^\prime$ implies $m_{i,j} \not \equiv m_{i,j^\prime}$ for every $i\in I$, $j,j^\prime \in J_i$, and 
	\item $i\not = i^\prime $ implies $\sum_{j\in J_i} m_{i,j} \not \equiv \sum_{j\in J_{i^\prime}} m_{i^\prime, j} $ for every $i, i^\prime \in I$.
\end{enumerate}

\noindent By Lemma 1 in \cite{Pa:On}, if $m_{i,j} \equiv m_{i, j^\prime}$ for some $j\not = j^\prime$, then we get $m_{i,j} + m_{i,j^\prime} \equiv m_{i,j}$. So, we replace $m_{i,j} + m_{i,j^\prime}$ by $m_{i,j}$. For the second case, let $\sum_{j\in J_i} m_{i,j} \equiv \sum_{j\in J_{i^\prime}} m_{i^\prime, j}$ for some $i\not = i^\prime$. Then, we replace $ \left( d_i \otimes \sum_{j\in J_i} m_{i,j} \right)$ $ \oplus\left( d_{i^\prime} \otimes \sum_{j\in J_{i^\prime}} m_{i^\prime,j} \right)  $ by its equivalent formula $ \left( d_i \oplus d_{i^\prime} \right) \otimes \sum_{j\in J_i} m_{i,j}  $. In the sequel, we assume that every full normal form satisfies Statements (i) and (ii).

For the construction of the full normal form of every $\zeta \in PCL(D,P)$ we shall need the next results. The proofs of Lemmas \ref{fullno5} and \ref{fullno8}, Propositions \ref{fullnoPCL} and \ref{w_coal_unw}, and Theorem \ref{equa_decid} are similar to the corresponding ones in \cite{Pa:Wei}.

\begin{lemma} 
	\label{fullno5}Let $J$ be an index set and $m_{j}$ full monomials for every
	$j\in J$. Then, there exists a unique $\overline{\gamma}\in C(P)$ such that
	for every $\gamma\in C(P)$ we have $\left\Vert \sum\nolimits_{j\in J}%
	m_{j}\right\Vert (\gamma)=1$ if $\gamma=\overline{\gamma}$ and $\left\Vert
	\sum\nolimits_{j\in J}m_{j}\right\Vert (\gamma)=0$, otherwise.
\end{lemma}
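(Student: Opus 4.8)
The plan is to reduce everything to the (unweighted) \emph{PCL} satisfaction relation. Since each $m_j$ is a full monomial, it is in particular a \emph{PCL} formula, and so is the coalescing $\sum_{j\in J} m_j$; hence by the semantics of \emph{PCL} formulas we have $\left\Vert \sum_{j\in J} m_j\right\Vert(\gamma)\in\{0,1\}$, with $\left\Vert \sum_{j\in J} m_j\right\Vert(\gamma)=1$ iff $\gamma\models \sum_{j\in J} m_j$. Thus it suffices to exhibit a unique $\overline\gamma\in C(P)$ with $\gamma\models \sum_{j\in J} m_j$ iff $\gamma=\overline\gamma$. First I would record the key observation about a single full monomial: every full monomial $m_j$ is the characteristic monomial $m_{\alpha_j}$ of a unique interaction $\alpha_j\in I(P)$, namely the set of ports occurring positively in $m_j$, and, as noted after the definition of characteristic monomials, $\alpha'\models_i m_{\alpha_j}$ iff $\alpha'=\alpha_j$. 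Consequently, for $\gamma\in C(P)$ the relation $\gamma\models m_j$ — which by definition means $\alpha\models_i m_j$ for every $\alpha\in\gamma$ — holds iff every element of $\gamma$ equals $\alpha_j$, i.e. iff $\gamma=\{\alpha_j\}$.

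Next I would define $\overline\gamma:=\{\alpha_j\mid j\in J\}=\bigcup_{j\in J}\{\alpha_j\}$, which lies in $C(P)$ since $J$ is finite and nonempty and each $\alpha_j\in I(P)$. To prove the equivalence I unfold the coalescing operator over the finite index set $J$: $\gamma\models\sum_{j\in J} m_j$ iff there exist $\gamma_j\in C(P)$ $(j\in J)$ with $\gamma=\bigcup_{j\in J}\gamma_j$ and $\gamma_j\models m_j$ for each $j$. By the singleton characterization above, $\gamma_j\models m_j$ forces $\gamma_j=\{\alpha_j\}$, so the union is forced to be $\bigcup_{j\in J}\{\alpha_j\}=\overline\gamma$; this gives the direction $\gamma\models\sum_{j\in J} m_j\Rightarrow\gamma=\overline\gamma$. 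For the converse, choosing $\gamma_j=\{\alpha_j\}$ witnesses both $\gamma_j\models m_j$ and $\bigcup_{j\in J}\gamma_j=\overline\gamma$, whence $\overline\gamma\models\sum_{j\in J} m_j$. Note that, since $+$ uses ordinary (not disjoint) union, repeated interactions among the $\alpha_j$ collapse harmlessly and cause no difficulty.

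Uniqueness is then immediate: the computation shows that the set of configurations satisfying $\sum_{j\in J} m_j$ is exactly the singleton $\{\overline\gamma\}$, so any configuration with the property demanded by the statement must coincide with $\overline\gamma$. I expect the only delicate point to be the careful treatment of the finite coalescing $\sum_{j\in J} m_j$ — formally deriving ``$\gamma=\bigcup_{j\in J}\gamma_j$ with $\gamma_j\models m_j$'' from the binary definition of $+$, which is handled by a routine induction on $|J|$ using associativity of coalescing — together with the observation that a full monomial is satisfied by precisely one interaction, which is exactly what pins the decomposition down to the single candidate $\overline\gamma$.
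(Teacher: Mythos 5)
Your proposal is correct and follows essentially the same route as the paper: the paper's proof also observes that each full monomial $m_j$ is satisfied by a unique interaction $a_j$ and then takes $\overline{\gamma}=\{a_j\mid j\in J\}$, leaving the unfolding of the coalescing operator as "straightforward." You have simply made explicit the details the paper omits, and they check out.
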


\begin{proof}
	For every full monomial $m_{j}$, $j\in J$, there exists a unique interaction $a_{j}$ such
	that $a_{j}\models_{i}m_{j}$. Then, it is straightforward to show that
	$\overline{\gamma}=\{a_{j}\mid j\in J\}$ satisfies our claim.
\end{proof}

\begin{proposition}
	\label{fullnoPCL} Let $f$ be a $PCL$ formula over $P$ and $D$ a pv-monoid. Then there exist finite
	index sets $I$ and $J_{i}$ for every $i\in I$, and full monomials $m_{i,j}$
	for every $i\in I$ and $j\in J_{i}$ such that
	\[
	f\equiv\bigoplus_{i\in I}\sum_{j\in J_{i}}m_{i,j}\equiv\bigoplus_{i\in
		I}\left(  1\otimes\sum_{j\in J_{i}}m_{i,j}\right)  .
	\]

\end{proposition}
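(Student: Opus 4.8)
The plan is to reduce the claim to the unweighted full normal form of \cite{Ma:Co} and then to translate the propositional union $\sqcup$ into the weighted sum $\oplus$, exploiting the fact that the weighted semantics of any PCL formula is $\{0,1\}$-valued together with the idempotency of $D$. By Theorem 4.43 in \cite{Ma:Co}, $f$ is equivalent, as a PCL formula, either to $true$ or to a formula of the shape $\bigsqcup_{i\in I}\sum_{j\in J_i} m_{i,j}$ with $I$ and the $J_i$ finite and each $m_{i,j}$ a full monomial. Since equivalent PCL formulas receive the same $\{0,1\}$-valued semantics when read inside $PCL(D,P)$, it suffices to produce the required $\bigoplus\sum$ form in each of these two cases.

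For the generic case $f\equiv\bigsqcup_{i\in I}\sum_{j\in J_i} m_{i,j}$, I would first apply Lemma \ref{fullno5} to each inner coalescing: there is a unique $\overline{\gamma_i}\in C(P)$ with $\left\Vert\sum_{j\in J_i} m_{i,j}\right\Vert(\gamma)=1$ precisely when $\gamma=\overline{\gamma_i}$, and $0$ otherwise. Hence $\gamma\models f$ iff $\gamma=\overline{\gamma_i}$ for some $i\in I$, so $\left\Vert f\right\Vert(\gamma)=1$ in that case and $0$ otherwise. On the other side, $\left\Vert\bigoplus_{i\in I}\sum_{j\in J_i} m_{i,j}\right\Vert(\gamma)=\bigoplus_{i\in I}\left\Vert\sum_{j\in J_i} m_{i,j}\right\Vert(\gamma)$ is a $\oplus$-sum of elements of $\{0,1\}$; using $0\oplus d=d$ and the idempotency $1\oplus 1=1$, this sum equals $1$ exactly when at least one summand is $1$, that is, exactly when $\gamma=\overline{\gamma_i}$ for some $i$. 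The two expressions therefore agree at every $\gamma\in C(P)$, which gives $f\equiv\bigoplus_{i\in I}\sum_{j\in J_i} m_{i,j}$.

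It remains to treat the degenerate normal form $true$. Here $\left\Vert f\right\Vert(\gamma)=1$ for all $\gamma\in C(P)$, and since $C(P)$ is finite I would exhibit the explicit formula $\bigoplus_{\gamma\in C(P)}\sum_{\alpha\in\gamma} m_{\alpha}$, where $m_{\alpha}$ denotes the characteristic monomial of $\alpha$; by Lemma \ref{fullno5} the inner coalescing $\sum_{\alpha\in\gamma} m_{\alpha}$ evaluates to $1$ exactly at $\gamma$, whence the whole $\bigoplus$ takes value $1$ at every $\gamma'\in C(P)$, as required. (An unsatisfiable $f$ is the $I=\emptyset$ instance of the generic case above, where the empty $\bigoplus$ equals $0$.) Finally, the second displayed equivalence is immediate: applying $\left\Vert 1\otimes\zeta\right\Vert(\gamma)=1\otimes\left\Vert\zeta\right\Vert(\gamma)=\left\Vert\zeta\right\Vert(\gamma)$ summand by summand yields $\bigoplus_{i\in I}\sum_{j\in J_i} m_{i,j}\equiv\bigoplus_{i\in I}\left(1\otimes\sum_{j\in J_i} m_{i,j}\right)$.

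The only genuinely delicate point is the passage from $\sqcup$ to $\oplus$: it rests entirely on the standing idempotency assumption on $D$ (so that several coinciding $1$'s collapse to $1$) and on $0$ being the neutral element of $\oplus$, which together are exactly what make the $\{0,1\}$-valued semantics of $\sqcup$ and of $\oplus$ coincide. The $true$ case is the other thing to watch, since it is not literally of coalescing shape and must be re-expressed through characteristic monomials.
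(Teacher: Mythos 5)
Your proposal is correct and follows essentially the same route as the paper: invoke Theorem 4.43 of \cite{Ma:Co} for the unweighted normal form, apply Lemma \ref{fullno5} to each coalescing $\sum_{j\in J_i}m_{i,j}$, and match the $\{0,1\}$-valued semantics of $\bigsqcup$ and $\bigoplus$ via idempotency of $\oplus$ and neutrality of $0$ and $1$. Your explicit handling of the degenerate case $f\equiv true$ via $\bigoplus_{\gamma\in C(P)}\sum_{\alpha\in\gamma}m_{\alpha}$ is a point the paper's proof passes over silently, and is a welcome addition.
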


\begin{proof}
	By Theorem 4.43. in \cite{Ma:Co} there exists a unique full normal form such that
	$f\equiv\bigsqcup_{i\in I}\sum_{j\in J_{i}}m_{i,j}$, where $m_{i,j}$ are full
	monomials over $P$. By Lemma \ref{fullno5}, for every $i\in I$ there
	exists a unique $\overline{\gamma}_{i}\in C(P)$, such that for every $\gamma\in
	C(P)$ we have $\left\Vert \sum_{j\in J_{i}}m_{i,j}\right\Vert (\gamma)=1$ if
	$\gamma=\overline{\gamma}_{i}$ and $\left\Vert \sum_{j\in J_{i}}%
	m_{i,j}\right\Vert (\gamma)=0$ otherwise. Then,\ %
	\begin{align*}
	\left\Vert f\right\Vert (\gamma)  &  =\left\{
	\begin{array}
	[c]{ll}%
	1 & \textnormal{ if }\gamma\models\bigsqcup_{i\in I}\sum_{j\in J_{i}}m_{i,j}\\
	0 & \textnormal{ otherwise}%
	\end{array}
	\right. \\
	&  =\left\{
	\begin{array}
	[c]{ll}%
	1 & \textnormal{ if }\gamma=\overline{\gamma}_{i}\text{ for some }i\in I\\
	0 & \textnormal{ otherwise}%
	\end{array}
	\right. \\
	&  =\left\{
	\begin{array}
	[c]{ll}%
	1 & \textnormal{ if }\left\Vert \sum_{j\in J_{i}}m_{i,j}\right\Vert (\gamma)=1\text{
		for some }i\in I\\
	0 & \textnormal{ otherwise}%
	\end{array}
	\right. \\
	&  =\left\{
	\begin{array}
	[c]{ll}%
	1 & \textnormal{ if }\left\Vert \bigoplus_{i\in I}\sum_{j\in J_{i}}m_{i,j}\right\Vert
	(\gamma)=1\text{ }\\
	0 & \textnormal{ otherwise}%
	\end{array}
	\right. \\
	&  =\left\{
	\begin{array}
	[c]{ll}%
	1 & \textnormal{ if }\left\Vert \bigoplus_{i\in I}\left(  1\otimes\sum_{j\in J_{i}%
	}m_{i,j}\right)  \right\Vert (\gamma)=1\text{ }\\
	0 & \textnormal{ otherwise}%
	\end{array}
	\right.
	\end{align*}
	Hence, we proved that $f\equiv\bigoplus_{i\in I}\sum_{j\in J_{i}%
	}m_{i,j}\equiv\bigoplus_{i\in I}\left(  1\otimes\sum_{j\in J_{i}}%
	m_{i,j}\right)  $, as required.
\end{proof}

\begin{lemma}
	\label{fullno8} Let $m_{i},m_{j}^{\prime}$ be full monomials for every $i\in
	I$ and $j\in J$. Then,
	\[
	\left(  \underset{i\in I}{\sum}m_{i}\right)  \otimes\left(  \underset{j\in
		J}{\sum}m_{j}^{\prime}\right)  \equiv\left\{
	\begin{array}
	[c]{ll}%
	\underset{i\in I}{\sum}m_{i} & \textnormal{ if }\underset{i\in I}{\sum}m_{i}%
	\equiv\underset{j\in J}{\sum}m_{j}^{\prime},\\
	0 & \textnormal{ otherwise.}%
	\end{array}
	\right.
	\]
	
\end{lemma}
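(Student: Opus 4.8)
The plan is to reduce the statement to a pointwise computation of the semantics, using Lemma \ref{fullno5} together with the multiplicative identities of the pv-monoid. First I would apply Lemma \ref{fullno5} to each coalescing sum separately: there exist unique configurations $\overline{\gamma}_1,\overline{\gamma}_2\in C(P)$ such that $\Vert\sum_{i\in I}m_i\Vert(\gamma)=1$ exactly when $\gamma=\overline{\gamma}_1$ (and $0$ otherwise), and likewise $\Vert\sum_{j\in J}m_j'\Vert(\gamma)=1$ exactly when $\gamma=\overline{\gamma}_2$. Observe that both sums, being coalescings of $PCL$ formulas, take only the values $0$ and $1$, so these two indicator-type polynomials capture their full semantics.

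Next I would use the definition of the $\otimes$-semantics, namely $\Vert(\sum_i m_i)\otimes(\sum_j m_j')\Vert(\gamma)=\Vert\sum_i m_i\Vert(\gamma)\otimes\Vert\sum_j m_j'\Vert(\gamma)$ for every $\gamma\in C(P)$. Invoking the pv-monoid identities $0\otimes d=d\otimes 0=0$ and $1\otimes 1=1$, this product equals $1$ precisely when $\gamma=\overline{\gamma}_1$ and simultaneously $\gamma=\overline{\gamma}_2$, and equals $0$ in every other case.

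The decisive observation I would isolate is that $\sum_{i\in I}m_i\equiv\sum_{j\in J}m_j'$ holds if and only if $\overline{\gamma}_1=\overline{\gamma}_2$: equivalence means the two semantics coincide on all of $C(P)$, and since each semantics is the indicator of a single configuration, they coincide exactly when those configurations are equal. With this, the proof splits into two cases. If $\overline{\gamma}_1=\overline{\gamma}_2$, then the product is $1$ at $\gamma=\overline{\gamma}_1$ and $0$ elsewhere, so it agrees with $\Vert\sum_i m_i\Vert(\gamma)$ for all $\gamma$, yielding $(\sum_i m_i)\otimes(\sum_j m_j')\equiv\sum_i m_i$. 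If $\overline{\gamma}_1\neq\overline{\gamma}_2$, then no $\gamma$ can equal both, so the product vanishes identically and the formula is equivalent to $0$.

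There is no serious obstacle here; the only points requiring care are the characterization of equivalence via equality of the associated configurations, and the verification that in the mismatched case the product genuinely vanishes even at $\gamma=\overline{\gamma}_1$ and at $\gamma=\overline{\gamma}_2$ (where one factor is $1$ but the other is $0$, forcing the product to $0$ by $1\otimes 0=0\otimes 1=0$). This is precisely where the absorbing behavior of $0$ under $\otimes$, guaranteed by the pv-monoid axioms, does the work.
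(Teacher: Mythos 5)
Your proposal is correct and follows essentially the same route as the paper: both invoke Lemma \ref{fullno5} to identify each coalescing sum with the indicator of a unique configuration, compute the $\otimes$-semantics pointwise, and split on whether the two configurations coincide, using the absorbing property of $0$ under $\otimes$. No gaps.
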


\begin{proof}
	By Lemma \ref{fullno5} there exist $\overline{\gamma},\overline{\gamma
	}^{\prime}\in C(P)$ such that for every $\gamma\in C(P)$ the following holds: $\left\Vert
	\sum\nolimits_{i\in I}m_{i}\right\Vert(\gamma) =1$ if $\gamma=\overline
	{\gamma}$ and $\left\Vert \sum\nolimits_{i\in I}m_{i}\right\Vert (\gamma)=0$
	otherwise, and $\left\Vert \sum\nolimits_{j\in J}m_{j}^{\prime}\right\Vert
	(\gamma)=1$ if $\gamma=\overline{\gamma}^{\prime}$ and $\left\Vert
	\sum\nolimits_{j\in J}m_{j}^{\prime}\right\Vert (\gamma)=0$ otherwise.
	Therefore, for every $\gamma\in C(P)$ we get
	\begin{align*}
	\left\Vert \left(  \underset{i\in I}{\sum}m_{i}\right)  \otimes\left(
	\underset{j\in J}{\sum}m_{j}^{\prime}\right)  \right\Vert (\gamma)  &
	=\left\Vert \underset{i\in I}{\sum}m_{i}\right\Vert (\gamma)\otimes\left\Vert
	\underset{j\in J}{\sum}m_{j}^{\prime}\right\Vert (\gamma)\\
	&  =\left\{
	\begin{array}
	[c]{ll}%
	1\otimes1 & \textnormal{ if }\underset{i\in I}{\sum}m_{i}\equiv\underset{j\in J}%
	{\sum}m_{j}^{\prime}\text{ and }\gamma=\overline{\gamma}=\overline{\gamma
	}^{\prime}\\
	0 & \textnormal{ otherwise}%
	\end{array}
	\right. \\
	&  =\left\{
	\begin{array}
	[c]{ll}%
	1 & \textnormal{ if }\underset{i\in I}{\sum}m_{i}\equiv\underset{j\in J}{\sum}%
	m_{j}^{\prime}\text{ and }\gamma=\overline{\gamma}=\overline{\gamma}^{\prime
	}\\
	0 & \textnormal{ otherwise}%
	\end{array}
	\right. \\
	&  =\left\{
	\begin{array}
	[c]{ll}%
	\left\Vert \underset{i\in I}{\sum}m_{i}\right\Vert (\gamma) & \textnormal{ if
	}\underset{i\in I}{\sum}m_{i}\equiv\underset{j\in J}{\sum}m_{j}^{\prime}\\
	0 & \textnormal{ otherwise}%
	\end{array}
	\right.
	\end{align*}
	which concludes our claim.
\end{proof}

\begin{proposition}\label{otim_coale}
	Let $d_1, d_2 \in D$ and $\zeta_1, \zeta_2\in PCL(D,P)$. If $D$ is left-$\oplus$-distributive and $\otimes $ is commutative and associative, then
	\[ \left(  d_1 \otimes \zeta_1   \right) \uplus \left(  d_2 \otimes \zeta_2 \right) \equiv d_1 \otimes d_2 \otimes (\zeta_1 \uplus \zeta_2). \]
\end{proposition}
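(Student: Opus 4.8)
The plan is to fix an arbitrary $\gamma \in C(P)$ and evaluate both sides of the claimed equivalence through the semantic definition of the coalescing operator $\uplus$, then verify that the two values in $D$ agree. First I would unfold the left-hand side: by the definition of $\uplus$, the value $\left\Vert (d_1 \otimes \zeta_1) \uplus (d_2 \otimes \zeta_2)\right\Vert(\gamma)$ is the $\bigoplus$ over all disjoint decompositions $\gamma_1 \cupdot \gamma_2 = \gamma$ of the product $\left\Vert d_1 \otimes \zeta_1\right\Vert(\gamma_1) \otimes \left\Vert d_2 \otimes \zeta_2\right\Vert(\gamma_2)$. Applying the semantics of $\otimes$ together with the semantics of the constants $d_1$ and $d_2$, each summand rewrites as $(d_1 \otimes \left\Vert\zeta_1\right\Vert(\gamma_1)) \otimes (d_2 \otimes \left\Vert\zeta_2\right\Vert(\gamma_2))$.

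The next step is a purely algebraic rearrangement carried out inside each summand. Using commutativity of $\otimes$ to move $d_2$ past $\left\Vert\zeta_1\right\Vert(\gamma_1)$, and associativity to regroup, I would rewrite this product as $(d_1 \otimes d_2) \otimes (\left\Vert\zeta_1\right\Vert(\gamma_1) \otimes \left\Vert\zeta_2\right\Vert(\gamma_2))$. Now every summand carries the same left factor $d_1 \otimes d_2$, so I would invoke left-$\oplus$-distributivity to pull it outside the $\bigoplus$. What remains inside the sum is exactly $\bigoplus_{\gamma_1 \cupdot \gamma_2 = \gamma} \left\Vert\zeta_1\right\Vert(\gamma_1) \otimes \left\Vert\zeta_2\right\Vert(\gamma_2)$, which is $\left\Vert\zeta_1 \uplus \zeta_2\right\Vert(\gamma)$ by definition. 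Reading $(d_1 \otimes d_2) \otimes \left\Vert\zeta_1 \uplus \zeta_2\right\Vert(\gamma)$ back through the semantics of $\otimes$ then yields $\left\Vert d_1 \otimes d_2 \otimes (\zeta_1 \uplus \zeta_2)\right\Vert(\gamma)$, closing the chain of equalities.

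The main obstacle, modest as it is, lies in keeping the hypotheses honest at each rearrangement: commutativity is needed precisely once, to interchange $d_2$ and $\left\Vert\zeta_1\right\Vert(\gamma_1)$, and associativity is needed to regroup the four factors, so neither can be omitted. The one genuine subtlety is that left-$\oplus$-distributivity as stated applies only to a sum of two terms, whereas here I must factor the common left multiplier $d_1 \otimes d_2$ out of the entire $\bigoplus$ indexed by the decompositions of $\gamma$; this requires extending the law inductively to an arbitrary finite $\bigoplus$, which is where I would be careful, though it poses no real difficulty since the index set of decompositions is finite.
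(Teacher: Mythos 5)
Your proposal is correct and follows essentially the same route as the paper's proof: unfold the semantics of $\uplus$, rewrite each summand as $(d_1\otimes d_2)\otimes(\left\Vert\zeta_1\right\Vert(\gamma_1)\otimes\left\Vert\zeta_2\right\Vert(\gamma_2))$ using commutativity and associativity, and then factor $d_1\otimes d_2$ out of the finite $\bigoplus$ by left-$\oplus$-distributivity. Your remark that the two-term distributivity law must be extended to an arbitrary finite sum is a valid point of care that the paper passes over silently.
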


\begin{proof}
	For every $\gamma \in C(P)$ we have 
	\begin{align*}
	\left\Vert \left(  d_1 \otimes \zeta_1   \right) \uplus \left(  d_2 \otimes \zeta_2 \right)  \right\Vert(\gamma) & = \bigoplus_{\gamma_1 \cupdot \gamma_2 =\gamma} \left(  \left\Vert d_1\otimes \zeta_1 \right\Vert(\gamma_1) \otimes \left\Vert d_2 \otimes \zeta_2 \right\Vert(\gamma_2)  \right) \\ & = \bigoplus_{\gamma_1 \cupdot \gamma_2 =\gamma} \left( \left( d_1 \otimes \left\Vert \zeta_1 \right\Vert(\gamma_1) \right) \otimes \left( d_2 \otimes \left\Vert \zeta_2 \right\Vert(\gamma_2)   \right) \right)\\ & = \bigoplus_{\gamma_1 \cupdot \gamma_2 =\gamma}  (d_1 \otimes d_2 \otimes  \left\Vert \zeta_1 \right\Vert(\gamma_1) \otimes \left\Vert \zeta_2 \right\Vert(\gamma_2) ) \\ & = (d_1 \otimes d_2) \otimes  \bigoplus_{\gamma_1 \cupdot \gamma_2 =\gamma}  \left( \left\Vert \zeta_1 \right\Vert(\gamma_1) \otimes \left\Vert \zeta_2 \right\Vert(\gamma_2) \right) \\ & = (d_1 \otimes d_2) \otimes \left\Vert\zeta_1 \uplus \zeta_2 \right\Vert(\gamma) \\ & =  \left\Vert  (d_1 \otimes d_2) \otimes \zeta_1 \uplus \zeta_2 \right\Vert(\gamma),
	\end{align*}
	where the third equality holds by the commutativity and associativity of $\otimes$ and the fourth one since $D$ left-$\oplus$-distributive.
\end{proof}

\begin{proposition}\label{w_coal_unw}
	Let $m_i, m_j^\prime $ be full monomials for every $i \in I$ and $j\in J$. Then 
	\[  \left(  \sum_{i\in I} m_i \right) \uplus \left(  \sum_{j\in J} m_j^\prime \right)  \equiv \left\{  \begin{array}{l l}
	 \sum_{i\in I} m_i +   \sum_{j\in J} m_j^\prime & \text{if } m_i \not \equiv m_j^\prime \text{ for every } i\in I \text{ and } j\in J \\
	 0 & \text{otherwise.}
	\end{array} \right.   \]
	
\end{proposition}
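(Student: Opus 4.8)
The plan is to compute $\left\Vert \left(\sum_{i\in I} m_i\right) \uplus \left(\sum_{j\in J} m_j^\prime\right)\right\Vert(\gamma)$ directly from the definition of the coalescing operator and compare it against the right-hand side, using Lemma \ref{fullno5} to control when the factors evaluate to $1$. First I would invoke Lemma \ref{fullno5} to obtain the unique sets $\overline{\gamma}, \overline{\gamma}^\prime \in C(P)$ such that $\left\Vert \sum_{i\in I} m_i\right\Vert(\gamma_1) = 1$ iff $\gamma_1 = \overline{\gamma}$, and $\left\Vert \sum_{j\in J} m_j^\prime\right\Vert(\gamma_2) = 1$ iff $\gamma_2 = \overline{\gamma}^\prime$, with value $0$ otherwise. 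Since $\left\Vert \zeta_1 \uplus \zeta_2\right\Vert(\gamma) = \bigoplus_{\gamma_1 \cupdot \gamma_2 = \gamma}\left(\left\Vert \zeta_1\right\Vert(\gamma_1) \otimes \left\Vert \zeta_2\right\Vert(\gamma_2)\right)$ and $\otimes$ annihilates at $0$, the only surviving summand is the one with $\gamma_1 = \overline{\gamma}$ and $\gamma_2 = \overline{\gamma}^\prime$, which contributes $1 \otimes 1 = 1$ and exists precisely when $\overline{\gamma}$ and $\overline{\gamma}^\prime$ are disjoint and $\overline{\gamma} \cupdot \overline{\gamma}^\prime = \gamma$.

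The key observation is that $\overline{\gamma}$ and $\overline{\gamma}^\prime$ are disjoint if and only if $m_i \not\equiv m_j^\prime$ for all $i \in I$, $j \in J$. This follows because each full monomial $m_i$ is satisfied by a unique interaction $a_i$, so $\overline{\gamma} = \{a_i \mid i \in I\}$ and $\overline{\gamma}^\prime = \{a_j^\prime \mid j \in J\}$; a shared element would force some $a_i = a_j^\prime$, equivalently $m_i \equiv m_j^\prime$ (recall $\alpha \models_i m_\alpha$ iff $\alpha = \alpha$, so equivalent full monomials have the same satisfying interaction). Thus in the disjoint case I would compute $\left\Vert\left(\sum_{i\in I} m_i\right) \uplus \left(\sum_{j\in J} m_j^\prime\right)\right\Vert(\gamma) = 1$ exactly when $\gamma = \overline{\gamma} \cupdot \overline{\gamma}^\prime$ and $0$ otherwise.

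To finish, I would show this matches $\left\Vert \sum_{i\in I} m_i + \sum_{j\in J} m_j^\prime\right\Vert(\gamma)$ in the disjoint case, and $\left\Vert 0\right\Vert(\gamma) = 0$ in the non-disjoint case. For the latter, if $m_i \equiv m_j^\prime$ for some pair, then $\overline{\gamma} \cap \overline{\gamma}^\prime \neq \emptyset$, so no disjoint decomposition $\gamma = \overline{\gamma} \cupdot \overline{\gamma}^\prime$ exists and every summand vanishes, giving $0$ for all $\gamma$. For the disjoint case, I would apply the PCL coalescing semantics together with Proposition \ref{fullnoPCL}: the formula $\sum_{i\in I} m_i + \sum_{j\in J} m_j^\prime$ is itself (the characteristic-function realization of) a sum of full monomials whose satisfying configuration is exactly $\overline{\gamma} \cup \overline{\gamma}^\prime$, matching the computed value; here I would appeal to Lemma \ref{fullno5} once more applied to the combined family $\{m_i\}_{i\in I} \cup \{m_j^\prime\}_{j\in J}$, whose satisfying set is $\overline{\gamma} \cup \overline{\gamma}^\prime = \overline{\gamma} \cupdot \overline{\gamma}^\prime$.

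The main obstacle I anticipate is bookkeeping the equivalence $m_i \not\equiv m_j^\prime \iff \overline{\gamma} \cap \overline{\gamma}^\prime = \emptyset$ cleanly, since it rests on the bijection between full monomials and their unique satisfying interactions; care is needed to ensure that repeated or equivalent monomials within a single family are handled (they collapse to one element in $\overline{\gamma}$ by Lemma \ref{fullno5}), so the disjointness criterion genuinely concerns cross-family coincidences only. Once that correspondence is pinned down, the rest is a routine matching of the two characteristic-function evaluations across the two cases.
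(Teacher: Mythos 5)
Your proposal is correct and follows essentially the same route as the paper: invoke Lemma \ref{fullno5} to identify the unique configurations $\overline{\gamma},\overline{\gamma}^{\prime}$, observe that the coalescing sum has at most one nonzero summand (namely $\gamma_1=\overline{\gamma}$, $\gamma_2=\overline{\gamma}^{\prime}$, which requires disjointness), and split into the disjoint and non-disjoint cases. The only difference is that you make explicit the correspondence between $\overline{\gamma}\cap\overline{\gamma}^{\prime}=\emptyset$ and the condition $m_i\not\equiv m_j^{\prime}$ for all $i,j$, a detail the paper leaves implicit; this is a welcome clarification rather than a deviation.
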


\begin{proof}
	By Lemma \ref{fullno5} there exist $\overline{\gamma}, \overline{\gamma}^\prime \in C(P) $ such that for every $\gamma\in C(P)$ we have $\left\Vert \sum_{i\in I}m_i \right\Vert(\gamma) = 1$ if $\gamma = \overline{\gamma}$ and $\left\Vert \sum_{i\in I}m_i \right\Vert(\gamma) = 0$ otherwise, and $\left\Vert \sum_{j\in J}m_j^\prime \right\Vert(\gamma) = 1$ if $\gamma = \overline{\gamma}^\prime$ and $\left\Vert \sum_{j\in J}m_j^\prime \right\Vert(\gamma) = 0$ otherwise. If $\overline{\gamma} \cap \overline{\gamma}^\prime = \emptyset$, for every $\gamma\in C(P)$ we get
	\begin{align*}
	\left\Vert \left(  \sum_{i\in I} m_i \right) \uplus \left(  \sum_{j\in J} m_j^\prime \right) \right\Vert (\gamma) & = \bigoplus_{\gamma_1 \cupdot \gamma_2 = \gamma} \left(  \left\Vert \sum_{i\in I } m_i \right\Vert(\gamma_1) \otimes \left\Vert \sum_{j\in J } m_j^\prime \right\Vert(\gamma_2)   \right) \\ & = \left\{ \begin{array}{l l}
	1 \otimes 1 & \text{if } \overline{\gamma}\cup \overline{\gamma}^\prime = \gamma \\ 0 &  \text{otherwise}
	\end{array}  \right. \\ & = \left\{ \begin{array}{l l}
	1 & \text{if } \overline{\gamma}\cup \overline{\gamma}^\prime = \gamma \\ 0 &  \text{otherwise}
	\end{array}  \right. \\ & = \left\Vert \sum_{i\in I} m_i + \sum_{j\in J} m_j^\prime \right \Vert(\gamma).
	\end{align*}

	\noindent However, if $\overline{\gamma} \cap \overline{\gamma}^\prime \not = \emptyset$, then by definition of the coalescing operator on w$_{pvm}$PCL formulas we get 
	\[ \left\Vert \left(  \sum_{i\in I} m_i \right) \uplus \left(  \sum_{j\in J} m_j^\prime \right) \right\Vert (\gamma) = 0   \]
	 for every $\gamma \in C(P)$ 
\end{proof}

\begin{proposition}\label{val_fnf}
	Let $\zeta \in PCL(D,P)$ which is in full normal form, i.e., $\zeta = \bigoplus_{i\in I}  \left( d_i \otimes \sum_{j\in J_i} m_{i,j} \right)$. Then 
	\begin{enumerate}[i.]
		\item $\ast \zeta \equiv \bigoplus_{I^\prime \subseteq I} \left(  {\val}(d_i)_{i\in I^\prime} \otimes \left( \biguplus_{i\in I^\prime} \sum_{j\in J_i} m_{i,j}  \right) \right) ,$
		\item $(\ast \zeta)\otimes \left( \biguplus_{i\in I} \sum_{j\in J_i} m_{i,j}  \right) \equiv \  {\val}(d_1, \dots, d_{ |I| }) \otimes \left( \biguplus_{i\in I} \sum_{j\in J_i} m_{i,j}  \right) . $
	\end{enumerate}
	
\end{proposition}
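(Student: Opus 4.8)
The plan is to push everything down to the pointwise semantics and exploit the rigidity of the support of a full normal form. By Lemma \ref{fullno5}, for each $i\in I$ there is a unique $\overline{\gamma}_i\in C(P)$ with $\left\Vert \sum_{j\in J_i}m_{i,j}\right\Vert(\gamma)=1$ if $\gamma=\overline{\gamma}_i$ and $0$ otherwise; since we assume every full normal form satisfies Statement (ii), these sets $\overline{\gamma}_i$ are pairwise distinct. First I would record the pointwise value of $\zeta$ itself: for every $\gamma'\in C(P)$,
$$\left\Vert \zeta\right\Vert(\gamma')=\bigoplus_{i\in I}\left(d_i\otimes\left\Vert \sum_{j\in J_i}m_{i,j}\right\Vert(\gamma')\right),$$
which by distinctness collapses to $\left\Vert\zeta\right\Vert(\gamma')=d_i$ when $\gamma'=\overline{\gamma}_i$ for the (unique) such $i\in I$, and $\left\Vert\zeta\right\Vert(\gamma')=0$ otherwise, using $0\otimes d=0$ and that $0$ is the identity of $\oplus$.

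For part (i), I would expand $\left\Vert\ast\zeta\right\Vert(\gamma)$ via its definition $\bigoplus_{n>0}\bigoplus_{\gamma_1\cupdot\dots\cupdot\gamma_n=\gamma}{\val}(\left\Vert\zeta\right\Vert(\gamma_1),\dots,\left\Vert\zeta\right\Vert(\gamma_n))$. Because ${\val}$ returns $0$ as soon as one argument is $0$, only those disjoint decompositions survive in which every block $\gamma_l$ equals some $\overline{\gamma}_i$. As the blocks are nonempty and pairwise disjoint and the $\overline{\gamma}_i$ are distinct, each surviving decomposition corresponds to a subset $I'\subseteq I$ with $\bigcupdot_{i\in I'}\overline{\gamma}_i=\gamma$, and, by symmetry of ${\val}$ and idempotency of $D$ exactly as in the remark following the semantics, contributes the single summand ${\val}(d_i)_{i\in I'}$. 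This gives $\left\Vert\ast\zeta\right\Vert(\gamma)=\bigoplus_{I'\subseteq I,\,\bigcupdot_{i\in I'}\overline{\gamma}_i=\gamma}{\val}(d_i)_{i\in I'}$. Next I would compute the right-hand side: iterating the coalescing semantics shows $\left\Vert\biguplus_{i\in I'}\sum_{j\in J_i}m_{i,j}\right\Vert(\gamma)$ equals $1$ if $\gamma=\bigcupdot_{i\in I'}\overline{\gamma}_i$ and $0$ otherwise; multiplying by ${\val}(d_i)_{i\in I'}$ and summing over $I'\subseteq I$ reproduces the very same expression, yielding the equivalence.

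For part (ii), I would argue directly. Set $\overline{\gamma}^{*}:=\bigcupdot_{i\in I}\overline{\gamma}_i$ when the sets $\overline{\gamma}_i$ ($i\in I$) are pairwise disjoint. The factor $\left\Vert\biguplus_{i\in I}\sum_{j\in J_i}m_{i,j}\right\Vert(\gamma)$ is then $1$ at $\gamma=\overline{\gamma}^{*}$ and $0$ elsewhere, and is identically $0$ if the $\overline{\gamma}_i$ are not disjoint, in which case both sides vanish. Hence both $\left\Vert(\ast\zeta)\otimes\bigl(\biguplus_{i\in I}\sum_{j\in J_i}m_{i,j}\bigr)\right\Vert$ and $\left\Vert{\val}(d_1,\dots,d_{|I|})\otimes\bigl(\biguplus_{i\in I}\sum_{j\in J_i}m_{i,j}\bigr)\right\Vert$ are supported at $\overline{\gamma}^{*}$, and it remains to match the two values there. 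By part (i), $\left\Vert\ast\zeta\right\Vert(\overline{\gamma}^{*})=\bigoplus_{I'\subseteq I,\,\bigcupdot_{i\in I'}\overline{\gamma}_i=\overline{\gamma}^{*}}{\val}(d_i)_{i\in I'}$; since the $\overline{\gamma}_i$ are nonempty and disjoint, the only $I'$ whose disjoint union is the full $\overline{\gamma}^{*}$ is $I'=I$, so this collapses to ${\val}(d_1,\dots,d_{|I|})$, exactly the value of the right-hand side at $\overline{\gamma}^{*}$.

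The hard part will be the bookkeeping in part (i): establishing the bijection between the disjoint decompositions $\gamma=\gamma_1\cupdot\dots\cupdot\gamma_n$ carrying nonzero ${\val}$-weight and the subsets $I'\subseteq I$ with $\bigcupdot_{i\in I'}\overline{\gamma}_i=\gamma$, and verifying that each is counted exactly once. This rests on three ingredients working together, so I would isolate them explicitly: the distinctness of the $\overline{\gamma}_i$ (Statement (ii)), the nonemptiness of the blocks (so that no $\overline{\gamma}_i$ can be subdivided or repeated across distinct blocks), and the symmetry of ${\val}$ together with the idempotency of $D$, which absorb the many orderings of a single block collection into one summand precisely as in the remark after the definition of the semantics.
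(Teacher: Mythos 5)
Your proposal is correct and follows essentially the same route as the paper's proof: both parts rest on Lemma \ref{fullno5}, the observation that only decompositions whose blocks are exactly the sets $\overline{\gamma}_i$ contribute to $\left\Vert\ast\zeta\right\Vert(\gamma)$, and the symmetry of ${\val}$ together with idempotency of $D$ to collapse reorderings into a single summand. Your derivation of part (ii) from part (i) (noting that $I'=I$ is the only subset whose disjoint union recovers $\bigcupdot_{i\in I}\overline{\gamma}_i$) is a slightly more explicit version of the paper's appeal to idempotency, but it is the same argument.
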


\begin{proof}
	\begin{enumerate}[i.]
		\item Let $\gamma \in C(P)$. Then we get 
		\begin{align*}
		\left\Vert \ast \zeta \right\Vert(\gamma) & =  \bigoplus_{ n>0} \bigoplus_{\bigcupdot_{i=1}^n \gamma_i = \gamma } {\val} \left( \left\Vert \zeta \right\Vert (\gamma_1), \dots, \left\Vert \zeta \right\Vert(\gamma_n)   \right) .
		\end{align*}

		\noindent By Lemma \ref{fullno5}, for every $i\in I$ there exists a unique $\overline{\gamma_i} \in C(P)$ such that for every $\gamma\in C(P)$ we have $\left\Vert \sum_{i\in J_i} m_{i,j} \right\Vert(\gamma) = 1 $ if $\gamma=\overline{\gamma_i}$ and $\left\Vert \sum_{i\in J_i} m_{i,j} \right\Vert(\gamma) = 0 $, otherwise. Hence, ${\val} \left( \left\Vert \zeta \right\Vert (\gamma_1), \dots, \left\Vert \zeta \right\Vert(\gamma_n)   \right) \not = 0$ when for every $i\in \{1,\dots, n\}$ there exists $j_i\in I$ such that $\gamma_i = \overline{\gamma_{j_i}}$ and, by definition of $\left\Vert \ast \zeta \right\Vert(\gamma)$, the sets $\gamma_1, \dots, \gamma_n$ are pairwise disjoint. Moreover, 
		\[{\val} \left( \left\Vert \zeta \right\Vert (\overline{\gamma_{j_1}}), \dots, \left\Vert \zeta \right\Vert(\overline{\gamma_{j_n}})   \right)= {\val}\left( d_{j_1}, \dots, d_{j_n} \right). \]

		\noindent Since ${\val} $ is a symmetric function and $D$ is idempotent, we get $\left\Vert \ast \zeta \right\Vert (\gamma) = \bigoplus_{I^{\prime \prime}\subseteq I } {\val} (d_i)_{i\in I^{ \prime \prime }}$ where for every $I^{\prime \prime } \subseteq I$ it holds $\gamma = \bigcupdot_{i\in I^{\prime \prime}} \overline{\gamma_i} $ or equivalently $\left\Vert \biguplus_{i\in I^{\prime \prime} } \sum_{j\in J_i} m_{i,j} \right\Vert $ $(\gamma) = 1. $ For every other $I^{\prime \prime \prime}$ subset of $I$ it holds $\left\Vert \biguplus_{ i\in I^{\prime \prime \prime } } \sum_{j\in J_i } m_{i,j} \right\Vert(\gamma)=0$. So, we get the following
		\[ \ast \zeta \equiv \bigoplus_{I^\prime \subseteq I} \left(  {\val}(d_i)_{i\in I^\prime} \otimes \left( \biguplus_{i\in I^\prime} \sum_{j\in J_i} m_{i,j}  \right) \right). \]

		\item Let $\gamma\in C(P)$. Then we get 
		\begin{align*}
		\left\Vert (\ast \zeta ) \otimes \left( \biguplus_{i\in I} \sum_{j\in J_i} m_{i,j}  \right)  \right\Vert (\gamma) & \equiv \left\Vert \ast \zeta \right\Vert (\gamma) \otimes \left\Vert \biguplus_{i\in I}\sum_{j\in J_i} m_{i,j} \right\Vert(\gamma).
		\end{align*}
		We can easily prove that $\left\Vert \biguplus_{i\in I}\sum_{j\in J_i} m_{i,j} \right\Vert (\gamma) = 1 $ if $\gamma= \bigcupdot_{i\in I} \overline{\gamma_i} $ and $\left\Vert \biguplus_{i\in I}\sum_{j\in J_i} m_{i,j} \right\Vert (\gamma) = 0 $ otherwise. If $\gamma= \bigcupdot_{i\in I} \overline{\gamma_i} $, then since $D$ is idempotent we get $\left\Vert \ast \zeta \right\Vert(\gamma) = {\val}\left( d_1, \dots, d_{|I|} \right).$ Hence, 
		
		\begin{align*}
		\left\Vert (\ast \zeta ) \otimes \left( \biguplus_{i\in I} \sum_{j\in J_i} m_{i,j}  \right)  \right\Vert (\gamma) & \equiv \left\{ \begin{array}{l l}
		{\val}(d_1, \dots, d_{|I|})  & \text{if } \gamma= \bigcupdot_{i\in I}\overline{\gamma_i} \\ 0 & \text{otherwise.}
		\end{array} \right. \\ & \equiv {\val} (d_1, \dots, d_{|I|}) \otimes \left\Vert \biguplus_{i\in I} \sum_{j\in J_i} m_{i,j} \right\Vert(\gamma) \\ & \equiv  \left\Vert {\val} (d_1, \dots, d_{|I|}) \otimes\left( \biguplus_{i\in I} \sum_{j\in J_i} m_{i,j}\right) \right\Vert(\gamma),
		\end{align*}
		and we are done.
	\end{enumerate}
	 
\end{proof}

\bigskip

\begin{theorem}\label{full_pv_monoid}
	Let $D$ be an associative, idempotent and $\oplus$-distributive pv-monoid, where $\otimes$ is commutative. Then, for every w$_{\text{pvm}}$PCL formula $\zeta \in PCL(D,P)$ we can effectively construct an equivalent w$_{\text{pvm}}$PCL formula $\zeta^\prime \in PCL(D,P)$ in full normal form which is unique up to the equivalence relation.  
\end{theorem}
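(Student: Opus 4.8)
The plan is to prove existence by structural induction on the formula $\zeta$, following the grammar $\zeta ::= d \mid f \mid \zeta\oplus\zeta \mid \zeta\otimes\zeta \mid \zeta\uplus\zeta \mid \ast\zeta$, and then to settle uniqueness separately. For the two base cases there is essentially nothing to do: a constant $\zeta = d$ is already in full normal form (case 1), while for a PCL formula $\zeta = f$ Proposition \ref{fullnoPCL} gives directly $f \equiv \bigoplus_{i\in I}\left(1\otimes\sum_{j\in J_i}m_{i,j}\right)$, a full normal form. After each inductive step I would apply the reductions recalled after the definition of full normal form (Statements (i) and (ii), via Lemma 1 in \cite{Pa:On}) to renormalize, so that the constructed formula always satisfies (i) and (ii). One bookkeeping device is needed throughout: whenever $\uplus$ or $\ast$ is applied, or a constant must be combined with a genuine sum, I would first rewrite any constant operand $d$ in sum form as $d \equiv \bigoplus_{\gamma\in C(P)}\left(d\otimes\sum_{a\in\gamma}m_a\right)$, where $m_a$ is the characteristic monomial of the interaction $a$; using Lemma \ref{fullno5} one checks this has the right semantics (this conversion is what makes the $\ast$ case go through without assuming left-${\val}$-distributivity).

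For $\zeta = \zeta_1\oplus\zeta_2$, I would take the full normal forms of $\zeta_1,\zeta_2$ from the induction hypothesis and juxtapose the two $\bigoplus$-indexed families, then renormalize via (i),(ii); if both operands are constants the result is the constant $d_1\oplus d_2$. For $\zeta = \zeta_1\otimes\zeta_2$, I would expand over the two outer $\bigoplus$'s using the left- and right-$\oplus$-distributivity of $\otimes$ established earlier (properties (iv) and (v)), regroup each resulting term by commutativity and associativity of $\otimes$ into the shape $(d_i\otimes e_k)\otimes\left(\left(\sum_j m_{i,j}\right)\otimes\left(\sum_l m'_{k,l}\right)\right)$, and then collapse the inner product of two full-monomial sums by Lemma \ref{fullno8}, which returns either $\sum_j m_{i,j}$ (when the two sums are equivalent) or $0$. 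Discarding the zero terms and renormalizing yields a full normal form.

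The coalescing and valuation cases are the substantive ones. For $\zeta = \zeta_1\uplus\zeta_2$, I would first push $\uplus$ through the outer $\bigoplus$'s by the left- and right-distributivity of $\uplus$ over $\oplus$ (Propositions \ref{uplus_over_oplus_right} and \ref{uplus_over_oplus_left}), reducing to coalescings of single summands $\left(d_i\otimes\sum_j m_{i,j}\right)\uplus\left(e_k\otimes\sum_l m'_{k,l}\right)$; Proposition \ref{otim_coale} then pulls the coefficients out as $d_i\otimes e_k\otimes\left(\left(\sum_j m_{i,j}\right)\uplus\left(\sum_l m'_{k,l}\right)\right)$, and Proposition \ref{w_coal_unw} rewrites the remaining coalescing of two full-monomial sums either as the single coalesced sum $\sum_j m_{i,j}+\sum_l m'_{k,l}$ or as $0$. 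For $\zeta = \ast\zeta_1$, with $\zeta_1$ in full normal form, Proposition \ref{val_fnf}(i) gives $\ast\zeta_1 \equiv \bigoplus_{I'\subseteq I}\left({\val}(d_i)_{i\in I'}\otimes\left(\biguplus_{i\in I'}\sum_{j\in J_i}m_{i,j}\right)\right)$; here the associativity of $\uplus$ (property (iii), valid since $D$ is associative and $\oplus$-distributive) makes the iterated $\biguplus$ well defined, and I would flatten each $\biguplus_{i\in I'}\sum_{j\in J_i}m_{i,j}$ into a single full-monomial sum (or $0$) by repeated application of Proposition \ref{w_coal_unw}. In both cases renormalizing through (i),(ii) produces a full normal form; note that the symmetry of ${\val}$ and idempotency of $D$, used in Proposition \ref{val_fnf}, are exactly what let each subset $I'$ contribute a single valuation term.

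For uniqueness up to equivalence, I would argue that the data of a normalized full normal form is recoverable from its semantics. By Lemma \ref{fullno5} each $\sum_{j\in J_i}m_{i,j}$ determines a unique $\overline{\gamma}_i\in C(P)$ at which it evaluates to $1$ and is $0$ elsewhere, and Statement (ii) forces the $\overline{\gamma}_i$ to be pairwise distinct. Evaluating $\bigoplus_{i\in I}\left(d_i\otimes\sum_{j\in J_i}m_{i,j}\right)$ at $\overline{\gamma}_{i_0}$ then collapses, by idempotency, to $d_{i_0}$, so both the family $\{\overline{\gamma}_i\}$ and the coefficients $d_i$ are semantic invariants; together with the uniqueness of the underlying PCL normal form from \cite{Ma:Co} this gives uniqueness up to reindexing, the constant case being distinguished since a nonzero constant is nonzero on every $\gamma\in C(P)$. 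The main obstacle I anticipate is the $\ast$ case: keeping the bookkeeping honest while flattening the nested $\biguplus_{i\in I'}$ across all subsets $I'\subseteq I$, and checking that the pairwise-disjointness and non-equivalence conditions of Proposition \ref{w_coal_unw} interact correctly with the symmetry and idempotency simplifications of Proposition \ref{val_fnf}; the $\otimes$ and $\uplus$ cases, by contrast, are routine once the corresponding lemmas are in hand.
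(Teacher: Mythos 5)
Your proposal follows essentially the same route as the paper's proof: structural induction on $\zeta$ with Proposition \ref{fullnoPCL} for the PCL base case, merging via Statements (i) and (ii) for $\oplus$, Lemma \ref{fullno8} together with $\oplus$-distributivity for $\otimes$, Propositions \ref{uplus_over_oplus_right}, \ref{uplus_over_oplus_left}, \ref{otim_coale} and \ref{w_coal_unw} for $\uplus$, and Proposition \ref{val_fnf} for $\ast$, with uniqueness read off from Statements (i) and (ii). Your explicit rewriting of a constant operand $d$ as $\bigoplus_{\gamma\in C(P)}\left(d\otimes\sum_{a\in\gamma}m_a\right)$ before applying $\uplus$ or $\ast$ is a small bookkeeping step that the paper's written proof leaves implicit, but it does not change the argument.
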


\begin{proof}
	We prove our theorem by induction on the structure of w$_{\text{pvm}}$PCL formulas over $P$ and $D$. Let $\zeta = f$ be a PCL formula. Then, we conclude our claim by Proposition \ref{fullnoPCL}. Next let $\zeta = d$ with $d\in D$, then we have nothing to prove.

	In the sequel, assume that $\zeta_1,\zeta_2 \in PCL(D,P)$ and let 
	$\zeta_1^\prime=\bigoplus_{i_1\in I_1}\left(d_{i_1} \otimes  \right.$ $ \left. \sum_{j_1\in J_{i_1}} m_{i_1,j_1}\right)$, $\zeta_2 ^\prime=\bigoplus_{i_2\in I_2}\left(d_{i_2} \otimes \sum_{j_2\in J_{i_2}} m_{i_2,j_2}\right)$ be their equivalent full normal forms, respectively.

	To begin with, let $\zeta=\zeta_1 \oplus \zeta_2$. We consider the formula $\zeta_1^\prime \oplus \zeta_2^\prime$. If $\sum_{j_1\in J_{i
			_1}}m_{i_1, j_1}$ $ \not \equiv \sum_{j_2\in J_{i
			_2}}m_{i_2, j_2} $ for every $i_1\in I_1$ and $i_2\in I_2$, then we set $\zeta^\prime =\zeta_1^\prime \oplus \zeta_2^\prime.$ If this is not the case, we denote by $K$ the subset of $I_1\times I_2$ defined as $K:=\left\{(i_1,i_2) \mid  i_1 \in I_1, i_2 \in I_2 \textnormal{ and } \sum_{j_1\in J_{i_1}} m_{i_1,j_1}  \equiv \sum_{j_2\in J_{i_2}} m_{i_2,j_2}\right\}$. Let $K_1 = \{ i_1 \mid (i_1, i_2) \in K \}$ and $K_2 = \{ i_2 \mid (i_1, i_2) \in K \}$. Then,

	\begin{align*}
	\zeta^\prime & = \bigoplus_{i_1\in I_1\setminus K_1}\left(d_{i_1} \otimes \sum_{j_1\in J_{i_1}} m_{i_1,j_1}\right) \oplus \bigoplus_{i_2\in I_2\setminus K_2}\left(d_{i_2} \otimes \sum_{j_2\in J_{i_2}} m_{i_2,j_2}\right) \oplus \\ & \hspace*{6cm} \bigoplus_{(k_1, k_2)\in K}\left( (d_{k_1}\oplus d_{k_2}) \otimes \sum_{j\in J_{k_1}} m_{k_1,j}\right)
	\end{align*}

	\noindent  Hence, we conclude to a full normal form $\zeta^\prime $, which by construction, it is equivalent to $\zeta.$

	Next we assume that $\zeta=\zeta_1 \otimes \zeta_2$. Let the formula 
	\[\zeta^\prime= \bigoplus_{(k_1, k_2)\in K} \left((d_{k_1} \otimes d_{k_2}) \otimes \sum_{j\in J_{k_1}} m_{k_1,j}\right) \]
	where $K$ is defined as previously. We claim that $\zeta^\prime$ is the equivalent formula of $\zeta$ in full normal form. In order to prove our claim, we set \\ $ \xi = \bigoplus_{i_1\in I_1} \bigoplus_{i_2\in I_2}  \left( d_{i_1} \otimes d_{i_2} \otimes \sum_{j_1\in J_{i_1}} m_{i_1,j_1}\otimes \sum_{j_2\in J_{i_2}} m_{i_2,j_2}\right) $ and we have
	
	\begin{align*}
	\xi &  = \bigoplus_{i_1\in I_1} \bigoplus_{i_2\in I_2}  \left( d_{i_1} \otimes d_{i_2} \otimes \sum_{j_1\in J_{i_1}} m_{i_1,j_1}\otimes \sum_{j_2\in J_{i_2}} m_{i_2,j_2}\right) \\ & \equiv \bigoplus_{i_1\in I_1} \bigoplus_{i_2\in I_2}  \left(\left(d_{i_1} \otimes \sum_{j_1\in J_{i_1}} m_{i_1,j_1}\right) \otimes \left(d_{i_2} \otimes \sum_{j_2\in J_{i_2}} m_{i_2,j_2}\right)\right)  \\ & \equiv \bigoplus_{i_1\in I_1}\left(d_{i_1} \otimes \sum_{j_1\in J_{i_1}} m_{i_1,j_1}\right) \otimes \bigoplus_{i_2\in I_2}\left(d_{i_2} \otimes \sum_{j_2\in J_{i_2}} m_{i_2,j_2}\right) \\ & \equiv \zeta_1^\prime\otimes\zeta_2^\prime  \equiv \zeta_1\otimes\zeta_2  =\zeta
	\end{align*}

	\noindent where the first equivalence holds since $\otimes $ is commutative and associative and the second one since $D$ is $\oplus$-distributive.

	In the sequel, we translate $\xi$ to its equivalent full normal form $\zeta^\prime$. By Lemma \ref{fullno8}, for every $i_1\in I_1$ and $i_2\in I_2$, we have that $\sum_{j_1\in J_{i_1}} m_{i_1,j_1}\otimes \sum_{j_2\in J_{i_2}} m_{i_2,j_2}\equiv \sum_{j_1\in J_{i_1}} m_{i_1,j_1}$ if $\sum_{j_1\in J_{i_1}} m_{i_1,j_1}\equiv \sum_{j_2\in J_{i_2}} m_{i_2,j_2}$ and $\sum_{j_1\in J_{i_1}} m_{i_1,j_1}$ $\otimes $  $\sum_{j_2\in J_{i_2}} $ $ m_{i_2,j_2}\equiv 0 $ otherwise. Hence, for every $(k_1, k_2) \in K$ it holds $\sum_{j_1\in J_{k_1}} m_{k_1,j_1}$ $\otimes \sum_{j_2\in J_{k_2}} m_{k_2,j_2}\equiv \sum_{j_1\in J_{k_1}} m_{k_1,j_1}$ and for every $(i_1, i_2) \in I_1\times I_2 \backslash K$ we have that $\sum_{j_1\in J_{i_1}} m_{i_1,j_1}\otimes \sum_{j_2\in J_{i_2}} m_{i_2,j_2}\equiv 0.$ So, we conclude our claim that $\zeta^\prime $ is the required full normal form.

	Let $\zeta= \zeta_1\uplus \zeta_2$. We set \[\zeta^\prime=\bigoplus_{i_1\in I_1} \bigoplus_{i_2\in I_2}\left(d_{i_1}^\prime\otimes d_{i_2}^\prime\otimes \left(\sum_{j_1\in J_{i_1}}m_{i_1,j_1}+ \sum_{j_2\in J_{i_2}}m_{i_2,j_2}\right) \right).\] 
	 The values $d_{i_1}^\prime$ and $d_{i_2}^\prime$ are defined for every $i_1 \in I_1$ and $i_2\in I_2$ respectively, as follows. If $m_{i_1,j_1}\not \equiv m_{i_2,j_2}$ for every $ j_1\in J_1$ and $ j_2\in J_2$, then we set $d_{i_1}^\prime=d_{i_1}$ and $d_{i_2}^\prime=d_{i_2}$, otherwise we let  $d_{i_1}^\prime=d_{i_2}^\prime=0$. Then we get the following
	
	\begin{align*}
	\zeta^\prime & = \bigoplus_{i_1\in I_1} \bigoplus_{i_2\in I_2}\left(d_{i_1}^\prime\otimes d_{i_2}^\prime\otimes \left(\sum_{j_1\in J_{i_1}}m_{i_1,j_1}+ \sum_{j_2\in J_{i_2}}m_{i_2,j_2}\right) \right) \\ & \equiv \bigoplus_{i_1\in I_1} \bigoplus_{i_2\in I_2}\left(d_{i_1}\otimes d_{i_2}\otimes \left(\sum_{j_1\in J_{i_1}}m_{i_1,j_1} \uplus \sum_{j_2\in J_{i_2}}m_{i_2,j_2}\right) \right) \\ &  \equiv \bigoplus_{i_1\in I_1} \bigoplus_{i_2\in I_2}\left(  \left(d_{i_1}\otimes \sum_{j_1\in J_{i_1}} m_{i_1, j_1} \right) \uplus \left(  d_{i_2} \sum_{j_2\in J_{i_2}}m_{i_2,j_2}\right) \right) \\ &  \equiv \bigoplus_{i_1\in I_1} \left(  \left(d_{i_1}\otimes \sum_{j_1\in J_{i_1}} m_{i_1, j_1} \right) \uplus \left(\bigoplus_{i_2\in I_2}\left(  d_{i_2} \sum_{j_2\in J_{i_2}}m_{i_2,j_2}\right) \right) \right) \\ &  \equiv \left(\bigoplus_{i_1\in I_1}  \left(d_{i_1}\otimes \sum_{j_1\in J_{i_1}} m_{i_1, j_1} \right) \right)  \uplus \left( \bigoplus_{i_2\in I_2}\left(  d_{i_2} \sum_{j_2\in J_{i_2}}m_{i_2,j_2}\right) \right)  \\ & \equiv \zeta_1^\prime \uplus \zeta_2^\prime \equiv \zeta_1 \uplus \zeta_2 = \zeta
	\end{align*}
	
 \noindent where the first and second equivalences hold by Propositions \ref{w_coal_unw} and \ref{otim_coale}, respectively. The third and fourth equivalences hold since $D$ is $\oplus$-distributive.

	Finally, let $\zeta = \ast \zeta_1$. We consider the formula $\zeta^\prime = \ast \zeta_1^\prime$. By Proposition \ref{val_fnf}, $\zeta^\prime$ can be equivalently written as follows 
	\[ \zeta^\prime \equiv \bigoplus_{I_1^\prime \subseteq I_1}  \left( {\val}(d_{i_1})_{i_1\in I_1^\prime} \otimes \left(\biguplus_{i_1\in I_1^\prime } \sum_{ j_1\in J_{i_1} } m_{i_1,j_1} \right) \right).  \]
	
	\noindent We consider the sets $I_1^{(1)}, \dots ,I_1^{(k)} $ with $k \in \mathbb{N}$ to be an enumeration of all $I_1^\prime $'s such that $\biguplus_{i_1\in I_1^\prime } \sum_{ j_1\in J_{i_1} }$ $ m_{i_1,j_1} \not \equiv 0.$ Hence, by Proposition \ref{w_coal_unw}, $\biguplus_{i\in I_1^{(s)}} \sum_{j\in J_i}m_{i,j} \equiv \sum_{i\in I_1^{(s)}} \sum_{j\in J_i}m_{i,j}  $ for every $s\in \{1, \dots, k\}$. Moreover, for every $s\in \{1, \dots, k \}$ we let $d_s^\prime = {\val}(d_i)_{i\in I_1^{(s)}} $. So, 
	\[ \zeta^\prime \equiv \bigoplus_{s\in \{1, \dots, k\}} \left( d_s^\prime \otimes \left( \sum_{i\in I_1^{(s)}} \sum_{j\in J_i} m_{i,j} \right) \right) . \]

	\noindent Lastly, if $ \sum_{i\in I_1^{(s)}} \sum_{j\in J_i} m_{i,j} \not  \equiv \sum_{i\in I_1^{(s^\prime)}} \sum_{j\in J_i} m_{i,j}   $ for every $s, s^\prime \in \{1, \dots, k\}$ with $s\not = s^\prime$, then we are done. However, let $\sum_{i\in I_1^{(s)}} \sum_{j\in J_i} m_{i,j}   \equiv \sum_{i\in I_1^{(s^\prime)}} \sum_{j\in J_i} m_{i,j}$ for some $s\not =s^\prime$. Then, we replace $\left( d_s^\prime \otimes \left( \sum_{i\in I_1^{(s)}} \sum_{j\in J_i} m_{i,j} \right) \right) \oplus \left( d^\prime_{s^\prime} \otimes \left( \sum_{i\in I_1^{(s^\prime)}} \sum_{j\in J_i} m_{i,j} \right) \right)$ by its equivalent formula $(d^\prime_s \oplus d^\prime_{s^\prime}) \otimes \sum_{i\in I_1^{(s)}} \sum_{j\in J_i} m_{i,j} $ We conclude to a full normal form which by construction, it is equivalent to $\zeta.$
	
	 The uniqueness of $\zeta^\prime$, up to equivalence, is derived in a straightforward way using Statements (i) and (ii) in all the above four cases of $\zeta^\prime$. 
\end{proof}

\medskip

In the sequel, we present an example where we compute the full normal form of a w$_{\text{pvm}}$PCL formula.
\begin{example}
	Let $P$ be the set of ports and $D$ a pv-monoid which satisfies the properties of Theorem \ref*{full_pv_monoid}. We consider the {\normalfont w$_{\text{pvm}}$PCL} formula
	\[ \zeta = \left( (d_1 \otimes m_1) \uplus (d_2 \otimes (m_2 \oplus m_3)) \right) \oplus (d_3 \otimes (m_4 + m_5)) \] 
	
	\noindent where $d_1, d_2, d_3 \in D$ and $m_i$ is a full monomial over $P$ for every $i\in \{1, \dots, 5\}$. We will compute the full normal form of $\zeta^\prime = \ast \zeta.$ Firstly, we compute the full normal form of $\zeta$. 
	\begin{align*}
	\zeta & = \left( (d_1 \otimes m_1) \uplus (d_2 \otimes (m_2 \oplus m_3)) \right) \oplus (d_3 \otimes (m_4 + m_5)) \\  & \equiv \left( (d_1 \otimes m_1) \uplus ((d_2 \otimes m_2) \oplus(d_2\otimes m_3)) \right) \oplus (d_3 \otimes (m_4 + m_5)) \\ & \equiv \left(  \left(  (d_1 \otimes  m_1) \uplus (d_2 \otimes m_2) \right) \oplus  \left(  (d_1 \otimes  m_1) \uplus (d_2 \otimes m_3) \right)  \right) \oplus (d_3 \otimes (m_4 + m_5)) \\ & \equiv \left(  (d_1\otimes d_2) \otimes (m_1 +m_2) \right) \oplus \left(  (d_1\otimes d_2) \otimes (m_1 +m_3) \right) \oplus (d_3 \otimes (m_4 + m_5)).
	\end{align*}
	
	\noindent By Proposition \ref{val_fnf} we get 
	\begin{align*}
	\zeta^\prime & = \ast \zeta \\ & \equiv \ast \left( \left(  (d_1\otimes d_2) \otimes (m_1 +m_2) \right) \oplus \left(  (d_1\otimes d_2) \otimes (m_1 +m_3) \right) \oplus (d_3 \otimes (m_4 + m_5)) \right) \\ & \equiv \left( {\val}(d_1 \otimes d_2) \otimes (m_1 + m_2)  \right) \oplus \left( {\val}(d_1 \otimes d_2) \otimes (m_1 + m_3)  \right) \oplus \left( {\val}(d_3) \otimes (m_4 + m_5)  \right) \oplus \\ & \hspace*{0.4cm} \left(  {\val}(d_1\otimes d_2, d_1\otimes d_2) \otimes \left( (m_1+m_2) \uplus (m_1+m_3) \right) \right) \oplus \\ & \hspace*{0.4cm} \left(  {\val}(d_1\otimes d_2,d_3) \otimes ((m_1+m_2) \uplus (m_4 + m_5) \right) \oplus \left( {\val} (d_1\otimes d_2, d_3)\otimes \left(  (m_1 + m_3) \uplus (m_4 + m_5) \right)  \right) \oplus \\ & \hspace*{0.4cm} \left( {\val}(d_1\otimes d_2, d_1\otimes d_2, d_3) \otimes \left(  (m_1 + m_2) \uplus (m_1+m_3) \uplus (m_4 + m_5)   \right) \right)  \\ & \equiv \left( (d_1 \otimes d_2) \otimes (m_1 + m_2)  \right) \oplus \left( (d_1 \otimes d_2) \otimes (m_1 + m_3)  \right) \oplus \left( d_3 \otimes (m_4 + m_5)  \right) \oplus \\ & \hspace*{0.4cm}  \left(  {\val}(d_1\otimes d_2,d_3) \otimes (m_1+m_2+m_4 + m_5) \right) \oplus \left( {\val} (d_1\otimes d_2)\otimes \left(  m_1 + m_3 + m_4 + m_5)\right)  \right) 
	\end{align*}
	which is in full normal form.

\end{example}

\hide{\begin{theorem}
		Let $D$ be a conditionally commuatative and $\oplus$-distributive pv-monoid, where $D$ is idempotent, $\otimes$ is commutative and associative and ${\val}$ is $\oplus$-preservative. Then, for every w$_{\text{pvm}}$PCL formula $\zeta \in PCL(D,P)$, the worst run time for the construction of its equivalent w$_{\text{pvm}}$PCL formula $\zeta^\prime \in PCL(D,P)$ is doubly exponential and the best case is:
		\begin{enumerate}
			\item doubly exponential or
			\item exponential if $\zeta$ characterises a software architecture and for every of its components it is known in advance with which compoents it can interact.
		\end{enumerate}
	\end{theorem}
	
	\begin{proof}
		The the input of the algorithm consists of the set $P$ of ports and thew$_{\text{pvm}}$PCL formula $\zeta \in PCL(D,P)$. If $\zeta =d$ with $d \in D$, and $\zeta$ is involved in an $\oplus$ or $\uplus$ operation, then we use the full normal form      $\zeta'=\bigoplus_{\emptyset\not =N\subseteq M}\left(
		d\otimes\sum_{m\in N}m\right)$. The computation of $\zeta'$ requires a doubly exponential time since we compute firstly the set $M$ all full monomials over $P$ and then all  nonempty subsets of $M$.      
		If $\zeta=f$ is a PCL formula, then we conclude our claim by Theorem \ref{PCL-compl}  and Proposition \ref{fullno6}. Next, by preserving the notations from our proof of Theorem \ref{full_pv_monoid} , for the induction steps for $\oplus$, $\otimes$ operations, we note that the construction of $\zeta'$ by $\zeta_1'$ and $\zeta'_2$ is polynomial in every case. However, for the induction step of $\uplus$ operation the construction of $\zeta^\prime$ by $\zeta_1^\prime$ and $\zeta_2^\prime$ requires to replace $true$ by its equivalent formula in full normal form $\bigsqcup_{\emptyset\not =N\subseteq M}\sum_{m\in N}m$ which, as it was mentioned before, needs doubly exponential time.  
		
		As it is shown above, the construction of the formula $\bigsqcup_{\emptyset\not =N\subseteq M}\sum_{m\in N}m$ requires doubly exponential time. However, this construction can be reduced to exponential time. Let that there is a weighted PCL formula which characterises a software architecture. This formula is able to formalise the architecture of its components with their respectively weight. Conisder that there are $n$ components in the architecture. Despite the way that the compoentns interact, we have the information that for example the $i$-th component can interact with the $j$-th and $k$-th component, where $i,j,k \in \{ 1, \dots, n \}$. Hence, we can restrict the set $M$ to only the full monomials that characterize the interactions between the components. This is done in polynomila time, so the construction of the equivalent full normal form of the fromula $true$ is done in exponential time.  
\end{proof}}

Next we show that the equivalence problem for w$_{\text{pvm}}$PCL formulas is decidable. For this, we will use a corresponding result from \cite{Pa:Wei}. 

\begin{theorem}\label{equa_decid}
Let $D$ be an associative, idempotent and $\oplus$-distributive pv-monoid, where $\otimes$ is commutative. Consider also a set of ports $P$. Then for every $\zeta,\xi \in PCL(D,P)$ the equality $\|\zeta\|=\|\xi\|$ is decidable. 
\end{theorem}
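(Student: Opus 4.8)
The plan is to reduce the equivalence problem to a finite comparison by exploiting the full normal form constructed in Theorem~\ref{full_pv_monoid}. First I would apply that theorem to $\zeta$ and $\xi$ in order to obtain, effectively, equivalent w$_{\text{pvm}}$PCL formulas $\zeta^\prime$ and $\xi^\prime$ in full normal form, each satisfying Statements (i) and (ii). Writing $\zeta^\prime=\bigoplus_{i\in I}\left(d_i\otimes\sum_{j\in J_i}m_{i,j}\right)$, Lemma~\ref{fullno5} associates to every block $\sum_{j\in J_i}m_{i,j}$ a unique $\overline{\gamma}_i\in C(P)$, and Statement (ii) guarantees that the $\overline{\gamma}_i$ are pairwise distinct.

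Next I would compute the semantics of a full normal form pointwise. For every $\gamma\in C(P)$ one has $\left\Vert\zeta^\prime\right\Vert(\gamma)=\bigoplus_{i\in I}\left(d_i\otimes\left\Vert\sum_{j\in J_i}m_{i,j}\right\Vert(\gamma)\right)$, and by Lemma~\ref{fullno5} together with $d_i\otimes 1=d_i$ and $d_i\otimes 0=0$ this collapses to $\left\Vert\zeta^\prime\right\Vert(\gamma)=d_i$ if $\gamma=\overline{\gamma}_i$ for some $i\in I$, and $\left\Vert\zeta^\prime\right\Vert(\gamma)=0$ otherwise. Hence the polynomial $\left\Vert\zeta\right\Vert=\left\Vert\zeta^\prime\right\Vert$ is completely described by the finite data consisting of the finite set $\{\overline{\gamma}_i\mid i\in I\}\subseteq C(P)$ together with the assignment $\overline{\gamma}_i\mapsto d_i$; the analogous description holds for $\xi$ via $\xi^\prime$.

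It then follows that $\left\Vert\zeta\right\Vert=\left\Vert\xi\right\Vert$ iff, for every $\gamma\in C(P)$, the value assigned to $\gamma$ by $\zeta^\prime$ equals the one assigned by $\xi^\prime$. Since $P$ is finite, $C(P)$ is finite, so this amounts to checking finitely many equalities between elements of $D$, each such element being obtained from the constants occurring in $\zeta$ and $\xi$ by finitely many applications of $\oplus$, $\otimes$ and ${\val}$. As all these elements are effectively computable and comparison in $D$ is available on the elements actually generated, the whole procedure is effective, which settles decidability.

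The main obstacle is conceptual rather than computational: it lies entirely in guaranteeing that the full normal form faithfully and uniquely encodes the semantics, so that the semantic equality $\left\Vert\zeta\right\Vert=\left\Vert\xi\right\Vert$ reduces to equality of this finite encoding. This is exactly what the uniqueness up to equivalence established in Theorem~\ref{full_pv_monoid} --- enforced through Statements (i) and (ii) --- provides; without it one could not read off equivalence from a comparison of the two normal forms. The only point requiring care beyond this is that deciding the finitely many coefficient equalities presupposes decidability of equality in $D$ on the generated elements, which is the standing assumption inherited from the semiring case of \cite{Pa:Wei}.
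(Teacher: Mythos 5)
Your proof is correct, and it organizes the final finite comparison differently from the paper. Both arguments begin identically: invoke Theorem~\ref{full_pv_monoid} to replace $\zeta$ and $\xi$ by full normal forms satisfying Statements (i) and (ii), and use Lemma~\ref{fullno5} to attach to each block $\sum_{j\in J_i}m_{i,j}$ its unique configuration $\overline{\gamma}_i$. From there the paper compares the two normal forms \emph{structurally}: it characterizes $\|\zeta'\|=\|\xi'\|$ by the requirements $\mathrm{card}(I)=\mathrm{card}(L)$, $\{d_i\}=\{d'_l\}$, and a matching of the monomial blocks, with a case split (3a)/(3b) according to whether distinct indices can carry the same coefficient, and then reduces the block equivalences to equality of (sets of) sets of interactions. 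You instead compare the polynomials \emph{pointwise}: using Statement (ii) and $d\oplus 0=d$, $d\otimes 1=d$, $d\otimes 0=0$ you correctly collapse $\|\zeta'\|(\gamma)$ to $d_i$ on $\gamma=\overline{\gamma}_i$ and to $0$ elsewhere, so equality of the two polynomials becomes a check over the finite set $C(P)$. Your route is more direct and entirely avoids the paper's case analysis on repeated coefficients; the paper's route buys a comparison whose size is governed by the two normal forms rather than by $C(P)$, whose cardinality is doubly exponential in $|P|$ (though in practice you need only test the configurations $\overline{\gamma}_i$ and $\overline{\gamma}'_l$ arising from the two normal forms, which restores that economy). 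Both arguments rest, in the end, on the same unstated hypothesis that equality is decidable for the finitely many elements of $D$ generated from the constants of $\zeta$ and $\xi$ by $\oplus$, $\otimes$ and ${\val}$ --- the paper needs this already for its requirement (2) --- so you are right to flag it explicitly rather than treat it as free.
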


\begin{proof}
	By Theorem \ref{full_pv_monoid} we can effectively construct w$_{\text{pvm}}$PCL formulas $\zeta',\xi'$ in full normal form such that $\|\zeta\|=\|\zeta'\|$ and $\|\xi\|=\|\xi'\|$. Let us assume that 
	$\zeta'=\bigoplus_{i\in I}\left(d_{i}\otimes\sum_{j\in J_{i}}m_{i,j}\right)  $  and 
	$\xi'=\bigoplus_{l\in L}\left(d'_{l}\otimes\sum_{r\in M_l}m'_{l,r}\right)  $ which moreover satisfy Statements (i) and (ii). Then, by Statement (ii) we get that $\|\zeta'\|=\|\xi'\|$ iff the following requirements (1)-(3) hold:
	\begin{itemize}
		\item[1)] $\mathrm{card}(I)=\mathrm{card}(L)$,
		\item[2)] $\{d_i \mid i \in I  \}= \{d'_l \mid l\in L \}$, and
		\item[3)] 
		\begin{itemize}
			\item[a)] if $\mathrm{card}(I)=  \mathrm{card}(\{d_i \mid i \in I  \})$, then $\sum_{j\in J_{i}}m_{i,j} \equiv \sum_{r\in M_l}m'_{l,r}$ for every $i \in I$ and $l \in L$ such that $d_i=d'_l$, 
			\item[or]
			\item[b)] if $\mathrm{card}(I)>\mathrm{card}(\{d_i \mid i \in I  \})$, then we get \\ 
			$\zeta' \equiv \bigoplus_{i'\in I'}\left(d_{i'}\otimes\bigsqcup_{i\in R_{i'}}\sum_{j\in J_{i}}m_{i,j}\right)  $ where $I'  \varsubsetneq I$,  $d_{i'}$'s ($i' \in I'$) are pairwise disjoint, and $R_{i'}$ ($i' \in I'$) is the set of all $i$ in $I$ such that $d_i=d_{i'}$. Similarly, we get
			$\xi' \equiv \bigoplus_{l'\in L'}\left(d'_{l'}\otimes\bigsqcup_{l\in S_{l'}}\sum_{r\in M_l}m'_{l,r}\right)  $ 
			where $L'  \varsubsetneq L$, $d'_{l'}$'s ($l' \in L'$) are pairwise disjoint, and $S_{l'}$ ($l' \in L'$) is the set of all $l$ in $L$ such that $d'_{l}=d'_{l'}$. Then $\bigsqcup_{i\in R_{i'}}\sum_{j\in J_{i}}m_{i,j} \equiv \bigsqcup_{l\in S_{l'}}\sum_{r\in M_l}m'_{l,r}$ for every $i' \in I'$ and $l' \in L'$ such that $d_{i'}=d'_{l'}$.
		\end{itemize}
	\end{itemize}
	By Lemma \ref{fullno5} the decidability of equivalences in (3a) is reduced to decidabilty of equality of sets of interactions corresponding to full monomials, whereas the decidabilitty of equivalences in (3b) is reduced to the decidability of equality of sets whose elements are sets of interactions corresponding to full monomials. 
\end{proof}

\section{Examples}
In this section, we provide w$_{\text{pvm}}$PCL formulas which describe well-known architectures equipped with quantitative features. But first, we introduce a new symbol which we use in order to simplify the form of the formulas in our examples.   

Let $\zeta$ be a w$_{\text{pvm}}$PCL formula. By Theorem \ref{full_pv_monoid}, $\zeta $ can be written in full normal form, hence $\zeta \equiv \bigoplus_{i\in I} \left( d_i\otimes \sum_{j\in J_i}m_{i,j} \right)$. We define the \textit{full valuation} $\circledast \zeta $ of $\zeta$ by:
\begin{itemize}
	\item[-] $\circledast \zeta : = \left( \ast \zeta  \right) \otimes \left( \biguplus_{i\in I}\sum_{j\in J_i} m_{i,j} \right).$
\end{itemize}

\noindent Then, by Proposition \ref{val_fnf} we get $\circledast \zeta \equiv  {\val}(d_1, \dots, d_{ |I| }) \otimes \left( \biguplus_{i\in I} \sum_{j\in J_i} m_{i,j}  \right).$

\begin{figure}[t!]
	\begin{center}
		\begin{tikzpicture}[scale = 0.8]
		\draw  (0,0) rectangle  (1,1);
		\draw  (1.7,0) rectangle  (2.7,1) ;
		\draw  (4,0) rectangle  (5,1) ;
		\draw  (5.7,0) rectangle  (6.7,1) ;
		\draw  (8,0) rectangle  (9,1) ;
		\draw  (9.7,0) rectangle  (10.7,1) ;
		\draw  (12,0) rectangle  (13,1) ;
		\draw  (13.7,0) rectangle  (14.7,1) ;
		
		\node at (0.5,0.7) {\small $M_1$};
		\node at (2.2,0.7) {\small $M_2$};
		\node at (4.5,0.7) {\small $M_1$};
		\node at (6.2,0.7) {\small $M_2$};
		\node at (8.5,0.7) {\small $M_1$};
		\node at (10.2,0.7) {\small $M_2$};
		\node at (12.5,0.7) {\small $M_1$};
		\node at (14.2,0.7) {\small $M_2$};
		
		\draw  (0.25,0) rectangle  (0.7,0.3);
		\draw  (1.95,0) rectangle  (2.4,0.3);
		\draw  (4.25,0) rectangle  (4.7,0.3);
		\draw  (5.95,0) rectangle  (6.4,0.3);
		\draw  (8.25,0) rectangle  (8.7,0.3);
		\draw  (9.95,0) rectangle  (10.4,0.3);
		\draw  (12.2,0) rectangle  (12.7,0.3);
		\draw  (13.9,0) rectangle  (14.4,0.3);
		
		\node at (0.5,0.15) {\tiny $m_1$};
		\node at (2.2,0.15) {\tiny $m_2$};
		\node at (4.5,0.15) {\tiny $m_1$};
		\node at (6.2,0.15) {\tiny $m_2$};
		\node at (8.5,0.15) {\tiny $m_1$};
		\node at (10.2,0.15) {\tiny $m_2$};
		\node at (12.5,0.15) {\tiny $m_1$};
		\node at (14.2,0.15) {\tiny $m_2$};

		\draw  (0,-3) rectangle  (1,-2);
		\draw  (1.7,-3) rectangle  (2.7,-2) ;
		\draw  (4,-3) rectangle  (5,-2) ;
		\draw  (5.7,-3) rectangle  (6.7,-2) ;
		\draw  (8,-3) rectangle  (9,-2) ;
		\draw  (9.7,-3) rectangle  (10.7,-2) ;
		\draw  (12,-3) rectangle  (13,-2) ;
		\draw  (13.7,-3) rectangle  (14.7,-2) ;
		
		\node at (0.5,-2.8) {\small $S_1$};
		\node at (2.2,-2.8) {\small $S_2$};
		\node at (4.5,-2.8) {\small $S_1$};
		\node at (6.2,-2.8) {\small $S_2$};
		\node at (8.5,-2.8) {\small $S_1$};
		\node at (10.2,-2.8) {\small $S_2$};
		\node at (12.5,-2.8) {\small $S_1$};
		\node at (14.2,-2.8) {\small $S_2$};
		
		\draw  (0.3,-2) rectangle  (0.65,-2.3);
		\draw  (2,-2) rectangle  (2.35,-2.3);
		\draw  (4.3,-2) rectangle  (4.65,-2.3);
		\draw  (6,-2) rectangle  (6.36,-2.3);
		\draw  (8.3,-2) rectangle  (8.65,-2.3);
		\draw  (10,-2) rectangle  (10.35,-2.3);
		\draw  (12.3,-2) rectangle  (12.65,-2.3);
		\draw  (14,-2) rectangle  (14.35,-2.3);
		
		\node at (0.5,-2.2) {\tiny $s_1$};
		\node at (2.2,-2.2) {\tiny $s_2$};
		\node at (4.5,-2.2) {\tiny $s_1$};
		\node at (6.2,-2.2) {\tiny $s_2$};
		\node at (8.5,-2.2) {\tiny $s_1$};
		\node at (10.2,-2.2) {\tiny $s_2$};
		\node at (12.5,-2.2) {\tiny $s_1$};
		\node at (14.2,-2.2) {\tiny $s_2$};
		
		\draw[fill] (0.5,0) circle [radius=1.5pt];
		\draw[fill] (2.2,0) circle [radius=1.5pt];
		\draw[fill] (4.5,0) circle [radius=1.5pt];
		\draw[fill] (6.2,0) circle [radius=1.5pt];
		\draw[fill] (8.5,0) circle [radius=1.5pt];
		\draw[fill] (10.2,0) circle [radius=1.5pt];
		\draw[fill] (12.5,0) circle [radius=1.5pt];
		\draw[fill] (14.2,0) circle [radius=1.5pt];

		\draw[fill] (0.5,-2) circle [radius=1.5pt];
		\draw[fill] (2.2,-2) circle [radius=1.5pt];
		\draw[fill] (4.5,-2) circle [radius=1.5pt];
		\draw[fill] (6.2,-2) circle [radius=1.5pt];
		\draw[fill] (8.5,-2) circle [radius=1.5pt];
		\draw[fill] (10.2,-2) circle [radius=1.5pt];
		\draw[fill] (12.5,-2) circle [radius=1.5pt];
		\draw[fill] (14.2,-2) circle [radius=1.5pt];
		
		\draw  (0.5,0)--(0.5,-2);
		\node at (0.1,-0.8) { $d_{1,1}$};
		\draw  (2.2,0)--(2.2,-2);
		\node at (1.8,-0.8) { $d_{2,2}$};
		
		\draw  (4.5,0)--(4.5,-2);
		\node at (4.1,-0.8) { $d_{1,1}$};
		\draw  (4.5,0)--(6.2,-2);
		\node at (5.6,-0.8) { $d_{2,1}$};

		\draw  (8.5,0)--(10.2,-2);
		\node at (8.7,-0.8) { $d_{2,1}$};
		\draw  (10.2,0)--(8.5,-2);
		\node at (10.1,-0.8) { $d_{1,2}$};
		
		\draw  (14.2,0)--(12.5,-2);
		\node at (13,-0.8) { $d_{1,2}$};
		\draw  (14.2,0)--(14.2,-2);
		\node at (14.6,-0.8) { $d_{2,2}$};
		\end{tikzpicture}
	\end{center}
	\caption{Weighted Master/Slave architecture.}
	\label{MS_fig}
\end{figure}
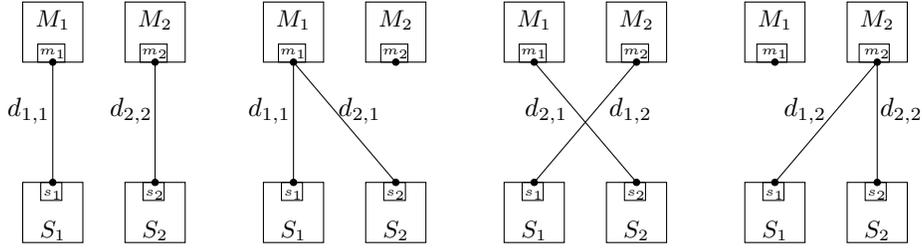

\begin{example}\label{master_slave}
	We recall from \cite{Ma:Co} the Master/Slave architecture for two masters $M_1, M_2$ and two slaves $S_1, S_2$ with ports $m_1, m_2$ and $s_1, s_2$, respectively. Masters can interact only with slaves, and vice versa, and each slave can interact with only one master. Hence, the four possible instances of the Master/Slave architecture for two masters and two slaves are shown in Figure \ref{MS_fig}. In the following we present four different {\normalfont w$_{\text{pvm}}$PCL} formulas, which according to the underlying pv-monoid we get interesting results.
	
	The monomial $\phi_{i,j} = m_{ \{ s_i, m_j \} }$ for every $i,j \in \{ 1,2 \}$ represents the binary interaction between the ports $s_i$ and $m_j$. For every $i,j\in \{ 1,2 \}$ we consider a value $d_{i,j} \in D$ and the {\normalfont w$_{\text{pvm}}$PIL} formula $\varphi_{i,j} = d_{i,j} \otimes \phi_{i,j} $. Hence, $d_{i,j} $ can be considered as the ``cost'' for the implementation of the interaction $\{ s_i,m_j \}$. For our example we consider the configuration set $\gamma=\{ \{s_1,m_1\}, \{s_1,m_2\}, \{s_2,m_1\}, \{s_2,m_2\} \}$ and the pv-monoid $(\mathbb{R}\cup\{-\infty\}, \max, $ $ {\avg}, +, $ $-\infty, 0)$.

	\begin{enumerate}[(i)]
	\item Let us assume that we want to compute the average cost of each of the possible architectures shown in Figure \ref{MS_fig} and then the maximum of those values. We consider the {\normalfont w$_{\text{pvm}}$PCL} formula
		\begin{align*}
		\zeta &= \sim \bigoplus_{i,j \in \{1,2\}} \circledast \left( \varphi_{1,i} \oplus \varphi_{2,j} \right). 
		\end{align*}
		 Then the value
		\begin{align*}
		\left\Vert  \zeta \right\Vert (\gamma)    &  =  \left\Vert \sim \bigoplus_{i,j \in \{1,2\}} \circledast \left( \varphi_{1,i} \oplus \varphi_{2,j} \right) \right\Vert (\gamma) \\ & = \max_{\gamma^\prime \subseteq \gamma} \left\{  \left\Vert \bigoplus_{i,j \in \{1,2\}} \circledast \left( \varphi_{1,i} \oplus \varphi_{2,j} \right) \right\Vert (\gamma^\prime)  \right\} \\ & = \max_{\gamma^\prime \subseteq \gamma} \left\{ \max_{i,j\in \{1,2\}} \left\{\left\Vert \circledast \left( (d_{1,i} \otimes m_{\{ s_1, m_i \}}) \oplus (d_{2,j} \otimes m_{ \{s_2, m_j\} } \right) \right\Vert (\gamma^\prime)  \right\} \right\}  \\ & = \max_{\gamma^\prime \subseteq \gamma} \left\{ \max_{i,j\in \{1,2\}} \left\{ {\avg} (d_{1,i}, d_{2,j}) + \left\Vert m_{ \{ s_1, m_i \}  } \uplus m_{ \{ s_2, m_j \} } \right\Vert (\gamma^\prime) \right\} \right\} \\ & = \max \left\{  {\avg} (d_{1,1}, d_{2,1}) , {\avg} (d_{1,1}, d_{2,2}) , {\avg} (d_{1,2}, d_{2,1}) , {\avg} (d_{1,2}, d_{2,2}) \right\}
		\end{align*}
		
		\noindent computes the average cost for each of the four possible instances and then the maximum of those values. It is interesting to note that $\left\Vert \zeta \right\Vert (\gamma) = \left\Vert \zeta \right\Vert (\gamma^\prime)$ for every $\gamma^\prime \in C(P)$ with $\gamma \subseteq \gamma^\prime. $

		\hide{Moreover, let the pv-monoid $(R \cup \{-\infty\},\max,$ $disc_{\lambda},+,-\infty,0)$, where the discounted sum is defined as $dic_{\lambda}(d_0, \dots, d_n) = \sum_{i=0}^{n} \lambda^i d_i  $ for every $d_0, \dots, d_n \in D$ and $\lambda \in \mathbb{R}$. Then, we get the following
		\begin{align*}
		\left\Vert\zeta \right\Vert( \gamma ) &  = \max_{\gamma_1\cupdot \gamma_2 \subseteq \gamma}  \{ disc_{\lambda} \left(  \left\Vert \varphi_{1,1} \oplus \varphi_{1,2}\right\Vert (\gamma_1), \left\Vert \varphi_{2,1} \oplus \varphi_{2,2}\right\Vert (\gamma_2 ) \right) \} \\ & = \max_{\gamma_1\cupdot \gamma_2 \subseteq \gamma}  \{ disc_{\lambda} \left( \max\left\{ \left\Vert \varphi_{1,1} \right\Vert (\gamma_1) , \left\Vert  \varphi_{1,2}\right\Vert (\gamma_1) \right\} ,  \max\left\{ \left\Vert \varphi_{2,1} \right\Vert (\gamma_2) , \left\Vert  \varphi_{2,2}\right\Vert (\gamma_2) \right\}  \right) \}\\ & =  \max_{\gamma_{1}\cupdot \gamma_{2} \subseteq \gamma}  \{  disc_{\lambda}\left(  \max\{ d_{1,1} + \left\Vert \phi_{1,1}\right\Vert (\gamma_1),d_{1,2} + \left\Vert \phi_{1,2}\right\Vert (\gamma_1)   \} ,  \right.  \\ &  \hspace*{5cm}\left.  \max\{ d_{2,1} + \left\Vert\phi_{2,1}\right\Vert (\gamma_2),d_{2,2} + \left\Vert \phi_{2,2}\right\Vert (\gamma_2)   \}    \right)     \}  \\ & = \max \{ disc_{\lambda}(d_{1,1},d_{2,1}),disc_{\lambda}(d_{1,1},d_{2,2}),disc_{\lambda}(d_{1,2},d_{2,1}),disc_{\lambda}(d_{1,1},d_{2,2}) \} \\ & = \max \{d_{1,1}+\lambda d_{2,1},d_{1,1}+\lambda d_{2,2},d_{1,2}+\lambda d_{21},d_{1,2}+\lambda d_{2,2}\} \\ & = \max \{ d_{1,1}+\lambda   \max\{d_{2,1},d_{2,2}\},d_{1,2}+\lambda \max\{d_{2,1},d_{2,2}\}\} \\ & = \max\{d_{1,1},d_{1,2}\} + \lambda \max \{d_{2,1},d_{2,2}\} \\ & = disc_{\lambda}( \max\{d_{1,1},d_{1,2}\},\max \{d_{2,1},d_{2,2}\} ).
		\end{align*}
		
		{\color{red} we can not use discounting for delay because we do not consider the beahavior of the components. If we can find a better way to include discounting then its ok }}
		
		\medskip
		
		\item Moreover, let the following {\normalfont w$_{\text{pvm}}$PCL} formula 
		\begin{align*}
		\zeta & =  \bigotimes_{i,j \in \{1,2\}} \sim \left( \circledast \left( \varphi_{1,i} \oplus \varphi_{2,j} \right) \right).
		\end{align*}
		
		\noindent
      Then, the value 
		\begin{align*}
		\left\Vert \zeta \right\Vert (\gamma) & = \sum_{i,j\in \{ 1,2 \}} \left( {\max}_{\gamma^\prime \subseteq \gamma} \left\{  {\avg} (d_{1,i}, d_{2,j}) + \left\Vert m_{  \{ s_1, m_i \} } \uplus m_{ \{ s_2, m_j \} }  \right\Vert (\gamma^\prime) \right\} \right) \\ & = {\avg}(d_{1,1}, d_{2,1}) + {\avg}(d_{1,2}, d_{2,1}) + {\avg}(d_{1,1}, d_{2,2}) + {\avg}(d_{1,2}, d_{2,2})
		\end{align*}
		 is the sum of the average costs of all architecture schemes.
		
		\item As a third case, we want to compute the slave which has the maximum average cost with the existing masters. Therefore, we consider the following {\normalfont w$_{\text{pvm}}$PCL} formula:
		\begin{align*}
		\zeta &= \sim \bigoplus_{ i\in \{1,2\} } \left(  \circledast  \left( \varphi_{i,1} \oplus \varphi_{i,2} \right) \right).
		\end{align*}
		
		\noindent
	 Then we get 	
		\begin{align*}
		\left\Vert\zeta \right\Vert(\gamma) & =  \left\Vert \sim \bigoplus_{ i\in \{1,2\} } \left(  \circledast  \left( \varphi_{i,1} \oplus \varphi_{i,2} \right) \right) \right\Vert (\gamma) \\ & = \max_{\gamma^\prime \subseteq \gamma} \left\{ \max_{i\in  \{1,2\}}  \left\{  {\avg} (d_{i,1}, d_{i2}) + \left\Vert m_{ \{s_i, m_1\} }  + m_{ \{s_i, m_2\} } \right\Vert(\gamma^\prime) \right\}   \right\} \\ & = \max  \{ {\avg} (d_{1,1}, d_{1,2}) , {\avg} (d_{2,1}, d_{2,2})  \}
		\end{align*}
		which is the wanted outcome.

		\hide{ {\color{red} Problem with the following case. the interaction $\{ m_1, s_1, s_2 \}$  is not valid.  } 
		\item Lastly, we want to compute the sum of the average cost each master has with the existing slaves. By the description of the architecture, each slave is able to interact with only one master. Hence, each master interacts with one, two or none slaves. Let the weighted formula given below:
		\begin{align*}
		\zeta &= (\varphi_{1,1}\uplus  \varphi_{2,1}  \uplus \varphi_1 \uplus  \phi_1^\prime) \otimes (\varphi_{1,2}\uplus  \varphi_{2,2} \uplus \varphi_2 \uplus \phi_2^\prime) \\ & = \zeta_1 \otimes \zeta_2, 
		\end{align*}
		\noindent
		where
		\begin{tabular}{l l}
			- \item $\varphi_1 = d_{1,1} \otimes d_{2,1} \otimes m_{\{m_1, s_1, s_2\}}$ & \hspace*{1cm} - $\phi_1^\prime = $
		\end{tabular} 
		\begin{itemize}
			 
			\item $\varphi_2 = d_{1,2} \otimes d_{2,2} \otimes m_{\{m_2, s_1, s_2\}}$
		\end{itemize} 
		 are the w$_{\text{pvm}}$PIL formulas which characterize the cases where the first master interacts with both two slaves and respectively for the second master. The weight $1$ is used for the case that a master does not interact with any slave. Let $\gamma=\{ \{s_1,m_1\}, \{s_1,m_2\}, \{s_2,m_1\}, \{s_2,m_2\}, \{s_1, s_2, m_1 \} , $  $ \{s_1, s_2, m_2 \} \}$ and the pv-monoid $(\mathbb{R}\cup\{-\infty\}, \max, $ $ {\avg}, +,$ $ -\infty, 0)$, then
		\begin{align*}
		\left\Vert\zeta \right\Vert(\gamma) & = \left\Vert\zeta_1 \otimes \zeta_2\right\Vert(\gamma) \\ & = \left\Vert\zeta_1\right\Vert(\gamma)+  \left\Vert\zeta_2\right\Vert(\gamma)
		\end{align*}
		where $\left\Vert \zeta_1 \right\Vert (\gamma)= {\avg} \left( d_{1,1}, d_{2,1}, d_{1,1}+  d_{2,1}, 0 \right) $ and $\left\Vert \zeta_2 \right\Vert (\gamma) = {\avg} \left( d_{1,2}, d_{2,2},\right. $ $  \left. d_{1,2}+  d_{2,2}, 0 \right) $.

		Hnece, we get 
		\[ \left\Vert\zeta \right\Vert (\gamma) = {\avg} \left( d_{1,1}, d_{2,1}, d_{1,1}+  d_{2,1}, 0 \right) +{\avg} \left( d_{1,2}, d_{2,2}, d_{1,2}+  d_{2,2}, 0 \right) \]
		which is the wanted result. }

	\end{enumerate}
\end{example}

\bigskip

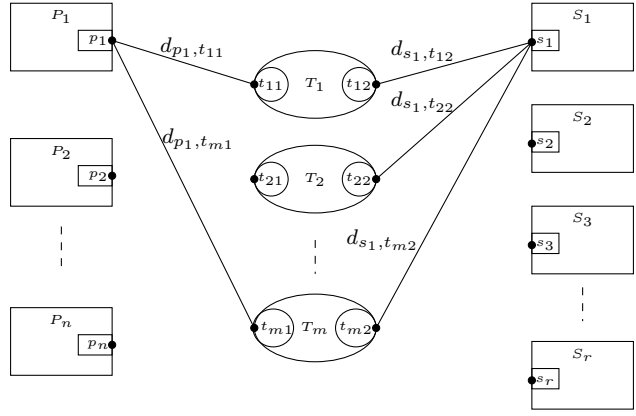
\begin{figure}[t!]
	\begin{center}
		\begin{tikzpicture}[scale=0.9]
		
		\draw  (-0.5,0) rectangle  (1,1) ;
		\node at (0.25,0.8) {\tiny $P_1$};
		\draw  (0.5,0.3) rectangle (1,0.6);
		\node at (0.8,0.45) {\tiny $p_1$};
		\draw[fill] (1,0.45) circle [radius=1.5pt];

		\draw  (-0.5,-2) rectangle  (1,-1) ;
		\node at (0.25,-1.2) {\tiny $P_2$};
		\draw  (0.5,-1.7) rectangle (1,-1.4);
		\node at (0.8,-1.55) {\tiny $p_2$};
		\draw[fill] (1,-1.55) circle [radius=1.5pt];

		\draw[dashed] (0.25,-2.3)--(0.25,-3);
		
		\draw  (-0.5,-3.5) rectangle  (1,-4.5) ;
		\node at (0.25,-3.7) {\tiny $P_n$};
		\draw  (0.5,-4.2) rectangle (1,-3.9);
		\node at (0.8,-4.05) {\tiny $p_n$};
		\draw[fill] (1,-4.05) circle [radius=1.5pt];

		\draw (4,-0.2) ellipse (0.9cm and 0.5cm);
		\node at (4,-0.2) {\tiny $T_1$};
		\draw (3.35,-0.2) ellipse (0.25cm and 0.25cm);
		\node at (3.35,-0.2) {\tiny $t_{11}$};
		\draw (4.65,-0.2) ellipse (0.25cm and 0.25cm);
		\node at (4.65,-0.2) {\tiny $t_{12}$};
		\draw[fill] (3.1,-0.2) circle [radius=1.5pt];
		\draw[fill] (4.9,-0.2) circle [radius=1.5pt];

		\draw (4,-1.6) ellipse (0.9cm and 0.5cm);
		\node at (4,-1.6) {\tiny $T_2$};
		\draw (3.35,-1.6) ellipse (0.25cm and 0.25cm);
		\node at (3.35,-1.6) {\tiny $t_{21}$};
		\draw (4.65,-1.6) ellipse (0.25cm and 0.25cm);
		\node at (4.65,-1.6) {\tiny $t_{22}$};
		\draw[fill] (3.1,-1.6) circle [radius=1.5pt];
		\draw[fill] (4.9,-1.6) circle [radius=1.5pt];

		\draw[dashed] (4,-2.5)--(4,-3);
		
		\draw (4,-3.8) ellipse (0.9cm and 0.5cm);
		\node at (4,-3.8) {\tiny $T_m$};
		\draw (3.38,-3.8) ellipse (0.3cm and 0.28cm);
		\node at (3.35,-3.8) {\tiny $\ \ t_{m1} $};
		\draw (4.6,-3.8) ellipse (0.3cm and 0.28cm);
		\node at (4.65,-3.8) {\tiny $t_{m2} \ $};
		\draw[fill] (3.1,-3.8) circle [radius=1.5pt];
		\draw[fill] (4.9,-3.8) circle [radius=1.5pt];

		\draw  (7.2,0) rectangle  (8.7,1) ;
		\node at (7.95,0.8) {\tiny $S_1$};
		\draw  (7.2,0.3) rectangle (7.6,0.6);
		\node at (7.4,0.425) {\tiny $s_1$};
		\draw[fill] (7.2,0.425) circle [radius=1.5pt];
		
		\draw  (7.2,-1.5) rectangle  (8.7,-0.5) ;
		\node at (7.95,-0.7) {\tiny $S_2$};
		\draw  (7.2,-1.2) rectangle (7.6,-0.9);
		\node at (7.4,-1.075) {\tiny $s_2$};
		\draw[fill] (7.2,-1.075) circle [radius=1.5pt];
		
		\draw  (7.2,-3) rectangle  (8.7,-2) ;
		\node at (7.95,-2.2) {\tiny $S_3$};
		\draw  (7.2,-2.7) rectangle (7.6,-2.4);
		\node at (7.4,-2.575) {\tiny $s_3$};
		\draw[fill] (7.2,-2.575) circle [radius=1.5pt];
		
		\draw[dashed] (7.95,-3.2)--(7.95,-3.7);
		
		\draw  (7.2,-5) rectangle  (8.7,-4) ;
		\node at (7.95,-4.2) {\tiny $S_r$};
		\draw  (7.2,-4.7) rectangle (7.6,-4.4);
		\node at (7.4,-4.575) {\tiny $s_r$};
		\draw[fill] (7.2,-4.575) circle [radius=1.5pt];

		\draw (1,0.45)--(3.1,-0.2);
		\node at (2.2,0.35) { \footnotesize $d_{p_1, t_{11}}$};
		\draw  (1,0.45)--(3.1,-3.8);
		\node at (2.28,-1) { \footnotesize $d_{p_1, t_{m1}}$};
		
		\draw (4.9,-0.2)--(7.2,0.425);
		\node at (5.6,0.3) { \footnotesize $d_{s_1, t_{12}}$};

		\draw (4.9,-1.6)--(7.2,0.425);
		\node at (5.6,-0.45) { \footnotesize $d_{s_1,t_{22}}$};

		\draw (4.9,-3.8)--(7.2,0.425);			
		\node at (5,-2.5) { \footnotesize $d_{s_1, t_{m2}}$};

		\end{tikzpicture}  
		
	\end{center}
	\caption{Weighted Publish/Subscribe architecture.}
	\label{pub-sub}
\end{figure}

\begin{example}
	Publish/Subscribe is a software architecture with three types of components namely, publishers, topics, and subscribers denoted by the letters $P,T,S$, respectively (cf. \cite{Eu:Th, Ha:Ap, Pa:On}). Publishers send messages to subscribers but they do not have any information about subscribers and vice versa. So, in order to send messages, publishers characterize messages according to classes/topics. Subscribers, on the other hand, express their interest in one or more topics and receive all messages which have been published to the topics to which they subscribe (Figure \ref{pub-sub}).

	In our example we assign weights, describing priorities, to interactions among publishers and topics, and to interactions among topics and subscribers. Component $P$ has one port $p$, $T$ has two ports
	$t_{1}$ and $t_{2}$, and $S$ has the port $s$.  We assume two publisher components $P_1, P_2$, four subscriber components $S_1, S_2, S_3, S_4$ and three topic components $T_1, T_2, T_3$. Hence, the set of ports is $P=\left\{ p_1,p_2, s_1, s_2, s_3, s_4, t_{11}, t_{12}, t_{21}, t_{22}, t_{31}, t_{32} \right\}$. For every $i\in \{1,2,3,4\}$, $j\in \{1,2,3\}$ and $k\in \{ 1,2 \}$ we denote by $d_{s_i,t_{j2}} \in D$ the weight of the interaction among $S_i$ and $T_j$, i.e., the priority that the subscriber $S_i$ assigns to the receivement of a message from $T_j$, and by $d_{p_k,t_{j1}} \in D$, the weight of the interaction among $P_k$ and $T_j$, i.e., the priority that the topic $T_j$ assigns to the receivement of a message from $P_k.$ 
	
	In the sequel, we develop {\normalfont w$_{\text{pvm}}$PCL} formulas whose semantics compute the maximum average priority with which a subscriber will receive a message and also the maximum most frequent priority of each topic. For every $i\in \{1,2\}$ and $j\in \{ 1,2,3 \}$, the {\normalfont w$_{\text{pvm}}$PIL} formula $\varphi_{pt}(p_i, t_{j1}) = d_{ p_i,t_{j1} } \otimes m_{ \{ p_i, t_{j1} \} }$ characterizes the interaction between a publisher $P_i$ and a topic $T_j$ with its corresponding weight. Moreover, for every $i\in \{ 1,2,3,4 \}$ and $j\in \{ 1,2,3 \}$, the {\normalfont w$_{\text{pvm}}$PIL} $\varphi_{ st }( s_i,t_{j2} ) = d_{s_i,t_{j2}} \otimes m_{ \{ s_i,t_{j2} \} } $ characterizes the interaction between a subscriber $S_i$ and a topic $T_j$ with its corresponding weight. Then, the {\normalfont w$_{\text{pvm}}$PCL} formula
	\[   \zeta_{s_i} = \bigoplus_{ j\in \{1,2,3\} } \bigoplus_{k\in \{ 1,2 \}}\circledast \left(   \varphi_{pt}(p_k, t_{j1}) \oplus  \varphi_{st}(s_i,t_{j2}) \right)  \]
	
	\noindent describes the behavior of subscriber $S_i$ with publishers $P_1, P_2$ and topics $T_1, T_2, T_3$. Let the configuration set 
	\begin{align*}
	\gamma = & \left\{ \{p_1, t_{11}\} , \{p_1, t_{21}\} , \{p_1, t_{31}\}, \{p_2, t_{11}\} , \{p_2, t_{21}\} , \{p_2, t_{31}\}, \{s_1, t_{12} \}, \{s_1, t_{22} \}, \right. \\ &  \left.  \{s_1, t_{32} \},   \{s_2, t_{12} \}, \{s_2, t_{22} \}, \{s_2, t_{32} \}, \{s_3, t_{12} \}, \{s_3, t_{22} \}, \{s_3, t_{32} \}  \right\}, 
	\end{align*}
	and the pv-monoid $(\mathbb{R}\cup\{-\infty\}, \max, $ $ {\avg}, +, -\infty, 0)$. Then the value $\left\Vert \sim \zeta_{s_i}\right\Vert (\gamma) $ represents the maximum average priority with which the subscriber $S_i$ will receive a message. Also, consider the w$_{\text{pvm}}$PCL formula $\zeta = \bigotimes_{i\in \{ 1,2,3,4 \}}\left( \sim \zeta_{s_i}  \right)$. Then, the following value \[\left\Vert \zeta \right\Vert (\gamma) = \sum_{ i \in \{1,2,3,4\}} \left(\max_{j\in \{ 1,2,3 \}} \left\{  {\avg}(d_{p_1, t_{j1}}, d_{s_i, t_{j2}}),  {\avg}(d_{p_2, t_{j1}}, d_{s_i, t_{j2}})  \right\}  \right) \]

	\hide{where,  
	\begin{align*}
	& \left\Vert \zeta_{s_i} \right\Vert(\gamma) = \left\Vert  \bigoplus_{j\in \{1,2,3\}} \left(  \left(   \left( d_{ p_1, t_{j1} } \otimes m_{\{p_1, t_{j1}\} }  \right) \oplus  \left( d_{p_2, t_{j1}} \otimes m_{ \{p_2, t_{j1}\} }  \right)     \right) \uplus  \left( d_{s_i,t_{j2}} \otimes m_{ \{ t_{j2}, s_i \} } \right) \right)\right\Vert (\gamma)  \\ & =  \max_{j\in \{ 1,2,3 \}} \left\{ \max_{ \gamma_{j1}\cupdot \gamma_{j2} \subseteq \gamma } \left\{ {\avg}\left( \left\Vert \left( d_{ p_1, t_{j1} } \otimes m_{\{p_1, t_{j1}\} }  \right) \oplus  \left( d_{p_2, t_{j1}} \otimes m_{ \{p_2, t_{j1}\} }  \right)\right\Vert (\gamma_{j1}), \right. \right. \right. \\ & \left. \left. \left. \hspace*{9.5cm}  \left\Vert d_{s_i, t_{j2}} \otimes m_{ \{ t_{j2}, s_i \} }\right\Vert (\gamma_{j2}) \right) \right\}  \right\} \\ & =  \max_{j\in \{ 1,2,3 \}} \left\{ \max_{ \gamma_{j1}\cupdot \gamma_{j2} \subseteq \gamma } \left\{ {\avg}\left( \max \left\{  \left\Vert  d_{ p_1, t_{j1} } \otimes m_{\{p_1, t_{j1}\} }  \right\Vert(\gamma_{j1}) , \left\Vert d_{p_2, t_{j1}} \otimes m_{ \{p_2, t_{j1}\} } \right\Vert (\gamma_{j1})\right\}, \right. \right. \right. \\ & \left. \left. \left. \hspace*{9.5cm}   d_{s_i,t_{j2}} + \left\Vert m_{ \{ t_{j2}, s_i \} }\right\Vert (\gamma_{j2}) \right) \right\}  \right\} \\ & = \max_{j\in \{ 1,2,3 \}} \left\{  \max\left\{  {\avg}(d_{p_1, t_{j1}}, d_{s_i,t_{j2} }),  {\avg}(d_{p_2, t_{j1}}, d_{s_i,t_{j2} })  \right\}  \right\}  \\ & = \max_{j\in \{ 1,2,3 \}} \left\{  {\avg}(d_{p_1, t_{j1}}, d_{s_i,t_{j2} }),  {\avg}(d_{p_2, t_{j1}}, d_{s_i,t_{j2} })  \right\}  .
	\end{align*}}
	
	\noindent is the sum of the values $\left\Vert \sim \zeta_{s_i}\right\Vert (\gamma)$ for $ i \in \{ 1,2,3,4 \} $.
	
	Moreover, let us assume that we want to erase one component of the architecture in case, for example, where the system is overloaded and needs to be `lightened'. Consider the case where we choose to erase a topic which is not as popular as the others. A way to do this is to compute for every topic the most frequent priorities that the publishers and subscribers give to that component and then the maximum one of those. Hence, the topic that has the minimum most frequent priority among the other topics is the least popular topic and so it can be erased. The following {\normalfont w$_{\text{pvm}}$PCL} formula
	\[  \zeta_{t_i} = \circledast \left(\bigoplus_{j\in \{ 1,2 \}} \varphi_{pt}(p_j, t_{i1}) \oplus  \bigoplus_{k\in \{ 1,2,3,4 \}} \varphi_{st}(s_k,t_{i2}) \right)  \] 
	for $i\in \{ 1,2,3 \}$ describes the full valuation of the weighted interactions of the topic $T_i$ with the publishers $P_1$, $P_2$ and the subscribers $S_1, S_2, S_3$ and $S_4$. Consider the configuration $\gamma$ given above and the pv-monoid $(\mathbb{R}\cup \{ +\infty, -\infty \},$ $ \min, {\maj}, \max, +\infty, -\infty)$. Then, 
	\begin{align*}
	 \left\Vert \sim \zeta_{t_i} \right\Vert (\gamma) & = \min_{\gamma^\prime \subseteq \gamma}\left\{ \max \left\{ {\maj} \left( d_{p_1, t_{i1}}, d_{p_2, t_{i1}}, d_{s_1, t_{i2}}, d_{s_2, t_{i2}},d_{s_3, t_{i2}}, d_{s_4, t_{i2}}\right), \right. \right. \\ & \left. \left. \hspace*{4cm}  \left\Vert \biguplus_{j\in \{ 1,2 \}} \phi_{pt}(p_j, t_{i1}) \uplus  \biguplus_{k\in \{ 1,2,3,4 \}} \phi_{st}(s_k,t_{i2}) \right\Vert (\gamma^\prime)  \right\}   \right\} \\ & = {\maj} \left( d_{p_1, t_{i1}}, d_{p_2, t_{i1}}, d_{s_1, t_{i2}}, d_{s_2, t_{i2}},d_{s_3, t_{i2}}, d_{s_4, t_{i2}}\right)
	\end{align*}
	
	\noindent for $i\in \{ 1,2,3 \}$ is the maximum priority, among the most frequent ones, that the publishers and subscribers give to topic $T_i$. Lastly, if we consider the w$_{\text{pvm}}$PCL formula
	\[  \zeta^\prime = \sim \left(\zeta_{t_1} \oplus \zeta_{t_2} \oplus \zeta_{t_3} \right),  \]
	 then $\left\Vert \zeta^\prime \right\Vert (\gamma) = \min_{i\in  \{1,2,3\} } \left\{  {\maj} \left( d_{p_1, t_{i1}}, d_{p_2, t_{i1}}, d_{s_1, t_{i2}}, d_{s_2, t_{i2}},d_{s_3, t_{i2}}, d_{s_4, t_{i2}}\right) \right\}$ and so we erase the topic with the minimum value.

\end{example}

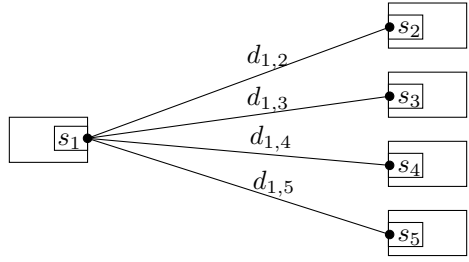
\begin{figure}[t!]
	\begin{center}
		\begin{tikzpicture}[scale = 0.8]
		\draw  (0.7,-9.75) rectangle  (2,-9); 
		\draw  (7,-7.85) rectangle  (8.3,-7.1); 
		\draw  (7,-9) rectangle  (8.3,-8.25);
		\draw  (7,-10.15) rectangle  (8.3,-9.4);
		\draw  (7,-11.3) rectangle  (8.3,-10.55);

		\draw  (1.45,-9.55) rectangle  (2,-9.15); 
		\node at (1.7,-9.4) {$ s_1$};
		\draw[fill] (2,-9.35) circle [radius=2pt];
		
		\draw  (7,-7.7) rectangle  (7.57,-7.3); 
		\node at (7.35,-7.54) {$ s_2$};
		\draw[fill] (7.02,-7.5) circle [radius=2pt];
		
		\draw  (7,-8.85) rectangle  (7.57,-8.45); 
		\node at (7.35,-8.7) {$ s_3$};
		\draw[fill] (7.02,-8.65) circle [radius=2pt];
		
		\draw  (7,-10) rectangle  (7.57,-9.6); 
		\node at (7.35,-9.85) {$ s_4$};
		\draw[fill] (7.02,-9.8) circle [radius=2pt];
		
		\draw  (7,-11.15) rectangle  (7.57,-10.75); 
		\node at (7.35,-11) {$ s_5$};
		\draw[fill] (7.02,-10.95) circle [radius=2pt];
		
		\draw (2,-9.35)--(7.02,-7.5);
		\draw (2,-9.35)--(7.02,-8.65);
		\draw (2,-9.35)--(7.02,-9.8);
		\draw (2,-9.35)--(7.02,-10.95);
		
		\node at (5,-8.0) {$ d_{1,2}$};
		\node at (5,-8.7) {$ d_{1,3}$};
		\node at (5.05,-9.35) {$ d_{1,4}$};
		\node at (5.1,-10.1) {$ d_{1,5}$};

		\end{tikzpicture}
	\end{center}
	\caption{Weighted Star architecture.}
	\label{star_examp}
\end{figure}

\begin{example}
	Consider the Star architecture \cite{Ma:Co}. Star architecture is a software architecture relating components of the same type.
 Given a set of components one of them is considered as the central one and is connected to every other component through a binary interaction. No other interactions are permitted. 
	
	In our example we consider five components (Figure \ref{star_examp}). We assume that each component has a single port, hence the set of ports is $P=\{ s_1, s_2, s_3, s_4, s_5 \}$. We denote by $d_{i,j} \in D$ the weight of the binary interaction between $s_i$ and $s_j$ for every $i,j\in I=\{1, \dots, 5\}$ with $i\not = j$, when $s_i$ is considered as the central component. The {\normalfont w$_{\text{pvm}}$PIL} formula characterizing this interaction, for every $i,j\in I$ with $i\not = j$, is given by $\varphi_{ij} = d_{i,j} \otimes m_{\{s_i,s_j\}  }$. Therefore, the {\normalfont w$_{\text{pvm}}$PCL} formula 
	\[\zeta_i=\circledast \left(\bigoplus_{j\in I\backslash \{i\}}\varphi_{ij}\right)\]
	describes the full valuation of the binary interactions of the central component $s_i$ with the rest of all other components. Next, consider the {\normalfont w$_{\text{pvm}}$PCL} formula 
	\[\zeta = \sim \left(\bigoplus_{ i \in I } \zeta_i\right)\]
	which describes the five alternative versions of the Star architecture. Let $\gamma = \{ \{s_i,s_j\}/\ i,j \in I $ and  $i\neq j\}$ and $(\mathbb{R}  $  $ \cup $ $\{+\infty\}, \min, {\avg}, +, +\infty, 0)$, then we get 
	\begin{align*}
	\left\Vert \zeta \right\Vert (\gamma) & = \min_{\gamma^\prime \subseteq \gamma}  \left\{  \min_{i\in I} \{ \left\Vert \zeta_i\right\Vert(\gamma^\prime)  \} \right\} \\ & =\min\{ {\avg} (d_{1,2},d_{1,3},d_{1,4},d_{1,5}),...,{\avg} (d_{5,1},d_{5,2},d_{5,3},d_{5,4})\} 
	\end{align*}

  \noindent	which is the minimum value among the average costs of each component when it is considered as the central one.  
	
\end{example}

\section{Conclusion}

We introduced a weighted PCL over a set of ports and a pv-monoid, and investigated several properties of the class of polynomials obtained as semantics of this logic with the condition that our pv-monoid satisfies specific properties. We proved that for every w${_{\text{pvm}}}$PCL formula $\zeta $ over a set of ports $P$ and a pv-monoid $D$ which is associative, $\oplus$-distributive, idempotent and $\otimes $ is commutative, we can effectively construct an equivalent one $\zeta^\prime$ in full normal form. This result implied the decidability of the equivalence problem for w$_{\text{pvm}}$PCL formulas. Lastly, we provided examples describing well-known software architectures with quantitative characteristics such as the average cost of an architecture or the maximum most frequent priority of a component in the architecture. These are important properties which can not be represented by the framework of semirings in \cite{Pa:On}. Future work includes the investigation of the complexity for the construction of full normal form for formulas in our logic and the time needed for that construction using the Maude rewriting system \cite{maude}. Furthermore, it would be  interesting to study the first-order level of {\normalfont w$_{\text{pvm}}$PCL } for the description of architecture styles with quantitative features.


\begin{thebibliography}{99} 
	
	\bibitem{Quant_Lang} K. Chatterjee, L. Doyen, T.A. Henzinger. Quantitative languages, (2010). \emph{ACM Transactions on Computational Logic (TOCL)}, 11(4), pp.1--38. doi:10.1145/1805950.1805953
	
	\bibitem{expres_henz} K. Chatterjee, L. Doyen, T.A. Henzinger, (2009). Expressiveness and closure properties for quantitative languages. \emph{24th Annual IEEE Symposium on Logic In Computer Science}, pp.199--208. doi:/10.1109/LICS.2009.16
	
	
	\bibitem{comp_avera} K. Chatterjee, T.A. Henzinger, J. Otop, (2018). Computing Average Response Time. in: \emph{Principles of Modeling,} Springer, pp.143--161. doi:10.1007/978-3-319-95246-8\_9
	
	
	
	\bibitem {Dr:Han}M. Droste, W. Kuich, H. Vogler (Eds), (2009). \emph{Handbook of
		Weighted Automata}. EATCS Monographs in Theoretical Computer Science,
	Springer-Verlag, Berlin Heidelberg. doi:10.1007/978-3-642-01492-5
	
	
	\bibitem{Dr:Reg} M. Droste, I. Meinecke, (2011). Weighted automata and regular expressions over valuation monoids, \emph{International Journal of Foundations of Computer Science}, 22(08), pp.1829--1844. doi:10.1142/S0129054111009069
	
	
	
	\bibitem {Dr:Av}M. Droste, I. Meinecke, (2012). Weighted automata and weighted MSO
	logics for average and long-time behaviors, \emph{Inform. and Comput.
	}220--221, pp.44--59. doi:10.1016/j.ic.2012.10.001
	
	\bibitem{Eu:Th} P. Eugster, P. Felber, R. Guerraoui, A.-M. Kermarrec, (2003). The many faces of Publish/Subscribe, \emph{ACM Computing Surveys}, 35(2), pp.114--131. doi:10.1145/857076.857078
	
	\bibitem{Ha:Ap} S. Hasan, S. O'Riain, E. Curry,(2012). Approximate semantic matching of heterogeneous events, in: \emph{Proceedings of DEBS 2012, ACM}, pp.252--263. doi:10.1145/2335484.2335512
	
	
	
	\bibitem{Ma:Ho} I. Malavolta, G. Lewis, B. Schmerl, P. Lago, D. Garlan, How do you architect your robots? State of the practice and guidelines for ROS-based systems, in: \emph{Proceedings of ICSE-CEIP '20}, ACM 2020. doi:10.1145/3377813.3381358
	
	\bibitem {Ma:Co}A. Mavridou, E. Baranov, S. Bliudze, J. Sifakis, (2016). Configuration
	logics: Modelling architecture styles, \emph{J. Log. Algebr. Methods Program.}, 86, pp.2--29. doi:10.1016/j.jlamp.2016.05.002 
	
	\bibitem{Ol:AP} A. Olivieri, G. Rizzo, F. Morand, (2015). A Publish-Subscribe approach to IoT integration: the smart office use case, in: \emph{Proceedings of the 29th International Conference on Advanced Information Networking and Applications Workshops, IEEE} 2015, pp.644--651. doi:10.1109/WAINA.2015.28
	
	\bibitem{proba_nested} J. Otop, T.A. Henzinger,K. Chatterjee, (2019). Quantitative Automata under Probabilistic Semantics. \emph{Logical Methods in Computer Science}, 15(3).	doi:10.23638/LMCS
	
	\bibitem{Pa:On} P. Paraponiari, G. Rahonis, (2017). On weighted configuration logics, in: \emph{Proceedings of FACS 2017, LNCS} 10487, pp.98--116. doi:10.1007/978-3-319-68034-7 6
	
	\bibitem{Pa:Wei} P. Paraponiari, G. Rahonis, Weighted propositional configuration logics: A specification language for architectures with quantitative features, \emph{Inform. and Comput.} (accepted). Available at \url{https://arxiv.org/abs/1704.04969}.
	
	\bibitem{Pa:Pu} S. Patel, S. Jardosh, A.Makwana, A. Thakkar, (2017). Publish/Subscribe mechanism for IoT: A Survey of event matching algorithms and open research challenges, in: \emph{Proceedings of International Conference on Communication and Networks, Advances in Intelligent Systems and Computing}, 508, pp.287--294. doi:10.1007/978-981-10-2750-5\_30 
	
	
	
	\bibitem{Ya:Pr} K. Yang, K. Zhang, X. Jia, M. A. Hasan, X. Shen, (2017). Privacy-preserving attribute-keyword based data publish-subscribe service on cloud platforms, \emph{Inform. Sci.}, 387, pp.116--131. doi:10.1016/j.ins.2016.09.020
	
	\bibitem{maude} \url{http://maude.cs.illinois.edu/w/index.php/The_Maude_System}
	
	
	
	
	
	
	
	
	
	
\end{thebibliography}
\end{document}